\documentclass[journal,10pt]{IEEEtran}

\usepackage{cite}
\usepackage{color}
\usepackage{stfloats,tabularx}

\newcolumntype{L}[1]{>{\raggedright\arraybackslash}m{#1}}

\makeatletter
\renewcommand\normalsize{%
	\@setfontsize\normalsize\@xpt\@xiipt
	\abovedisplayskip 0.8\p@ \@plus1.5\p@ \@minus3.5\p@
	\abovedisplayshortskip 0.8\p@ \@plus1.5\p@
	\belowdisplayshortskip 0.8\p@ \@plus1.2\p@ \@minus2\p@
	\belowdisplayskip \abovedisplayskip
	\let\@listi\@listI}

\ifCLASSINFOpdf
   \usepackage[pdftex]{graphicx}

\else

\fi

\usepackage{cite}
\usepackage{amsmath,amsthm,amssymb,graphicx,algorithm,algorithmic}
\usepackage{makecell}
\usepackage{subfigure}
\usepackage{float}
\usepackage{hyperref}
\usepackage{booktabs}

\usepackage{enumerate}

\newtheorem{theorem}{\bf Theorem}

\newtheorem{lemma}{\bf Lemma}

\newtheorem{remark}{\bf Remark}

\usepackage{color}
\usepackage{array}
\usepackage{multicol}

\begin{document}

\title{Robust Transmission Design for RIS-Assisted High-Speed Train Communication Coverage Enhancement With Imperfect Cascaded Channels}

\author{

  Changzhu~Liu,~\IEEEmembership{Graduate~Student~Member,~IEEE},
  Ruisi~He,~\IEEEmembership{Senior~Member,~IEEE},
  Haoxiang~Zhang, \\
  Jiahui~Han,
  Ruifeng~Chen,
  Bo~Ai,~\IEEEmembership{Fellow,~IEEE},
  and Zhangdui~Zhong,~\IEEEmembership{Fellow,~IEEE}

\thanks{

Changzhu Liu, Ruisi He, Bo Ai, and Zhangdui Zhong are with the School of Electronics and Information Engineering, Beijing Jiaotong University, Beijing 100044, China (e-mails: changzhu{\_}liu@bjtu.edu.cn; ruisi.he@bjtu.edu.cn; boai@bjtu.edu.cn; zhdzhong@bjtu.edu.cn).

Haoxiang Zhang and Jiahui Han are with the China Academy of Industrial Internet, Ministry of Industry and Information Technology, Beijing, China (zhx61778294@126.com; hjh1760708@126.com).

Ruifeng Chen is with the Institute of Computing Technology, China Academy of Railway Sciences Corporation Limited, Beijing 100081, China (e-mail: ruifeng{\_}chen@126.com).

              }
              }

{}

\maketitle

\begin{abstract} %
  Reconfigurable intelligent surface (RIS) has recently been gained attention as an effective technique improving the coverage and performance of communication systems by creating additional communication links. Deployment of RIS is crucial for overcoming signal coverage limitations, especially in high-speed train (HST) scenarios. Considerable research has been performed assuming perfect channel state information (CSI). However, due to the rapidly time-varying fading channels and feedback delays, achieving perfect CSI at the base station (BS) is not feasible in the HST scenarios. To tackle this problem, this paper investigates a robust design strategy for RIS-aided HST communication coverage enhancement, particularly focusing on cascaded BS-RIS-user channels at BS (CBRUB). The study explores the optimization problem under two types distinct of models: centered on minimizing transmit power subject to worst-case rate constraints within the bounded CSI error (BCSIE) model, and the other focusing on outage probability (OP) constraints under the statistical CSI error (SCSIE)  model. We use the S-procedure to  approximate  the non-convex (NC) constraints, converting the worst-case rate constraints into linear matrix inequalities. Additionally, the Bernstein-type inequality is applied to transform the OP constraints into second-order cone constraints and linear inequalities. The simulation analysis results show that CBRUB errors have a significant effect on system performance compared to direct CSI errors.
\end{abstract}
\begin{IEEEkeywords}
High-speed train, reconfigurable intelligent surface, imperfect channel state information, robust design, cascaded BS-RIS-user channels.
\end{IEEEkeywords}

%
\IEEEpeerreviewmaketitle

\section{Introduction}

\IEEEPARstart{O}{ver} the past decade, high-speed train (HST) has become a dominant mode of transportation, benefiting from superior safety, high speeds, and large passenger capacities, enabling the simultaneous transport of thousands of passengers \cite{r1new1,r1new2,r1new3,r1new4}. With the increasing demand for reliable connectivity, optimizing wireless network performance inside train carriages has become crucial to ensuring consistent communication services for all passengers, regardless of their location within the train \cite{r2}. The high mobility of HST presents unique challenges to wireless communication systems, resulting in a complex environment characterized by time-varying channel estimation, coverage expansion, channel modeling, and beamforming design \cite{r3, r3new1, r3new2, r3new3, r3new4,r3new5,r3new6}. Consequently, signal prediction and wireless network planning for HST communications face substantial difficulties. In this context, reconfigurable intelligent surface (RIS) has attracted significant attention as a highly promising and cost-effective solution for enhancing communication performance across various applications \cite{r4,r4new1,r4new2,r4new3,r4new4}. RIS technology enables intelligent optimization the radio environment and is distinguished by its low cost, minimal complexity, and ease of deployment, making it particularly suitable for high-mobility communication systems \cite{r5}.

Recent studies on RIS-aided HST communications have highlighted the potential of RIS to significantly improve wireless coverage and system performance through the joint optimization of base station (BS) beamforming and RIS reflection beamforming \cite{r6,r7,r8,r9}. However, the algorithms proposed in these studies typically assume that perfect channel state information (CSI) at the BS (CSIB). Such an assumption is highly idealistic in practical HST scenarios, where the CSI is inevitably imperfect and outdated, leading to non-negligible performance degradation, particularly for RIS-assisted links that rely on coherent phase alignment.

A key reason why perfect CSI fails in RIS-assisted HST systems lies in the combined effects of Doppler-induced channel variation, phase noise, and feedback delays \cite{r3,r10,r11}. First, due to the high train speed, the channel experiences pronounced Doppler shifts, which substantially shorten the channel coherence time. As a consequence, even if CSI is accurately estimated at time instant $t$, the effective channel during data transmission at $t+\tau$ can differ significantly, where $\tau$ includes pilot transmission, channel estimation, baseband processing, scheduling, and feedback latency. This mismatch, known as channel aging, fundamentally invalidates the assumption of instantaneous CSI and results in beamformers being optimized for an outdated channel realization \cite{r12}. Second, practical transceivers suffer from phase noise due to oscillator imperfections and hardware impairments \cite{r11}. In high-mobility and high-frequency operations, phase noise introduces random time-varying phase rotations and additional distortions, which not only deteriorate channel estimation quality but also disrupt coherent combining. This effect is particularly critical for RIS-assisted transmission because RIS beamforming hinges on accurate phase alignment across numerous reflecting elements; even small phase perturbations can accumulate across elements, collapse the intended passive beamforming gain, and cause unintended interference leakage \cite{r13}. Third, CSI acquisition requires a closed-loop procedure (estimation, quantization, and feedback), resulting in unavoidable feedback delays and signaling overhead \cite{r10,r20,r21}. In high-mobility scenarios, these delays further exacerbate channel aging, while the overhead limits how frequently CSI and RIS configurations can be updated \cite{r14,r15new}. Moreover, the nearly passive nature of RIS means it cannot actively transmit/receive pilots, which makes RIS-related CSI acquisition intrinsically difficult and often training-intensive, thereby amplifying the effective delay and mismatch. Therefore, assuming perfect CSI (implicitly assuming negligible Doppler variation, negligible phase noise, and near-zero feedback delays/overhead) is unrealistic for RIS-assisted HST communications, and it is essential to develop transmission designs that explicitly account for imperfect CSI  in HST communication systems.

RIS-assisted HST systems involve two types of channels: one directly connecting the BS to the user, and another related to RIS. The direct channel is relatively straightforward to estimate using conventional techniques such as the least squares algorithm. Consequently, the majority of prior research has concentrated on estimating the BS-RIS and the RIS-user channels.

Typically, there are two primary strategies for estimating RIS-associated channels. The first one involves directly estimation of the RIS-related channels, namely the BS-RIS and RIS-user channels \cite{r15}. As detailed in \cite{r15}, this can be achieved by integrating active channel elements into the RIS to facilitate channel estimation. However, this technique comes with several drawbacks. The integration of active elements leads to higher hardware costs and extra power consumption, causing an unsustainable burden on the RIS. Moreover, the estimated channel information must be sent back to the BS, resulting in increased signaling overhead and feedback latency, which is particularly undesirable in fast time-varying HST scenarios.

It has been found that the cascaded channels involving the BS, RIS, and user, formed by the product between the BS-RIS and RIS-user channels, are adequate for the development of joint active and reflection beamforming \cite{r16,r17,r18,r19}. This observation has led most of the current studies to focus on the second method, which involves estimating the cascaded channels \cite{r20,r21,r22}. In particular, channel estimation for cascaded channels has been investigated in both single-user multiple-input multiple-output (SU-MIMO) systems \cite{r20} and multi-user multiple-input single-output (MU-MISO) systems \cite{r21}. However, the estimation approaches in \cite{r20,r21} suffer from excessively high pilot overhead, which grows with the number of RIS reflection elements. To address this issue, the authors in \cite{r22} proposed a compressed sensing-based method that exploits the sparsity and correlation inherent in multiuser cascaded channels. Furthermore, the authors in \cite{r23} introduced an alternative method to reduce pilot overhead by employing auxiliary variables to decompose the received signal model into a subcarrier-wise bilinear sub-model and two linear sub-models with respect to the cascaded BS-RIS-user channels.

The aforementioned studies \cite{r6,r7,r8,r9} did not account for the impact of CSI imperfection on the system performance. In practical HST scenarios, channel estimation errors are inevitable, and the CSI mismatch is further aggravated by Doppler-induced channel aging, feedback delays, and phase noise. Consequently, treating the estimated CSI as perfect can lead to severe performance degradation, particularly for RIS-assisted links where coherent reflection gains are highly sensitive to phase inaccuracies. Therefore, developing robust transmission strategies for RIS-assisted HST communication systems that explicitly account for imperfect CSI becomes essential to bridge the gap between idealized theoretical gains and reliable performance in realistic high-mobility deployments.

To the best of our knowledge, there are some studies have investigated robust transmission of HST \cite{r24,r25,r26,r27}. Specifically, \cite{r24} illustrated how directional antennas and unmanned aerial vehicles can enhance the robustness of HST communication systems. In \cite{r25}, the authors  proposed a method  to improve robustness of HST communication system by leveraging the existing equipment to assist relaying.  Meanwhile, in \cite{r26}, the authors employed a priority weighted approach and developed a robustness-enhancing method based on graph theory. RIS was considered in \cite{r27} to improve the robustness of HST communication systems. However, these works still assume perfect CSI and do not explicitly address the fundamental challenges brought by Doppler-induced channel aging, phase noise, and CSI feedback delays in practical RIS-assisted HST scenarios. This gap motivates robust RIS-aided HST transmission design under imperfect CSI.

Against the above challenge, this paper investigates the robust transmission strategies of HST communication coverage enhancement, taking into account imperfect cascaded BS-RIS-user channels at the BS (CBRUB). We aim  to develop a robust design for both active and reflection beamforming that minimizes the transmit power accounting for both bounded CSI error (BCSIE) and statistical CSI error (SCSIE) models. The contributions of this work are outlined as follows:
\begin{itemize}
\item As far as we know, this is the first study that examines the robust transmission design for HST communication coverage enhancement by considering imperfect cascaded BS-RIS-user channels, which is more realistic than the previous works that assumed perfect CSI. Moreover, we focus on the robust design involves two different channel error models: the BCSIE model and the SCSIE model.
\item Firstly, we study worst-case robust beamforming design for the BCSIE model to minimize the transmit power, subject to the unit-modulus constraint for reflection beamforming and  worst-case quality-of-service (QoS) requirements under imperfect CBRUB. This approach ensures that each user's achievable rate satisfies its minimum requirement, regardless of the possible channel error realizations. To solve the non-convex (NC) problem, the S-procedure is used to approximate the semi-infinite inequality constraints. Then, within the framework of alternate optimization (AO), the active beamformer is refined using second-order cone programming (SOCP), while the passive beamforming is optimized using the penalty convex-concave procedure (CCP).
\item For the SCSIE model, we aim to minimize the transmit power under to unit-modulus and outage probability (OP) constraints. Specifically, the OP constraints guarantee that the probability of a user's achievable rate falling below its minimum requirement remains less than a predefined threshold. To solve this, we utilize the Bernstein-Type Inequality (BTI) to safely approximation the OP. The optimization of both the active beamforming and the passive beamforming are then solved iteratively using semidefinite relaxation (SDR).
\item Simulation results demonstrate that robust beamforming based on SCSIE model achieves superior system performance compared to the BCSIE model, particularly in terms of complexity, convergence speed, and transmit power. Furthermore, we find that the CBRUB error level plays a critical role in RIS-aided HST communication coverage enhancement. Specifically, when the CBRUB error is small, increasing the number of RIS reflecting elements reduces transmit power due to improved beamforming gain. Whereas, when the CBRUB error is large, increasing the number of RIS elements raises transmit power because of increased channel estimation error. Therefore, the effectiveness of deploying RIS depends on the CBRUB error level.
\end{itemize}

The organization of this paper is as follows. The system model and channel uncertainty are discussed in Section \ref{sy}. In Section \ref{wcrbd}, the formulation and solution of problems with the worst-case robust design  are presented. Section \ref{ocrbd} addresses the formulation and resolution of problems with outage-constrained robust design. Section \ref{complexity} examines the computational complexity of the proposed algorithms. Lastly, Sections \ref{nr} and \ref{con} present the numerical simulation results and conclusions, respectively.

{\textbf{\textit{Notations}}}: In this paper, scalar quantities are indicated by italicized letters, whereas bold lowercase and uppercase characters denote vectors and matrices, respectively. $\mathbb{C}$ denotes complex field. The Frobenius norm, conjugate, transpose, and conjugate transpose of matrix $\mathbf{X}$ are denoted by $||\mathbf{X}||_{F}$, $\mathbf{X}^{*}$, $\mathbf{X}^\mathrm{T}$, and  $\mathbf{X}^{\mathrm{H}}$,  respectively. The notation $||\mathbf{x}||_{2}$ represents the 2-norm of vector $\mathbf{x}$. $\left[\mathbf{X}\right]_{m,n}$ refers to the element located at the $m$th row and $n$th column of matrix $\mathbf{X}$, and $\mathbf{X} \succeq \mathbf{0}$ represents that $\mathbf{X}$ denotes a semi-definite matrix. The symbol $\left\lvert \cdot \right\rvert$ is the absolute value of its argument. $\mathrm{Tr}\{\cdot\}$, $\mathfrak{R}  \left\{ \cdot \right\} $, and $\lambda(\cdot)$ denote the trace, real part, modulus, and eigenvalue, respectively. The operation $\mathrm{diag} \left\{ \cdot\right\}$  denotes the diagonalization of a matrix. The symbol $\otimes$ represents the Kronecker product.  $\mathbf{I}$ denotes the identity matrix. Finally, $\mathcal{C} \mathcal{N} (\mathbf{0},\mathbf{I})$  refers to a random vector that follows a circularly symmetric complex Gaussian (CSCG) distribution, with zero mean and covariance matrix $\mathbf{I}$.

\section{System Model And Channel Uncertainty} \label{sy}
\subsection{System Model}
As depicted in F{}ig.~\ref{fig:1}, we analyze a RIS-assisted MU-MISO HST communication system, where a BS equipped with $M$ antennas communications with $K$ single-antenna users assisted by a RIS equipped with $N$  elements. The BS is a rectangular planar array (URA) with $M = M_{\mathrm{h}} \times M_{\mathrm{v}}$ elements, where $M_{\mathrm{h}}$ and $ M_{\mathrm{v}}$ are the numbers of elements along the horizontal and vertical axes, respectively. Similarly, the RIS is a URA  with $N =  N_{{\mathrm{h}}} \times N_{{\mathrm{v}}}$  elements. Let $\mathbf{s}=[ s_1,\cdots ,s_K ] \in \mathbb{C} ^{K\times 1}$ denote transmits $K$ Gaussian data symbols to each user with $\mathbb{E} \left\{ \mathbf{ss}^{\mathrm{H}} \right\} =\mathbf{I}$. For convenience, we define the sets $\mathcal{K} = \{ 1,\cdots ,K\} $ and $\mathcal{N} = \{ 1,\cdots ,N \}$ to represent the indices of the users and RIS elements, respectively. Thus, the signal received by user $k$ from the BS can be expressed as
\begin{equation}
  y_k=( \mathbf{h}_{\mathrm{D},k}^{\mathrm{H}}+\mathbf{h}_{\mathrm{R},k}^{\mathrm{H}}\mathbf{\Theta G} ) \mathbf{Ws}+n_k,\forall k\in \mathcal{K}.
\end{equation}
Here, the beamforming matrix is denoted as $\mathbf{W}=\left[ \mathbf{w}_1,\dots ,\mathbf{w}_K \right] \in \mathbb{C} ^{M\times K}$, where $\mathbf{w}_k\in \mathbb{C} ^{M\times 1}$ represents the beamforming vector for user $k$. The BS transmit power is given as by $\mathbb{E} \{ \mathrm{Tr}[ \mathbf{Wss}^{\mathrm{H}}\mathbf{W}^{\mathrm{H}} ] \} =\left\| \mathbf{W} \right\| _{F}^{2}$. $n_k$ denotes the additive white Gaussian noise (AWGN) at the $k$th user, with zero mean and variance $\sigma_k^2$, i.e., $n_k\sim \mathcal{C} \mathcal{N} (0,\sigma _{k}^{2})$. The RIS reflection beamforming  is represented by a diagonal matrix $\mathbf{\Theta }_i=\mathrm{diag}\{ \theta _1,\cdots ,\theta _N \} \in \mathbb{C} ^{N\times N}$, of which has unit modulus phase shifts, i.e., $\left\lvert \theta_n\right\rvert ^2=1$. The channel matrix between the BS and the RIS is denoted as $\mathbf{G}$, the channel vectors between the BS and user $k$, and between the RIS and user $k$, are denoted by $\mathbf{h}_{\mathrm{D},k}$ and $\mathbf{h}_{\mathrm{R},k}$, respectively. The main symbols of the paper are listed in Table \ref{table-notations}.
\begin{figure}[!t]
  \centering
  {\includegraphics[scale=0.25]{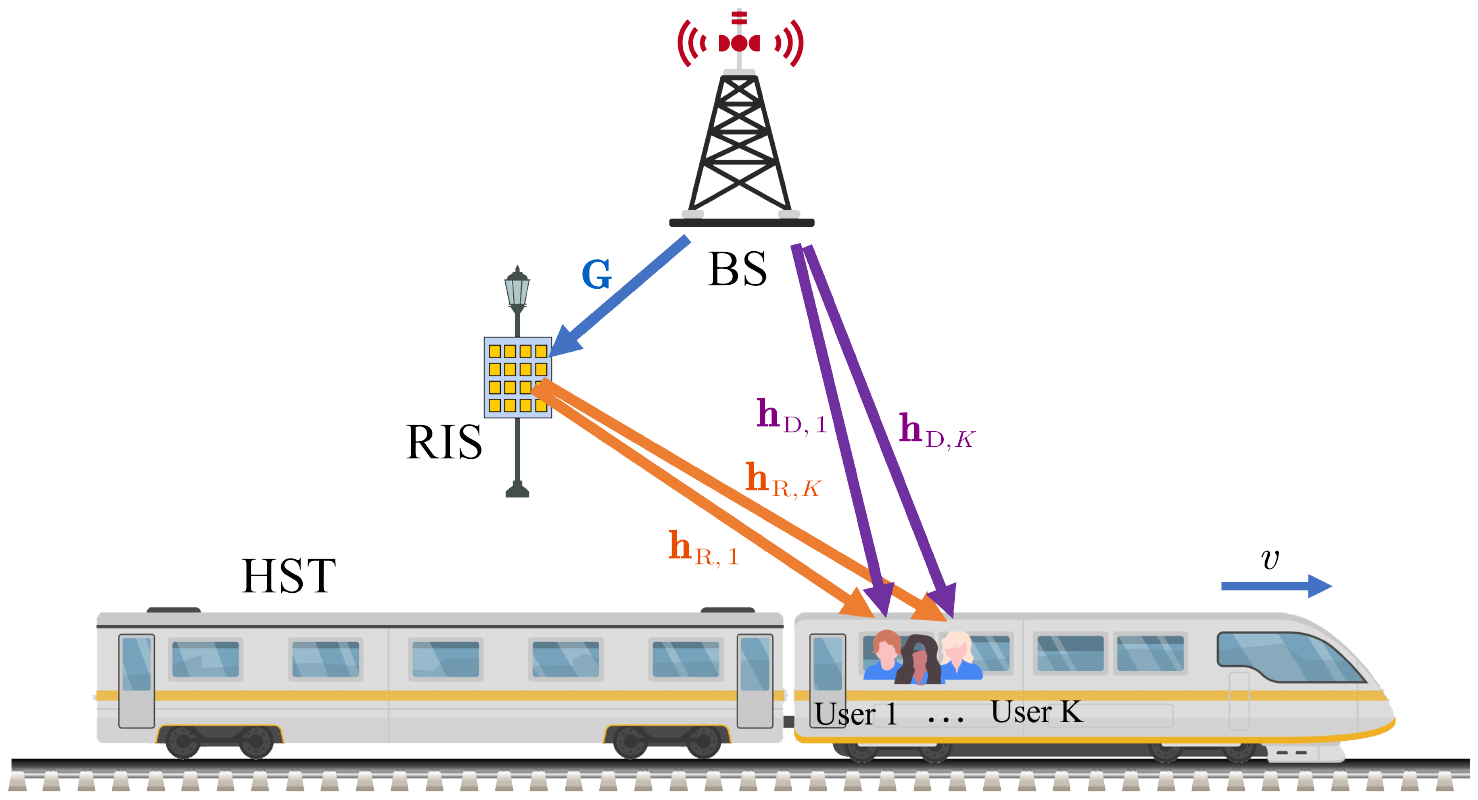}}
  \caption{ \label{fig:1} RIS-assisted HST communication model.}
\end{figure}

\begin{table}[!t]
  \centering
  \caption{NOTATIONS}
  \label{table-notations}
  \begin{tabularx}{\linewidth}{|L{3.2cm}|X|}
  \hline
  \textbf{Notation} & \textbf{Description} \\
  \hline
  $K$ & The number of  users \\
  \hline
  $M/N$ & The number of antennas each BS and RIS \\
  \hline
  $s_k$ & Transmission symbol of the $k$-th user \\
  \hline
  $\mathbf{h}_{\mathrm{D},k}$ & Direct channel from the BS to the $k$-th user \\
  \hline
  $\mathbf{h}_{\mathrm{R},k}$ & The channel from the RIS to the $k$-th user \\
  \hline
  $\mathbf{G}$ & The channel from the BS to the RIS \\
  \hline
  $\mathbf{W}/\mathbf{w}_k $ &  Beamforming matrix, beamforming vector of the $k$-th user \\
  \hline
  $n_k$ & Complex AWGN of the $k$-th user \\
  \hline
  $v$ & Speed of HST \\
  \hline
  $f_c$ & Carrier frequency \\
  \hline
  $\lambda_c$ & Carrier wavelength \\
  \hline
  $\sigma_{k}^2$ & Variances of $n_k$ \\
  \hline
  $\mathbf{\Theta}/\boldsymbol{\theta }$ & RIS phase shift matrix, vector \\ 
  \hline
  $ \theta_n$ & The phase shift of the $n$th RIS reflecting element \\ 
  \hline
  $\mathbf{H}_k$ & The cascaded channel from the BS to user $k$ via the RIS \\ 
  \hline
  $R_k$ & Achievable data rate of the $k$-th user \\ 
  \hline
  $\kappa$ & Rician K-factor \\ 
  \hline
  $PL_{\mathrm{D},k}/PL_{\mathbf{G}}/PL_{\mathrm{R},k}$ & Path losses \\ 
  \hline
  $\overline{\mathbf{h}}_{D,k}/\overline{\mathbf{G}}/\overline{\mathbf{h}}_{\mathrm{R},k}$ & LoS components \\ 
  \hline
  $\widetilde{\mathbf{h}}_{D,k}/\widetilde{\mathbf{G}}/\widetilde{\mathbf{h}}_{\mathrm{R},k}$ & NLoS components \\ 
  \hline
  $\beta _{\mathrm{D},k}/\beta _{\mathbf{G}}/\beta _{\mathrm{R},k}$ & Path loss exponents \\ 
  \hline
  $\phi ^{\{ {\mathrm{D},k};\mathrm{R};{\mathrm{R},k}  \}  } /\delta ^{\{ {\mathrm{D},k};\mathrm{R};{\mathrm{R},k}  \} }$ & Azimuth and elevation AoD \\ 
  \hline
  $\phi ^{\mathrm{B}}/\delta ^{\mathrm{B}}$ & Azimuth and elevation AoA \\ 
  \hline
  $\mathbf{a}_{\mathrm{B}}/\mathbf{a}_{\mathrm{R}}$ & Steering vector at the BS and the RIS\\ 
  \hline
  $f_{d_1}/f_{d_2}$ & Doppler frequency shift\\ 
  \hline
  $\widehat{\mathbf{H}}_{\mathrm{D},k}/\bigtriangleup \mathbf{H}_{\mathrm{D},k}$ & The estimated cascaded CSI and error vectors\\ 
  \hline
  $\widehat{\mathbf{h}}_{\mathrm{D},k}/\bigtriangleup \mathbf{h}_{\mathrm{D},k}$ & The estimated cascaded DCSIB error and error vectors\\ 
  \hline
  $\xi _{\mathrm{H},k}/\xi _{\mathrm{D},k}$ & The radii of the uncertainty areas\\ 
  \hline
  $\mathbf{\Sigma }_{\mathrm{H},k} /\mathbf{\Sigma }_{\mathrm{D},k}$  & Covariance matrices of error \\ 
  \hline
  $R_{\mathrm{th}}$  & Target rate \\ 
  \hline
  $\epsilon _{\mathrm{stop}}$  & Maximum tolerance value \\ 
  \hline
  $ \varepsilon _k$  & Maximum outage probability of $k$th user \\ 
  \hline
  $\omega_{\mathrm{H}}/\omega_{\mathrm{D}}$  & CSI uncertainty level\\ 
  \hline
  \end{tabularx}
\end{table}

Define the cascaded channel from the BS to the $k$th via the RIS as $\mathbf{H}_k=\mathrm{diag}( \mathbf{h}_{\mathrm{R},k}^{\mathrm{H}} ) \mathbf{G}$. Let  $\boldsymbol{\theta }=[ \theta _1,\dots ,\theta _N ] ^{\mathrm{T}}\in \mathbb{C} ^{N\times 1}$ be the vector containing diagonal elements of the matrix $\mathbf{\Theta}$. Moreover, $\iota _k=\| ( \mathbf{h}_{\mathrm{D},k}^{\mathrm{H}}+\boldsymbol{\theta }^{\mathrm{H}}\mathbf{H}_k ) \mathbf{W}_{-k} \| _{2}^{2}+\sigma _{k}^{2}$ is the interference-plus-noises (INs) power for user $k$, where $\mathbf{W}_{-k}=\left[ \mathbf{w}_1,\dots ,\mathbf{w}_{k-1},\mathbf{w}_{k+1},\dots ,\mathbf{w}_K \right] $. Therefore, the achievable data rate of the $k$th user is expressed by
\begin{equation} \label{dr}
  R_k( \mathbf{W},\boldsymbol{\theta } ) =\log _2( 1+\frac{1}{\iota _k}| ( \mathbf{h}_{\mathrm{D},k}^{\mathrm{H}}+\boldsymbol{\theta }^{\mathrm{H}}\mathbf{H}_k ) \mathbf{w}_k |^2 ).
\end{equation}

\subsection{Channel Model}
We utilize the Rician fading model to represent all the relevant channels.

\subsubsection{BS-user Channel}
$\mathbf{h}_{\mathbf{D},k}\in \mathbb{C} ^{M\times 1}$  is given by
\begin{equation}
  \mathbf{h}_{\mathrm{D},k}=\sqrt{PL_{\mathrm{D},k}}\left( \sqrt{\frac{\kappa}{\kappa+1}}\overline{\mathbf{h}}_{D,k}+\sqrt{\frac{1}{\kappa+1}}\widetilde{\mathbf{h}}_{\mathrm{D},k} \right),
\end{equation}
where $\kappa \geq 0 $ is the Rician K-factor, $PL_{\mathrm{D},k}$ denotes the path loss, which can be given in logarithmic form by 
\begin{equation}
  PL_{\mathrm{D},k}=PL_0-10\beta _{\mathrm{D},k}\log _{10}( \frac{d_0 }{d_{\mathrm{D},k} }   ),
\end{equation}
where $PL_0$ refers the path loss based on the reference distance of $d_0$, $\beta _{\mathrm{D},k}$ denotes the path loss exponent, and $d_{\mathrm{D},k}$ denotes the distance between the BS and the $k$th user. $\overline{\mathbf{h}}_{\mathrm{D},k}\in \mathbb{C} ^{M\times 1}$ is the line-of-sight (LoS) component, which is given by 
\begin{equation}
  \overline{\mathbf{h}}_{\mathrm{D},k}=e^{j2\pi f_{d_1}\tau}\mathbf{a}_{\mathrm{B}}( \phi ^{\mathrm{D},k},\delta ^{\mathrm{D},k} ),
\end{equation}
where $f_{d_1}=\upsilon \cos \phi ^{\mathrm{D},k}\cos \delta ^{\mathrm{D},k}/\lambda _c$ denotes the Doppler frequency shift, and $\tau$ is the time slot duration. Here, $v$ is the HST speed, $\lambda_c =  c / f_c $ is the wavelength, where $f_c$ and $c$ are the carrier frequency and the light speed, respectively. Additionally, $\phi ^{\mathrm{D},k}$ and $\delta ^{\mathrm{D},k}$ represent the azimuth and elevation angle of departure (AoD) from the  BS to user $k$, respectively, $\mathbf{a}_{\mathrm{B}}$ denotes the steering vector at the BS can be given by
\begin{align} \label{eq:absv}
  &\mathbf{a}_{\mathrm{B}}( \phi ^{\mathrm{D},k},\delta ^{\mathrm{D},k} ) = [ 1, \cdots ,  e^{\frac{j2\pi d \left( M_{\mathrm{h}}-1 \right)}{\lambda_c}\sin \phi ^{\mathrm{D},k}\cos \delta ^{\mathrm{D},k} } ] \nonumber  \\
  & \quad \quad  \qquad \qquad \, \otimes  [ 1,\cdots,  e^{\frac{j2\pi d \left( M_{\mathrm{v}}-1 \right)}{\lambda_c}\sin \phi ^{\mathrm{D},k}\sin \delta ^{\mathrm{D},k} } ],
\end{align}
where $d = \lambda_c / 2$ denotes the element spacing.

Moreover, $\widetilde{\mathbf{h}}_{\mathrm{D},k} \in \mathbb{C} ^{M \times 1}$ represents the non-line-of-sight (NLoS) component, where each element modeled as an independent CSCG random vector. The distribution follows $\mathcal{C} \mathcal{N} (\mathbf{0}_{M \times 1},\mathbf{I}_{M \times 1})$ which denotes the small-scale fading.

\subsubsection{BS-RIS Channel}
$\mathbf{G} \in \mathbb{C} ^{N\times M}$ is expressed as
\begin{equation}
  \mathbf{G}=\sqrt{PL_{\mathbf{G}}}\left( \sqrt{\frac{\kappa}{\kappa+1}}\overline{\mathbf{G}}+\sqrt{\frac{1}{\kappa+1}}\widetilde{\mathbf{G}} \right),
\end{equation}
where $PL_{\mathbf{G}}$ is the path loss, which can be given in logarithmic form by 
\begin{equation}
  PL_{\mathbf{G}}=PL_0-10\beta _{\mathbf{G}}\log _{10}( \frac{d_0}{d_{\mathbf{G}}}),
\end{equation}
where $\beta _{\mathbf{G}}$ denotes the path loss exponent, and $d_{\mathbf{G}}$ denotes the distance between the BS and the RIS. The LoS component $\overline{\mathbf{G}} \in \mathbb{C} ^{N\times M}$ is defined as
\begin{equation}
  \overline{\mathbf{G}}=\mathbf{a}_{\mathrm{R}}( \phi ^{\mathrm{R}},\delta ^{\mathrm{R}} ) \mathbf{a}_{\mathrm{B}}^{\mathrm{H}}( \phi ^{\mathrm{B}},\delta ^{\mathrm{B}} ),
\end{equation}
where $\phi ^{\mathrm{R}}$ and $\delta ^{\mathrm{R}}$ are the  azimuth and elevation AoD from the RIS to the BS, respectively. $\phi ^{\mathrm{B}}$ and $\delta ^{\mathrm{B}}$ are the azimuth and elevation angle of arrival (AoA) from the RIS to the BS, respectively. $\mathbf{a}_{\mathrm{R}}$ denotes the steering vector at the RIS, and it is similar to \eqref{eq:absv}. The term $\widetilde{\mathbf{G}} \in \mathbb{C} ^{N\times M}$ is the NLoS component.

\subsubsection{RIS-user Channel}
$\mathbf{h}_{\mathrm{R},k}\in \mathbb{C} ^{N\times 1}$ can be given by 
\begin{equation}
  \mathbf{h}_{\mathrm{R},k}=\sqrt{PL_{\mathrm{R},k}}\left( \sqrt{\frac{\kappa}{\kappa+1}}\overline{\mathbf{h}}_{\mathrm{R},k}+\sqrt{\frac{1}{\kappa+1}}\widetilde{\mathbf{h}}_{\mathrm{R},k} \right),
\end{equation}
where $PL_{\mathrm{R},k} $ is the path loss and can be  expressed in logarithmic form as 
\begin{equation}
  PL_{\mathrm{R},k}=PL_0-10\beta _{\mathrm{R},k}\log _{10}(\frac{d_0}{ d_{\mathrm{R},k}} ),
\end{equation}
where $\beta _{\mathrm{R},k}$ denotes the path loss exponent, and $d_{\mathrm{R},k}$ denotes the distance between the RIS and the $k$th user. Additionally,  $\overline{\mathbf{h}}_{{\mathrm{R},k}} \in \mathbb{C} ^{N\times 1 } $ is the LoS component, defined as
\begin{equation}
  \overline{\mathbf{h}}_{\mathrm{R},k}=e^{j2\pi f_{d2}\tau}\mathbf{a}_{\mathrm{R}}\left( \phi ^{\mathrm{R},k},\delta ^{\mathrm{R},k} \right), 
\end{equation}
where $f_{d_2}=\upsilon \cos \phi ^{\mathrm{R},k}\cos \delta ^{\mathrm{R},k}/\lambda _c$ is the Doppler frequency shift, and  $\widetilde{\mathbf{h}}_{\mathrm{R},k}$ is the NLoS component.

\subsection{Two Channel Uncertainty and CSI Error Models}
In RIS-assisted HST communication systems, two kinds of communication channels are present: \textit{1)} the direct channel  $\mathbf{h}_{\mathrm{D},k}$ and \textit{2)} the cascaded channel $\mathbf{H}_k$. The performance of RIS-assisted HST communication is strongly rely on the precision of the direct CSI at the BS (DCSIB) and the CBRUB. Hence, we first describe two possible scenarios for channel uncertainties and subsequently describe two CSI errors models.

\subsubsection{Partial Channel Uncertainty (PCU)}
In RIS-assisted HST systems, the CBRUB poses more difficulties to estimate compared to the DCSIB, mainly because of the high mobility of HST and the passive characteristic of the RIS. Thus, we assume the perfect DCSIB, and the imperfect CBRUB. The cascaded channel can be formulated as
\begin{equation} \label{eq:cc}
  \mathbf{H}_k=\widehat{\mathbf{H}}_k+\bigtriangleup \mathbf{H}_k,\forall k\in \mathcal{K},
\end{equation}
where $\widehat{\mathbf{H}}_k$ denotes the estimated cascaded CSI which is obtained by the BS, $\bigtriangleup \mathbf{H}_k$ denotes the uncertain CBRUB error.

\subsubsection{Full Channel Uncertainty (FCU)}
In a complex electromagnetic environment, obtaining an accurate DCSIB is also difficult. Hence, we assume that both the DCSIB and the CBRUB are imperfect. Along with the CBRUB error mode in \eqref{eq:cc}, the DCSIB is modeled as
\begin{equation} \label{eq:dc}
  \mathbf{h}_{\mathrm{D},k}=\widehat{\mathbf{h}}_{\mathrm{D},k}+\bigtriangleup \mathbf{h}_{\mathrm{D},k},\forall k\in \mathcal{K},
\end{equation}
where $\widehat{\mathbf{h}}_{\mathrm{D},k}$ represents the estimated DCSIB obtained at the BS, and $\bigtriangleup \mathbf{h}_{\mathrm{D},k}$ is the uncertain DCSIB error.

In this paper, two types of robust beamforming design for RIS-aided HST communication coverage enhancement under different CSI error models is considered.
\subsubsection{Bounded CSI Error (BCSIE) Model }
It specifies the boundaries of the CSI error model \cite{c3}, and given by
\begin{equation} \label{eq:bounded}
  \left\| \bigtriangleup \mathbf{H}_k \right\| _F\le \xi _{\mathrm{H},k},\left\| \bigtriangleup \mathbf{h}_{\mathrm{D},k} \right\| _2\le {\xi _{\mathrm{D}}}_{,k},\forall k\in \mathcal{K},
\end{equation}
where  $\xi _{\mathrm{H},k}$ and $\xi _{\mathrm{D},k}$ represent the radii of the uncertainty areas for the two channels known at the BS.

\subsubsection{Statistical CSI Error (SCSIE) Model }
It models the CSI error through the statistical framework \cite{c4}, and given by
\begin{subequations} \label{eq:statistical}
  \begin{align}
    \mathrm{vec}\left( \bigtriangleup \mathbf{H}_k \right) \sim \mathcal{C} \mathcal{N} (\mathbf{0},\mathbf{\Sigma }_{\mathrm{H},k}),\mathbf{\Sigma }_{\mathrm{H},k}\succeq \mathbf{0},\forall k\in \mathcal{K},  \label{eq:sta1}    \\
    \bigtriangleup \mathbf{h}_{\mathrm{D},k}\sim \mathcal{C} \mathcal{N} (\mathbf{0},\mathbf{\Sigma }_{\mathrm{D},k}),\mathbf{\Sigma }_{\mathrm{D},k}\succeq \mathbf{0},\forall k\in \mathcal{K},  \label{eq:sta2}
  \end{align}
\end{subequations}
where $\mathbf{\Sigma }_{\mathrm{H},k} \in \mathbb{C} ^{NM\times NM} $ and $\mathbf{\Sigma }_{\mathrm{D},k} \in \mathbb{C} ^{M\times M}$ represent covariance matrices of error that are positive semidefinite for the corresponding two channels known at the BS.

\section{Worst-Case Robust Beamforming Design For The BCSIE Model}  \label{wcrbd}
Under the BCSIE model, we investigate the worst-case robust beamforming design. We aim to minimize the BS's transmit power by jointly optimizing  the beamforming matrix $\mathbf{W}$ and reflection beamforming vector $\boldsymbol{\theta }$, while ensuring that the minimum QoS requirements for users under the worst-case conditions and the RIS's unit-modulus constraints are met. To address this (NC) problem, we introduce an alternate optimization (AO) algorithm that utilizes the S-Procedure, SOCP, and penalty CCP techniques \cite{c5}.

\subsection{Scenario 1: PCU}
In this section, our objective is to design the robust beamforming method for the RIS-assisted HST communication converge enhancement under Scenario 1 with the perfect DCSIT and the imperfect CBIUT. Let $\mathcal{I} _{k}^{p}\triangleq \{ \forall \| \bigtriangleup \mathbf{H}_k \| _F\le \xi _{\mathrm{H},k} \}$, which defines the worst-case optimization problem for minimizing the  transmit power for the BS is expressed as
\begin{subequations}
  \begin{align}   \label{eq:Pro1}
    \mathcal{P}_1:\,\, \underset{\mathbf{W},\boldsymbol{\theta }\,\,} {\min}\,\,& \left\| \mathbf{W} \right\| _{F}^{2}     \\
    \mathrm{s}.\mathrm{t}. \,\,\, &  R_k\left( \mathbf{W},\boldsymbol{\theta } \right) \ge R_{\mathrm{th}},\mathcal{I} _{k}^{p},\forall k\in \mathcal{K}, \label{eq:Pro1b} \\
    & \left| \theta _n \right|^2=1,\forall n\in \mathcal{N},  \label{eq:Pro1c}
  \end{align}
\end{subequations}
where $R_{\mathrm{th}}$ denotes the target rate of the $k$th user. The worst-case QoS requirements for users is defined in constraints \eqref{eq:Pro1b}, while constraints \eqref{eq:Pro1c} ensure that the reflection elements at the RIS meet the unit-modulus constraints.

The non-convexity of constraints \eqref{eq:Pro1b} is handled by considering the INs power $\boldsymbol{\iota }=[ \iota _1,\cdots ,\iota _K ] ^{\mathrm{T}}$ as auxiliary variables. As a result, constraints \eqref{eq:Pro1b} is represented as follows:
\begin{align}
 &( \mathbf{h}_{\mathrm{D},k}^{\mathrm{H}}+\boldsymbol{\theta }^{\mathrm{H}}\mathbf{H}_k ) \mathbf{w}_k |^2\ge \iota _k( 2^{R_{\mathrm{th}}}-1 ) ,\mathcal{I} _{k}^{p},\forall k\in \mathcal{K}, \label{eq:INs1}   \\
  &\| ( \mathbf{h}_{\mathrm{D},k}^{\mathrm{H}}+\boldsymbol{\theta }^{\mathrm{H}}\mathbf{H}_k ) \mathbf{W}_{-k}\| _{2}^{2}+\sigma _{k}^{2}\le \iota _k,\mathcal{I} _{k}^{p},\forall k\in \mathcal{K}.   \label{eq:INs2}
\end{align}
Constraints \eqref{eq:INs1} and \eqref{eq:INs2} define the limits for the power of the worst-case useful signal and INs, respectively.

To address the NC semi-infinite inequality constraints \eqref{eq:INs1}, we begin by approximating the NC terms, followed by applying the S-Procedure to handle the semi-infinite inequalities. The useful signal power in \eqref{eq:INs1} can be converted to a linear approximation by the following lemma.
\begin{lemma} \label{le1}
By inserting $\mathbf{H}_k=\widehat{\mathbf{H}}_k+\bigtriangleup \mathbf{H}_k$ into the useful signal power expression \eqref{eq:INs1}, and treating $\mathbf{w}_{k}^{\left( n \right)}$ and $\boldsymbol{\theta }_{k}^{\left( n \right)}$ as the optimal solutions acquired at the $n$-th iteration, the lower bound at $( \mathbf{w}_{k}^{\left( n \right)},\boldsymbol{\theta }_{k}^{\left( n \right)} )$ provides a linear approximation of the expression $| [ \mathbf{h}_{\mathrm{D},k}^{\mathrm{H}}+\boldsymbol{\theta }^{\mathrm{H}}( \widehat{\mathbf{H}}_k+\bigtriangleup \mathbf{H}_k ) ] \mathbf{w}_k |^2$, as follows
 \begin{equation}    \label{eq:le1}
    \mathrm{vec}^\mathrm{T}(\bigtriangleup \mathbf{H}_k)\mathbf{A}_k\mathrm{vec(}\bigtriangleup \mathbf{H}_{k}^{*})+2\mathfrak{R} \left\{ \mathbf{e}_{k}^{\mathrm{T}}\mathrm{vec(}\bigtriangleup \mathbf{H}_{k}^{*}) \right\} +e_k,
  \end{equation}
  where
  \begin{align*}
    \mathbf{A}_k & =   \mathbf{w}_k\mathbf{w}_{k}^{\left(n\right),\mathrm{H}}\otimes \boldsymbol{\theta }^*\boldsymbol{\theta }^{\left(n\right),\mathrm{T}}+\mathbf{w}_{k}^{\left(n\right)}\mathbf{w}_{k}^{\mathrm{H}}\otimes \boldsymbol{\theta }^{\left(n\right),*}\boldsymbol{\theta }^{\mathrm{T}}   \\
    &\thinspace\thinspace\thinspace\thinspace\thinspace\thinspace\thinspace -(\mathbf{w}_{k}^{\left(n\right)}\mathbf{w}_{k}^{\left(n\right),\mathrm{H}}\otimes \boldsymbol{\theta }^{\left(n\right),*}\boldsymbol{\theta }^{\left(n\right),\mathrm{T}}), \\
    \mathbf{e}_k&=  \mathrm{vec(}\boldsymbol{\theta }( \mathbf{h}_{\mathrm{D},k}^{{\mathrm{H}}}+\boldsymbol{\theta }^{\left(n\right),\mathrm{H}}\widehat{\mathbf{H}}_k ) \mathbf{w}_{k}^{\left(n\right)}\mathbf{w}_{k}^{\mathrm{H}}) \\
    &\thinspace\thinspace\thinspace\thinspace\thinspace\thinspace\thinspace+\mathrm{vec(}\boldsymbol{\theta }^{\left(n\right)}( \mathbf{h}_{\mathrm{D},k}^{\mathrm{H}}+\boldsymbol{\theta }^\mathrm{H}\widehat{\mathbf{H}}_k ) \mathbf{w}_k\mathbf{w}_{k}^{\left(n\right),\mathrm{H}}) \\
    &\thinspace\thinspace\thinspace\thinspace\thinspace\thinspace\thinspace-\mathrm{vec(}\boldsymbol{\theta }^{\left(n\right)}( \mathbf{h}_{\mathrm{D},k}^{\mathrm{H}}+\boldsymbol{\theta }^{\left(n\right),\mathrm{H}}\widehat{\mathbf{H}}_k ) \mathbf{w}_{k}^{\left(n\right)}\mathbf{w}_{k}^{\left(n\right),\mathrm{H}}), \\
    e_k & =   2\mathfrak{R} \{ ( \mathbf{h}_{\mathrm{D},k}^{\mathrm{H}}+\boldsymbol{\theta }^{\left(n\right),\mathrm{H}}\widehat{\mathbf{H}}_k  \mathbf{w}_{k}^{\left(n\right)}\mathbf{w}_{k}^{\mathrm{H}}( \mathbf{h}_{\mathrm{D},k}+\widehat{\mathbf{H}}_{k}^{\mathrm{H}}\boldsymbol{\theta } ) \} \\
    &\thinspace\thinspace\thinspace\thinspace\thinspace\thinspace\thinspace-( \mathbf{h}_{\mathrm{D},k}^{\mathrm{H}}+\boldsymbol{\theta }^{\left(n\right),\mathrm{H}}\widehat{\mathbf{H}}_k ) \mathbf{w}_{k}^{\left(n\right)}\mathbf{w}_{k}^{\left( n \right) ,\mathrm{H}}( \mathbf{h}_{\mathrm{D},k}+\widehat{\mathbf{H}}_{k}^{\mathrm{H}}\boldsymbol{\theta }^{\left(n\right)} ).
  \end{align*}
\end{lemma}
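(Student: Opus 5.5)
The approach is the standard first-order (SCA) lower bound for a squared modulus, followed by rewriting everything in terms of $\mathrm{vec}(\bigtriangleup\mathbf{H}_k)$. Write the effective scalar
\begin{equation*}
a_k(\mathbf{w}_k,\boldsymbol{\theta})=\bigl[\mathbf{h}_{\mathrm{D},k}^{\mathrm{H}}+\boldsymbol{\theta}^{\mathrm{H}}(\widehat{\mathbf{H}}_k+\bigtriangleup\mathbf{H}_k)\bigr]\mathbf{w}_k,
\end{equation*}
and let $a_k^{(n)}$ be the same expression evaluated at $(\mathbf{w}_k^{(n)},\boldsymbol{\theta}^{(n)})$. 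Since $|x|^2$ is convex in $x$, we have $|x|^2\ge 2\mathfrak{R}\{x^{(n),*}x\}-|x^{(n)}|^2$, with equality at $x=x^{(n)}$. This follows from $|x-x^{(n)}|^2\ge 0$. Applying it with $x=a_k$ gives a global lower bound on the useful signal power, tight at the current iterate, so replacing \eqref{eq:INs1} by this bound yields a conservative (inner) approximation.

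The plan is then to expand $2\mathfrak{R}\{a_k^{(n),*}a_k\}-|a_k^{(n)}|^2$ and sort the terms by their order in $\bigtriangleup\mathbf{H}_k$. Write $a_k=c_k+\boldsymbol{\theta}^{\mathrm{H}}\bigtriangleup\mathbf{H}_k\mathbf{w}_k$, where $c_k=(\mathbf{h}_{\mathrm{D},k}^{\mathrm{H}}+\boldsymbol{\theta}^{\mathrm{H}}\widehat{\mathbf{H}}_k)\mathbf{w}_k$, and write $a_k^{(n)}$ analogously. The products of the two error terms give the quadratic part. The cross products with $c_k$ and $c_k^{(n)}$ give the linear part. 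The term $2\mathfrak{R}\{c_k^{(n),*}c_k\}-|c_k^{(n)}|^2$ is exactly $e_k$.

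Each piece is then converted to vectorized form using the identity
\begin{equation*}
\mathbf{x}^{\mathrm{H}}\mathbf{X}\mathbf{y}=\mathrm{Tr}(\mathbf{X}\mathbf{y}\mathbf{x}^{\mathrm{H}})=\mathrm{vec}^{\mathrm{T}}\bigl((\mathbf{x}\mathbf{y}^{\mathrm{T}})^{*}\bigr)\mathrm{vec}(\mathbf{X})
\end{equation*}
together with $\mathrm{vec}(\mathbf{x}\mathbf{y}^{\mathrm{T}})=\mathbf{y}\otimes\mathbf{x}$.

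\begin{itemize}
\item \textbf{Quadratic part.} This is $\mathrm{vec}^{\mathrm{T}}(\bigtriangleup\mathbf{H}_k)(\mathbf{w}\mathbf{w}'^{\mathrm{H}}\otimes\boldsymbol{\theta}^*\boldsymbol{\theta}'^{\mathrm{T}})\mathrm{vec}(\bigtriangleup\mathbf{H}_k^*)$, after checking the Kronecker ordering against column-major $\mathrm{vec}$. This produces the three Kronecker terms of $\mathbf{A}_k$: two from $2\mathfrak{R}\{\cdot\}$, minus the one from $|a_k^{(n)}|^2$.
\item \textbf{Linear part.} The terms $\boldsymbol{\theta}^{\mathrm{H}}\bigtriangleup\mathbf{H}_k\mathbf{w}_k\,c_k^{(n),*}$ and similar become $\mathbf{e}_k^{\mathrm{T}}\mathrm{vec}(\bigtriangleup\mathbf{H}_k^*)$ via $\mathrm{Tr}(\bigtriangleup\mathbf{H}_k^{\mathrm{H}}\boldsymbol{\theta}c\,\mathbf{w}^{\mathrm{H}})$. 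This yields the three $\mathrm{vec}(\boldsymbol{\theta}(\cdot)\mathbf{w}\mathbf{w}^{\mathrm{H}})$ terms, again with the subtracted one coming from $|a_k^{(n)}|^2$.
\end{itemize}

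Collecting the pieces gives exactly \eqref{eq:le1}. The form is a Hermitian-type quadratic in $\mathrm{vec}(\bigtriangleup\mathbf{H}_k)$ plus linear and constant terms, which is what the S-procedure needs later. The main obstacle is bookkeeping rather than ideas. One must keep conjugates and transposes consistent so that the quadratic and linear terms come out in the stated $\mathrm{vec}^{\mathrm{T}}(\cdot)\,\cdot\,\mathrm{vec}(\cdot^*)$ form. One must also verify that $\mathfrak{R}\{\cdot\}$ applied to the quadratic part is absorbed into $\mathbf{A}_k$, since $\mathbf{A}_k$ is Hermitian up to the real-part symmetrization. I would check the result by re-substituting and confirming that at $(\mathbf{w}_k^{(n)},\boldsymbol{\theta}^{(n)})$ the expression equals $|a_k^{(n)}|^2$ for every $\bigtriangleup\mathbf{H}_k$.
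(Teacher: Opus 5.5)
Your proposal is correct and follows the paper's proof in Appendix~A: apply the first-order bound $|x|^2\ge 2\mathfrak{R}\{x^{(n),*}x\}-|x^{(n)}|^2$, with $x^{(n)}$ evaluated at $(\mathbf{w}_k^{(n)},\boldsymbol{\theta}^{(n)})$ but using the same $\mathbf{H}_k=\widehat{\mathbf{H}}_k+\bigtriangleup\mathbf{H}_k$, then vectorize; sorting by order in $\bigtriangleup\mathbf{H}_k$ correctly reproduces $\mathbf{A}_k$, $\mathbf{e}_k$ and $e_k$. One slip to fix: the auxiliary identity should read $\mathbf{x}^{\mathrm{H}}\mathbf{X}\mathbf{y}=\mathrm{vec}^{\mathrm{H}}(\mathbf{x}\mathbf{y}^{\mathrm{H}})\mathrm{vec}(\mathbf{X})$, because with $\mathbf{x}\mathbf{y}^{\mathrm{T}}$ as written, $\mathbf{y}$ is wrongly conjugated and $\mathbf{w}_k\mathbf{w}_k^{(n),\mathrm{H}}$ in $\mathbf{A}_k$ would come out as $\mathbf{w}_k^{*}\mathbf{w}_k^{(n),\mathrm{T}}$.
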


\begin{proof}
  Kindly consult Appendix \ref{appa}.
\end{proof}

By using the linear approximation of the useful signal power in \eqref{eq:le1} instead of the original in \eqref{eq:INs1}, we obtain a reformulation of the constraints in \eqref{eq:INs1} as
\begin{align} \label{eq:linearapp}
  &\mathrm{vec}^{\mathrm{T}}(\bigtriangleup \mathbf{H}_k)\mathbf{A}_k\mathrm{vec(}\bigtriangleup \mathbf{H}_{k}^{*})+2\mathfrak{R} \{ \mathbf{e}_{k}^{\mathrm{T}}\mathrm{vec(}\bigtriangleup \mathbf{H}_{k}^{*}) \} +e_k \nonumber \\
  &\ge \iota _k(2^{R_{\mathrm{th}}}-1),\mathcal{I} _{k}^{p},\forall k\in \mathcal{K}.
\end{align}

Nevertheless, the constraints in \eqref{eq:linearapp} are still challenging to solve because they are semi-infinite inequalities. Thus, we have the following lemma to address this problem.
\begin{lemma} \label{S-procedure} (General S-Procedure \cite{c6})
Considering the quadratic functions associated with the variable$\mathbf{x}\in\mathbb{C}^{n\times1}$:
\begin{equation*}
  f_i(\mathbf{x})=\mathbf{x}^{\mathrm{H}}\mathbf{E}_i\mathbf{x}+2\mathfrak{R} \left\{ \mathbf{e}_{i}^{\mathrm{H}}\mathbf{x} \right\} +g_i,i=0,...,P,
\end{equation*}
where $\mathbf{E}_i=\mathbf{E}_{i}^{\mathrm{H}}$. The condition $\left\{ f_i\left( \mathbf{x} \right) \ge 0 \right\} _{i=1}^{P}\Rightarrow f_0\left( \mathbf{x} \right) \ge 0$ holds if and only if there exist $\forall i,\varrho _i\ge 0$ such that
\begin{equation*}
  \left. \left[ \begin{matrix}
    \mathbf{E}_0&		\mathbf{e}_0\\
    \mathbf{e}_{0}^{\mathrm{H}}&		g_0\\
  \end{matrix} \right. \right] -\sum_{i=1}^P{\varrho _i\left[ \begin{matrix}
    \mathbf{E}_i&		\mathbf{e}_i\\
    \mathbf{e}_{i}^{\mathrm{H}}&		g_i\\
  \end{matrix} \right]}\succeq \mathbf{0}.
\end{equation*}
\end{lemma}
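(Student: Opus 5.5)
The sufficiency direction is direct, and I would do it first. Write $\mathbf{M}_i=\left[\begin{matrix}\mathbf{E}_i&\mathbf{e}_i\\ \mathbf{e}_i^{\mathrm{H}}&g_i\end{matrix}\right]$ and $\mathbf{z}=[\mathbf{x}^{\mathrm{T}},1]^{\mathrm{T}}$. Then $f_i(\mathbf{x})=\mathbf{z}^{\mathrm{H}}\mathbf{M}_i\mathbf{z}$, since $\mathbf{x}^{\mathrm{H}}\mathbf{e}_i+\mathbf{e}_i^{\mathrm{H}}\mathbf{x}=2\mathfrak{R}\{\mathbf{e}_i^{\mathrm{H}}\mathbf{x}\}$. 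Suppose $\varrho_i\ge 0$ make $\mathbf{M}_0-\sum_i\varrho_i\mathbf{M}_i\succeq\mathbf{0}$. Evaluating this quadratic form at $\mathbf{z}$ gives $f_0(\mathbf{x})\ge\sum_i\varrho_i f_i(\mathbf{x})$. If every $f_i(\mathbf{x})\ge 0$, the right-hand side is nonnegative, so $f_0(\mathbf{x})\ge 0$. This is the only direction used when the lemma is applied to \eqref{eq:linearapp}, with $\mathbf{x}=\mathrm{vec}(\bigtriangleup\mathbf{H}_k^{*})$ and the single constraint $\xi_{\mathrm{H},k}^2-\|\mathbf{x}\|_2^2\ge 0$.

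Necessity is not true for arbitrary $P$ as stated; in general the S-procedure is lossy. I would therefore prove it under two extra conditions, which is the form in which \cite{c6} states it:
\begin{enumerate}[(i)]
\item $P=1$, or $P=2$ in the complex case with $n\ge 2$;
\item a Slater point $\bar{\mathbf{x}}$ exists with $f_i(\bar{\mathbf{x}})>0$ for all $i\ge 1$.
\end{enumerate}
Both hold in our application: $P=1$, and $\bar{\mathbf{x}}=\mathbf{0}$ is a Slater point whenever $\xi_{\mathrm{H},k}>0$.

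The necessity argument goes by homogenization and a separation argument.
\begin{enumerate}[1)]
\item Show that the implication extends to all $\mathbf{z}=[\mathbf{y}^{\mathrm{T}},t]^{\mathrm{T}}$: if $\mathbf{z}^{\mathrm{H}}\mathbf{M}_i\mathbf{z}\ge 0$ for $i\ge 1$, then $\mathbf{z}^{\mathrm{H}}\mathbf{M}_0\mathbf{z}\ge 0$. For $t\neq 0$, rescale to $t=1$ and absorb the phase of $t$ into $\mathbf{y}$. For $t=0$, take the limit along $\mathbf{y}/s+\bar{\mathbf{x}}$ as $s\to 0$ together with the perturbation $\bar{\mathbf{x}}+\alpha\mathbf{y}$; the strict Slater inequalities keep the perturbed points feasible.
\item Consider the joint range $\mathcal{W}=\{(\mathbf{z}^{\mathrm{H}}\mathbf{M}_0\mathbf{z},\dots,\mathbf{z}^{\mathrm{H}}\mathbf{M}_P\mathbf{z})\}$. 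For two Hermitian forms ($P=1$) it is convex by the Toeplitz--Hausdorff theorem. For three forms ($P=2$) in complex dimension $\ge 3$ it is convex by the Au-Yeung--Poon theorem.
\item By step 1), $\mathcal{W}$ is disjoint from the open convex cone $\{u_0<0,\ u_i\ge 0\}$, in its interior form. A separating hyperplane therefore gives $(\mu_0,\mu_1,\dots,\mu_P)\neq\mathbf{0}$ with $\mu_i\ge 0$ and $\mu_0\mathbf{z}^{\mathrm{H}}\mathbf{M}_0\mathbf{z}-\sum_i\mu_i\mathbf{z}^{\mathrm{H}}\mathbf{M}_i\mathbf{z}\ge 0$ for all $\mathbf{z}$.
\item The Slater point rules out $\mu_0=0$: plugging in $\bar{\mathbf{z}}$ would give $-\sum_i\mu_i f_i(\bar{\mathbf{x}})\ge 0$, forcing all $\mu_i=0$. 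Setting $\varrho_i=\mu_i/\mu_0$ then gives the LMI.
\end{enumerate}

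I expect the main obstacle to be steps 1) and 2): the $t=0$ directions in the homogenization, and the convexity of the joint numerical range. I would handle the first with the explicit limiting argument above. For the second I would cite the classical convexity theorems rather than reprove them. Finally, I would remark that for general $P$ only the "if" direction holds, and that this direction is all the subsequent LMI reformulation requires.
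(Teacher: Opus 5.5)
The paper does not prove this lemma; it cites \cite{c6} and states an equivalence for arbitrary $P$ with no constraint qualification. You are right that this is false as written, and it already fails for $P=1$. Take $f_1(\mathbf{x})=-\|\mathbf{x}\|_2^2$ and $f_0(\mathbf{x})=2\mathfrak{R}\{x_1\}$, so $\mathbf{E}_0=\mathbf{0}$, $g_0=0$, and $\mathbf{e}_0$ is the first unit vector. The implication holds trivially because the feasible set is $\{\mathbf{0}\}$. Yet no $\varrho\ge 0$ makes $\left[\begin{smallmatrix}\varrho\mathbf{I}&\mathbf{e}_0\\ \mathbf{e}_0^{\mathrm{H}}&0\end{smallmatrix}\right]\succeq\mathbf{0}$.

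Your argument is the standard proof and is correct under your hypotheses (i)--(ii):
\begin{itemize}
\item sufficiency, by evaluating the LMI at $[\mathbf{x}^{\mathrm{T}},1]^{\mathrm{T}}$;
\item necessity, by homogenization, convexity of the joint numerical range (Toeplitz--Hausdorff or Au-Yeung--Poon), separation, and the Slater point to force $\mu_0>0$.
\end{itemize}
Compared with the bare citation, it shows exactly when the paper's claims of ``equivalent LMIs'' hold. It covers \eqref{eq:LMI-S-partial}, where $P=1$ and $\mathbf{0}$ is a Slater point when $\xi_{\mathrm{H},k}>0$. It also covers the FCU reformulation \eqref{eq:full-SINR-signal-LMI}, which you did not mention. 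There $P=2$, $\mathbf{x}_k\in\mathbb{C}^{M+NM}$ so $n\ge 2$, and $\mathbf{x}_k=\mathbf{0}$ is strictly feasible when both radii are positive. For the robust designs to be safe, only the sufficiency direction is needed.

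One step needs tightening: the $t=0$ case of your step 1). Strict Slater inequalities alone do not keep $\bar{\mathbf{x}}+\alpha\mathbf{y}$ feasible as $|\alpha|\to\infty$. With $a_i=(\mathbf{E}_i\bar{\mathbf{x}}+\mathbf{e}_i)^{\mathrm{H}}\mathbf{y}$ you have $f_i(\bar{\mathbf{x}}+\alpha\mathbf{y})=f_i(\bar{\mathbf{x}})+2\mathfrak{R}\{\alpha a_i\}+|\alpha|^2\mathbf{y}^{\mathrm{H}}\mathbf{E}_i\mathbf{y}$. When $\mathbf{y}^{\mathrm{H}}\mathbf{E}_i\mathbf{y}=0$, the linear term dominates and may be negative.

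You must choose the phase of $\alpha$ so that $\mathfrak{R}\{\alpha a_i\}\ge 0$ for all $i\ge 1$. This is possible for $P\le 2$, because two closed half-planes through the origin of $\mathbb{C}\cong\mathbb{R}^2$ always share a nonzero direction. Three such half-planes need not, so this is a second place where $P\le 2$ is used. With that choice, $f_i(\bar{\mathbf{x}}+\alpha\mathbf{y})\ge f_i(\bar{\mathbf{x}})>0$ for every $|\alpha|$. Meanwhile $f_0(\bar{\mathbf{x}}+\alpha\mathbf{y})\to-\infty$ if $\mathbf{y}^{\mathrm{H}}\mathbf{E}_0\mathbf{y}<0$, which gives the contradiction you need.
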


Using Lemma \ref{S-procedure} to address the CSI uncertainty in the constraints \eqref{eq:linearapp} allows for the transformation into equivalent linear matrix inequalities (LMIs). In particularly, each user's constraint in \eqref{eq:linearapp} can be rewritten by appropriately choosing the parameters as specified in Lemma \ref{S-procedure} as
\begin{align*}
  &P=1,\thinspace\thinspace\mathbf{E}_0=\mathbf{A}_k,\thinspace\thinspace\mathbf{w}_0=\mathbf{e}_{k},\thinspace\thinspace g_0=e_k-\iota _k\left( 2^{R_{\mathrm{th}}}-1 \right),
   \\
  &\mathbf{x}=\mathrm{vec}\left( \bigtriangleup \mathbf{H}_{k}^{*} \right),\thinspace\thinspace\thinspace\thinspace\mathbf{E}_1=-\mathbf{I},\thinspace\thinspace\thinspace\thinspace g_1=\xi _{\mathrm{H},k}^{2}.
\end{align*}

Subsequently, constraint \eqref{eq:linearapp} is transformed into the following equivalent form of LMIs;
\begin{equation}
  \left[ \begin{matrix}
    \vartheta _{\mathrm{H},k}\mathbf{I}_{NM}+\mathbf{A}_k&		\mathbf{e}_k\\
    \mathbf{e}_{k}^{\mathrm{T}}&		C_{k}^{p}\\
  \end{matrix} \right] \succeq \mathbf{0},\forall k\in \mathcal{K},
  \label{eq:LMI-S-partial}
\end{equation}
where $\boldsymbol{\vartheta }_{\mathrm{H}}=\left[ \vartheta _{\mathrm{H},1},\cdots ,\vartheta _{\mathrm{H},K} \right] ^{\mathrm{T}}\ge 0$ represent slack variables and $C_{k}^{p}=e_k-\iota _k( 2^{R_{\mathrm{th}}}-1) -\vartheta _{\mathrm{H},k}\xi _{\mathrm{H},k}^{2}$.

In the following, we analyze the uncertainty in $\left\{ \bigtriangleup \mathbf{H}_k \right\} _{\forall k\in \mathcal{K}}$ from \eqref{eq:INs2}. To begin with, we apply Schur's complement Lemma \cite{c7} to reformulate the INs power inequalities in \eqref{eq:INs2} into their corresponding matrix inequalities, as given below
\begin{equation}
  \left. \left[ \begin{matrix}
    \iota _k-\sigma _{k}^{2}&		\mathbf{r}_{k}^{\mathrm{H}}\\
    \mathbf{r}_k&		\mathbf{I}\\
  \end{matrix} \right. \right] \succeq \mathbf{0},\forall k\in \mathcal{K},
  \label{eq:IN-LMI-1-1}
\end{equation}
where $\mathbf{r}_k=( ( \mathbf{h}_{\mathrm{D},k}^{\mathrm{H}}+\boldsymbol{\theta }^{\mathrm{H}}\mathbf{H}_k ) \mathbf{W}_{-k} ) ^{\mathrm{H}}$. By using $\mathbf{H}_k=\widehat{\mathbf{H}}_k+\bigtriangleup \mathbf{H}_k$, \eqref{eq:IN-LMI-1-1} can be reformulated as
\begin{align}
  \left[ \begin{matrix}
    \iota _k-\sigma _{k}^{2}&		\widehat{\mathbf{r}}_{k}^{\mathrm{H}}\\
    \widehat{\mathbf{r}}_k&		\mathbf{I}\\
  \end{matrix} \right] \succeq -\left[ \begin{array}{c}
    \mathbf{0}\\
    \mathbf{W}_{-k}^{\mathrm{H}}\\
  \end{array} \right] \bigtriangleup \mathbf{H}_{k}^{\mathrm{H}}\left[ \begin{matrix}
    \boldsymbol{\theta }&		\mathbf{0} \nonumber    \\
  \end{matrix} \right]
  \\
  \,\,                -\left[ \begin{array}{c}
    \boldsymbol{\theta }^\mathrm{H} \\
    \mathbf{0}  \\
  \end{array} \right] \bigtriangleup \mathbf{H}_k\left[ \mathbf{0}\,\,\mathbf{W}_{-k} \right] ,\forall k\in \mathcal{K},
  \label{eq:LMI-IN-2}
\end{align}
where $\widehat{\mathbf{r}}_k=( ( \mathbf{h}_{\mathrm{D},k}^{\mathrm{H}}+\boldsymbol{\theta }^{\mathrm{H}}\widehat{\mathbf{H}}_k ) \mathbf{W}_{-k} ) ^{\mathrm{H}}$.

In tackling \eqref{eq:LMI-IN-2}, the following lemma is presented.
\begin{lemma} \label{sign-definiteness}(General sign-definiteness \cite{c8})
For a defined set of matrices, such that $\mathbf{E}=\mathbf{E}^{\mathrm{H}}$, and $\left\{ \mathbf{Y}_i,\mathbf{Z}_i \right\} _{i}^{P}$, the subsequent LMI holds
\begin{equation*}
  \mathbf{E}\succeq \sum_{i=1}^P{\left( \mathbf{Y}_{i}^{\mathrm{H}}\mathbf{X}_i\mathbf{Z}_i+\mathbf{Z}_{i}^{\mathrm{H}}\mathbf{X}_{i}^{\mathrm{H}}\mathbf{Y}_i \right) ,}\forall i,||\mathbf{X}_i||_F\le \zeta _i,
\end{equation*}
if and only if there exist real numbers $\forall i,\varsigma _i\ge 0$ such that
\begin{equation*}
  \left[ \begin{matrix}
    \mathbf{E}-\sum_{i=1}^P{\varsigma _i\mathbf{Z}_{i}^{\mathrm{H}}\mathbf{Z}_i}&		-\zeta _1\mathbf{Y}_{1}^{\mathrm{H}}&		\cdots&		-\zeta _P\mathbf{Y}_{P}^{\mathrm{H}}\\
    -\zeta _1\mathbf{Y}_1&		\varsigma _1\mathbf{I}&		\cdots&		0\\
    \vdots&		\vdots&		\ddots&		\vdots\\
    -\zeta _P\mathbf{Y}_P&		0&		\cdots&		\varsigma _P\mathbf{I}\\
  \end{matrix} \right] \succeq \mathbf{0}.
\end{equation*}
\end{lemma}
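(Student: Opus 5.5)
Both directions reduce to a scalar worst-case problem in the uncertainty matrices $\mathbf{X}_i$. I would treat the case $P=1$ in detail and then extend to general $P$ by induction or by aggregating multipliers.

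\emph{Reduction to a quadratic statement.} The matrix inequality $\mathbf{E}\succeq \sum_i(\mathbf{Y}_i^{\mathrm{H}}\mathbf{X}_i\mathbf{Z}_i+\mathbf{Z}_i^{\mathrm{H}}\mathbf{X}_i^{\mathrm{H}}\mathbf{Y}_i)$ for all admissible $\mathbf{X}_i$ means that, for every vector $\mathbf{u}$,
\[
\mathbf{u}^{\mathrm{H}}\mathbf{E}\mathbf{u}-2\sum_i\mathfrak{R}\{(\mathbf{Y}_i\mathbf{u})^{\mathrm{H}}\mathbf{X}_i(\mathbf{Z}_i\mathbf{u})\}\ge 0 .
\]
For fixed $\mathbf{u}$, the worst case of $\mathfrak{R}\{\mathbf{a}^{\mathrm{H}}\mathbf{X}\mathbf{b}\}$ over $\|\mathbf{X}\|_F\le\zeta$ is $\zeta\|\mathbf{a}\|_2\|\mathbf{b}\|_2$. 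It is attained by the rank-one choice $\mathbf{X}=\zeta\mathbf{a}\mathbf{b}^{\mathrm{H}}/(\|\mathbf{a}\|\|\mathbf{b}\|)$, and the upper bound follows from Cauchy--Schwarz with $\|\mathbf{X}\|_2\le\|\mathbf{X}\|_F$. The robust LMI is therefore equivalent to
\[
\mathbf{u}^{\mathrm{H}}\mathbf{E}\mathbf{u}\ge 2\sum_i\zeta_i\|\mathbf{Y}_i\mathbf{u}\|\,\|\mathbf{Z}_i\mathbf{u}\|\quad\forall\mathbf{u}.
\]

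\emph{Sufficiency (``if'').} Suppose the block LMI holds. Test it with the vector $[\mathbf{u};\,\mathbf{v}_1;\dots;\mathbf{v}_P]$, where $\mathbf{v}_i=\zeta_i\mathbf{Y}_i\mathbf{u}/\varsigma_i$, which is the minimizer over $\mathbf{v}_i$ (Schur complement). This gives
\[
\mathbf{u}^{\mathrm{H}}\mathbf{E}\mathbf{u}\ge\sum_i\Bigl(\varsigma_i\|\mathbf{Z}_i\mathbf{u}\|^2+\tfrac{\zeta_i^2}{\varsigma_i}\|\mathbf{Y}_i\mathbf{u}\|^2\Bigr).
\]
By AM--GM each summand is at least $2\zeta_i\|\mathbf{Y}_i\mathbf{u}\|\|\mathbf{Z}_i\mathbf{u}\|$, so the reduced inequality holds. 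The degenerate case $\varsigma_i=0$ forces $\zeta_i\mathbf{Y}_i=0$ by positive semidefiniteness, and the corresponding term then vanishes.

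\emph{Necessity (``only if''), the main obstacle.} For $P=1$, write the reduced inequality as: for all $\mathbf{u}$ and all $\mathbf{x}$ with $\|\mathbf{x}\|^2\le\zeta^2\|\mathbf{Z}\mathbf{u}\|^2$,
\[
\mathbf{u}^{\mathrm{H}}\mathbf{E}\mathbf{u}-2\mathfrak{R}\{\mathbf{u}^{\mathrm{H}}\mathbf{Y}^{\mathrm{H}}\mathbf{x}\}\ge0 .
\]
Here $\mathbf{x}$ plays the role of $\mathbf{X}\mathbf{Z}\mathbf{u}$. Both conditions are homogeneous quadratic forms in $(\mathbf{u},\mathbf{x})$, and the constraint set is strictly feasible (take $\mathbf{u}\neq0$, $\mathbf{x}=0$, assuming $\mathbf{Z}\neq0$). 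The lossless complex S-lemma with one constraint (Lemma~\ref{S-procedure}, homogeneous version) then supplies $\varsigma\ge0$ with
\[
\begin{bmatrix}\mathbf{E}-\varsigma\zeta^2\mathbf{Z}^{\mathrm{H}}\mathbf{Z}&-\mathbf{Y}^{\mathrm{H}}\\-\mathbf{Y}&\varsigma\mathbf{I}\end{bmatrix}\succeq0 .
\]
Rescaling $\varsigma\to\varsigma/\zeta^2$ and applying a congruence with $\mathrm{diag}(\mathbf{I},\zeta\mathbf{I})$ yields exactly the stated block form.

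For $P>1$, a single S-lemma step is not lossless in general, because one would need $P$ multipliers. Here I would use the version of the cited result in \cite{c8}, which exploits the special structure of independent Frobenius balls: each uncertainty enters through its own decoupled variable $\mathbf{x}_i$. First I would try to eliminate the $\mathbf{x}_i$ one at a time, applying the $P=1$ argument with $\mathbf{E}$ replaced by the Schur-reduced matrix from the previous step. Failing that, I would appeal to the theorem of \cite{c8} directly. Since the paper uses only the sufficiency direction to produce a safe reformulation of \eqref{eq:LMI-IN-2}, the ``if'' part above is what the subsequent design actually relies on.
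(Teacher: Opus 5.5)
Your reduction to $\mathbf{u}^{\mathrm{H}}\mathbf{E}\mathbf{u}\ge 2\sum_i\zeta_i\|\mathbf{Y}_i\mathbf{u}\|\,\|\mathbf{Z}_i\mathbf{u}\|$ is correct. So is your sufficiency argument (Schur test vector plus AM--GM, with the $\varsigma_i=0$ case handled). Your $P=1$ necessity, via the lossless one-constraint S-lemma in the lifted variables $(\mathbf{u},\mathbf{x})$, is also correct. That last step is the S-procedure route the paper points to; the paper itself gives no argument beyond invoking Lemma~\ref{S-procedure} and \cite{c9}.

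The gap is necessity for $P>1$. Eliminating the $\mathbf{X}_i$ one at a time fails. Applying the $P=1$ result to $\mathbf{X}_1$ with $\mathbf{X}_2,\dots,\mathbf{X}_P$ frozen gives a multiplier $\varsigma_1$ that depends on the frozen matrices. Passing from ``for every $\mathbf{X}_2$ some $\varsigma_1$ works'' to ``one $\varsigma_1$ works for every $\mathbf{X}_2$'' is exactly the multi-constraint losslessness you correctly said is unavailable. Deferring to \cite{c8} does not close this.

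In fact the gap cannot be closed, because the ``only if'' part is false for general $P$. Take scalar $X_i=x_i$, $\zeta_i=1$, $\mathbf{E}=\mathrm{diag}(t,\mathbf{I}_m)$, $\mathbf{Z}_i=[1,\ \mathbf{0}_{1\times m}]$, $\mathbf{Y}_i=-[0,\ \mathbf{c}_i^{\mathrm{H}}]$, and $\mathbf{C}=[\mathbf{c}_1,\dots,\mathbf{c}_P]$. The robust LMI then reads
\begin{equation*}
\begin{bmatrix} t & (\mathbf{C}\mathbf{x})^{\mathrm{H}}\\ \mathbf{C}\mathbf{x} & \mathbf{I}_m\end{bmatrix}\succeq\mathbf{0}\quad \forall\,|x_i|\le 1,
\end{equation*}
that is, $t\ge\max_{|x_i|\le1}\|\mathbf{C}\mathbf{x}\|^2$. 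The block LMI instead reduces to $t\ge\sum_i\varsigma_i$ and $\mathrm{diag}(\varsigma_1,\dots,\varsigma_P)\succeq\mathbf{C}^{\mathrm{H}}\mathbf{C}$. In other words, $t$ must be at least the SDP-relaxation value of $\max_{|x_i|=1}\mathbf{x}^{\mathrm{H}}\mathbf{C}^{\mathrm{H}}\mathbf{C}\mathbf{x}$. That relaxation is not tight in general: the SDP value can exceed the true maximum by a factor approaching $4/\pi$. So for suitable $\mathbf{C}$ and $t$ strictly between the two values, the robust LMI holds but no admissible $\varsigma_i\ge 0$ exist. The paper's appeal to Lemma~\ref{S-procedure} has the same defect, since that lemma asserts losslessness for $P$ constraints.

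What is true, and is enough for the paper, is the following:
\begin{itemize}
\item Sufficiency holds for every $P$, and this is the only direction the design uses.
\item Necessity holds for $P=1$, as you proved.
\item Necessity also holds for $P=2$, which covers the FCU case. Apply the two-constraint complex S-lemma to $(\mathbf{u},\mathbf{x}_1,\mathbf{x}_2)$. It is lossless because the image of $\mathbb{C}^n$ under three Hermitian forms is a convex cone (extreme points of a complex SDP with three equality constraints have rank one). A Slater point is $\mathbf{x}_i=\mathbf{0}$ with $\mathbf{Z}_1\mathbf{u}\neq\mathbf{0}\neq\mathbf{Z}_2\mathbf{u}$.
\end{itemize}
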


By utilizing Lemma \ref{S-procedure}, it can be proved, and a comprehensive proof is available in \cite{c9}.

For the utilization of Lemma \ref{sign-definiteness}, the following parameters are selected (It is worth mentioning that the notation $i $ used in Lemma \ref{sign-definiteness} is omitted because $P =1$.) for each of the constraints in \eqref{eq:LMI-IN-2} as
\begin{equation*}
  \mathbf{E}=\left[ \begin{matrix}
    \iota _k-\sigma _{k}^{2}&		\hat{\mathbf{r}}_{k}^{\mathrm{H}}\\
    \hat{\mathbf{r}}_k&		\mathbf{I}\\
  \end{matrix} \right] ,\mathbf{Y}=-\left[ \mathbf{0}\,\,\mathbf{W}_{-k}^{\mathrm{H}} \right] ,\mathbf{Z}=\left[ \boldsymbol{\theta }\,\,\mathbf{0} \right] ,\mathbf{X}=\bigtriangleup \mathbf{H}_k.
\end{equation*}

Then, by introducing slack variables $\boldsymbol{\mu }_{\mathrm{H}}=\left[ \mu _{\mathrm{H},1},\cdots ,\mu _{\mathrm{H},K} \right] ^{\mathrm{T}}\ge 0$, using Lemma \ref{sign-definiteness}, constraints \eqref{eq:INs2} is converted into the following equivalent LMI,
\begin{equation}
  \left[ \begin{matrix}
    \iota _k-\sigma _{k}^{2}-\mu _{\mathrm{H},k}N&		\widehat{\mathbf{r}}_{k}^{\mathrm{H}}&		0_{1\times M}\\
    \widehat{\mathbf{r}}_k&		\mathbf{I}_{(K-1)}&		\xi _{\mathrm{H},k}\mathbf{W}_{-k}^{\mathrm{H}}\\
    0_{N\times 1}&		\xi _{\mathrm{H},k}\mathbf{W}_{-k}&		\mu _{\mathrm{H},k}\mathbf{I}_M\\
  \end{matrix} \right] \succeq \mathbf{0},\forall k\in \mathcal{K},
  \label{eq:LMI-IN-1-1}
\end{equation}

Therefore, the problem $\mathcal{P}_1$ is approximately rewritten as
\begin{subequations}   \label{eq:Pro2}
  \begin{align}
    \mathcal{P}_2:\,\,  \underset{\mathbf{W},\boldsymbol{\theta },\boldsymbol{\iota },\boldsymbol{\vartheta }_{\mathrm{H}},\boldsymbol{\mu }_{\mathrm{H}}\,\,}{\min}\,\, & \left\| \mathbf{W} \right\| _{F}^{2}     \\
    \mathrm{s}.\mathrm{t}. \,\,\,\,\,\,\,\,\,\,\,\, &  \eqref{eq:LMI-S-partial}, \eqref{eq:LMI-IN-1-1},\eqref{eq:Pro1c},  \label{eq:Pro2b} \\
    & \boldsymbol{\vartheta }_{\mathrm{H}}\ge 0,\boldsymbol{\mu }_{\mathrm{H}}\ge 0.  \label{eq:Pro2c}
  \end{align}
\end{subequations}
$\mathcal{P}_2$ remains NC and challenging to solve due to the coupling between  $\mathbf{W}$ and $\boldsymbol{\theta }$ in $\mathbf{A}_k$, $\mathbf{a}_k$, and $\widehat{\mathbf{r}}_k$. To overcome this, we decouple it into two tractable subproblems by using the AO method, which can be solved iteratively to obtain optimal solutions. Specifically, we begin by minimizing the transmit power with the reflection beamforming $\boldsymbol{\theta}$ fixed, which transforms the problem into a convex one in terms of $\mathbf{W}$. The convex problem can be tackled with the help of the CVX tool. After $\mathbf{W}$ is fixed, we use the penalty CCP method to address the NC problem of $\boldsymbol{\theta}$. For a given $\boldsymbol{\theta}$, the corresponding subproblem for $\mathbf{W}$ is expressed as
\begin{subequations}
\begin{align} \label{eq:Pro3}
  \mathcal{P}_3:\,\, \mathbf{W}^{\left( n+1 \right)}=\mathrm{arg}\underset{\mathbf{W},\boldsymbol{\iota },\boldsymbol{\vartheta }_{\mathrm{H}},\boldsymbol{\mu }_{\mathrm{H}}\,\,}{\min}\,\, & \left\| \mathbf{W} \right\| _{F}^{2} \\
  \mathrm{s}.\mathrm{t}. \, \,\,\,\,\,\,\,\,\,  & \eqref{eq:LMI-S-partial}, \eqref{eq:LMI-IN-1-1},  \eqref{eq:Pro2c},  \label{eq:Pro3b}
\end{align}
\end{subequations}
where $\mathbf{W}^{( n+1 )}$ denotes the solution achieved as optimal in the $(n+1)$-th iteration. $\mathcal{P}_3$ represents a semidefinite program (SDP), which can be addressed via the CVX tool.

Next, with $\mathbf{W}$ fixed, the subproblem related to $\boldsymbol{\theta}$ becomes a feasibility-check problem. To improve the solution that is converged during the optimization of $\boldsymbol{\theta}$, as outlined in \cite{c11,c12}, the inequalities for the useful signal power in \eqref{eq:INs1} are adjusted by incorporating slack variables $\boldsymbol{\alpha }=\left[ \alpha _1,\dots ,\alpha _K \right] ^{\mathrm{T}}\ge 0$ and can be reformulated by
\begin{equation}
  | ( \mathbf{h}_{\mathrm{D},k}^{\mathrm{H}}+\boldsymbol{\theta }^{\mathrm{H}}\mathbf{H}_k ) \mathbf{w}_k |^2\ge \iota _k( 2^{R_{\mathrm{th}}}-1 ) +\alpha _k,\forall k\in \mathcal{K}.
  \label{eq:SINR-slac}
\end{equation}
Therefore, the LMIs \eqref{eq:LMI-S-partial} can be rewritten as
\begin{equation}
  \left[ \begin{matrix}
    \vartheta _{\mathrm{H},k}\mathbf{I}_{NM}+\mathbf{A}_k&		\mathbf{a}_k\\
    \mathbf{a}_{k}^{\mathrm{T}}&		C_{k}^{p}-\alpha _k\\
  \end{matrix} \right] \succeq \mathbf{0},\forall k\in \mathcal{K},
  \label{eq:LMI-S-partial-1}
\end{equation}
Furthermore, we notice that only the $K \times K$ submatrix in the upper left corner of \eqref{eq:LMI-IN-1-1} is affected by $\boldsymbol{\theta}$. As a result, the dimension of the LMIs in \eqref{eq:LMI-IN-1-1} is reduced from  $(K+M)\times(K+M)$ to $K\times K$ as shown below:
\begin{equation}
  \left[ \begin{matrix}
    \iota _k-\sigma _{k}^{2}-\mu _{\mathrm{H},k}N&		\widehat{\mathbf{r}}_{k}^{\mathrm{H}}\\
    \widehat{\mathbf{r}}_k&		\mathbf{I}_{(K-1)}\\
  \end{matrix} \right] \succeq \mathbf{0},\forall k\in \mathcal{K}.
  \label{eq:LMI-IN-partial-1}
\end{equation}
By combining \eqref{eq:LMI-S-partial-1}  and \eqref{eq:LMI-IN-partial-1}, the subproblem of $\boldsymbol{\theta}$ can be expressed as
\begin{subequations}
  \begin{align}   \label{eq:Pro4}
    \mathcal{P}_4:\,\,  \max_{\boldsymbol{\alpha },\boldsymbol{\theta },\boldsymbol{\iota },\boldsymbol{\vartheta }_{\mathrm{H}},\boldsymbol{\mu }_{\mathrm{H}}} & \sum_{k=1}^K{\alpha _k}     \\
    \mathrm{s}.\mathrm{t}. \,\,\,\,\,\,\,\,\, &  \eqref{eq:LMI-S-partial-1}, \eqref{eq:LMI-IN-partial-1}, \eqref{eq:Pro1c}, \eqref{eq:Pro2c},  \label{eq:Pro4b} \\
    & \boldsymbol{\alpha } \ge 0.  \label{eq:Pro4c}
  \end{align}
\end{subequations}
It should be emphasized that solving  $\mathcal{P}_4$ can lead to a lower objective value than  $\mathcal{P}_3$. Further details on this can be found in \cite{c12}.

We observe that the non-convexity of  $\mathcal{P}_4$ is caused by unit-modulus constraints in \eqref{eq:Pro1c}. Hence, we utilize the penalty CCP method \cite{c5} to transform these constraints into convex forms. Following the method in \cite{c11}, \eqref{eq:Pro1c} is  first expressed as $1\le | \theta _n |^2\le 1,\forall n\in \mathcal{N} $, and the resulting constraints, which involve NC components, are linearized as $| \theta _{n}^{[ t ]} |^2-2\mathfrak{R} ( \theta _{n}^{*}\theta _{n}^{[ t ]} ) \le -1,\forall n\in \mathcal{N} $, with $\theta _{n}^{[ t ]}$ held fixed. This allows for the reformulation of  $\mathcal{P}_4$ as
\begin{subequations}
  \begin{align}   \label{eq:Pro5}
    \mathcal{P}_5:
    &\max_{\boldsymbol{\alpha },\boldsymbol{\theta },\mathbf{b},\boldsymbol{\iota },\boldsymbol{\vartheta }_{\mathrm{H}},\boldsymbol{\mu }_{\mathrm{H}}} \sum_{k=1}^K{\alpha _k-\mathrm{\nu}^{\left[ \mathrm{t} \right]}\sum_{n=1}^{2N}{b_n}}    \\
    & \ \ \ \ \ \mathrm{s}.\mathrm{t}. \thinspace\thinspace \eqref{eq:LMI-S-partial-1}, \eqref{eq:LMI-IN-partial-1}, \eqref{eq:Pro2c}, \eqref{eq:Pro4c},  \label{eq:Pro5b} \\
    & \ \ \ \ \ \ \ \ \ \   | \theta _{n}^{\left[ t \right]} |^2-2\mathfrak{R} ( \theta _{n}^{*}\theta _{n}^{\left[ t \right]} ) \le b_n-1,\forall n\in \mathcal{N},  \label{eq:Pro5c} \\
    & \ \ \ \ \ \ \ \ \ \  | \theta _n |^2\le 1+b_{N+n},\forall n\in \mathcal{N},  \label{eq:Pro5d} \\
    & \ \ \ \ \ \ \ \ \ \  \mathbf{b}\ge 0.  \label{eq:Pro5e}
  \end{align}
\end{subequations}
where $\mathbf{b}=[ b_1,\cdots,b_{2N} ] ^{\mathrm{T}}$ represents slack variables introduced to convert the unit-modulus constraints into equivalent linear constraints. The term $\left\| \mathbf{b} \right\| _1$ functions as a penalty within the objective function.  Furthermore, the regularization parameter
$\mathrm{\nu}^{\left[ \mathrm{t} \right]}$ scales
$\| \mathbf{b} \| _1$ to control feasibility of the constraint.

$\mathcal{P}_5$ is an SDP, and its can be derived using the CVX tool to obtain the suboptimal solutions. Furthermore, the procedure for the proposed penalty CCP algorithm related to reflecting beamforming is detailed in Algorithm~\ref{PCCP}. There some points are worth our attention: 1) The effectiveness of constraints \eqref{eq:Pro1c} in the initial $\mathcal{P}_4$ holds when the condition $\left\| \mathbf{b} \right\| _1\le \mathcal{Y} $ is met, provided that $\mathcal{Y}$ remains sufficiently small; 2) To mitigate numerical problems, we introduce a maximum value $\nu _{\max}$. Specifically, as $\nu _{\max}$ increases, once the iteration satisfies the stopping criterion $\| \theta ^{[ t ]}-\theta ^{[ t-1 ]} \| _1\le \epsilon _{\mathrm{stop}} $, it may be impossible to find a solution that satisfies $\| \mathbf{b} \| _1\le \mathcal{Y}$; 3) The convergence of the Algorithm~\ref{PCCP} is governed by the stopping criterion $\| \theta ^{[ t ]}-\theta ^{[ t-1 ]} | _1\le \epsilon _{\mathrm{stop}} $;
5)  As pointed out in \cite{c5}, a solution that is feasible for solution for $\mathcal{P}_5$ might not fulfill the feasibility requirements for $\mathcal{P}_4$. Thus, the feasibility of $\mathcal{P}_4$ is ensured by restricting the number of iterations $T_{\max}$. If this limit is reached, the iteration process is restarted with a new initial point. Subsequently, $\mathcal{P}_2$ is tackled by iteratively solving  $\mathcal{P}_3$ and $\mathcal{P}_4$ within the framework of the AO algorithm. It is important to note that the fixed point $\boldsymbol{\theta}^{[ t ]}$ in constraint \eqref{eq:Pro5c} is modified at every iteration of Algorithm~\ref{PCCP}, which corresponds to $\mathrm{\nu}^{[ \mathrm{t} ]}$.
\begin{algorithm}[!t]
  \caption{Penalty CCP Algorithm for Optimizing $\boldsymbol{\theta}$}
  \label{PCCP}
  \begin{algorithmic}[1] \REQUIRE Initialize
  $\boldsymbol{\theta}^{\left[ 0\right] }$, $t=0$, $\gamma^{\left[ 0\right] }>1$.
  \REPEAT
  \IF {$t<T_{\max}$ }
  \STATE Update $\boldsymbol{\theta}^{[t+1]}$ according to the conditions of $\mathcal{P}_5$;
  \STATE $\nu^{[t+1]}=\min \{ \gamma\nu^{\left[ t\right] }, \nu _{\max}\}$;
  \STATE $t=t+1$;
  \ELSE
  \STATE Begin with a new random $\boldsymbol{\theta}^{\left[ 0\right] }$, reset $\gamma^{\left[ 0\right] }$ to be greater than $1$, and set  the iteration index $t=0$.
  \ENDIF
  \UNTIL $||\mathbf{b}||_{1}\leq\chi$ and $||\boldsymbol{\theta}^{\left[ t\right] }-\boldsymbol{\theta}^{\left[ t-1\right] }||_{1}\leq  \epsilon _{\mathrm{stop}} $.
  \STATE Output $\boldsymbol{\theta}^{ \left( n+1\right)}=\boldsymbol{\theta}^{\left[ t\right] }$.
  \end{algorithmic}
\end{algorithm}

\subsection{Scenario 2: FCU}
This section discusses the design of robust beamforming, assuming imperfections in both the DCSIB and CBRUB. By incorporating the FCU as defined in \eqref{eq:cc} and \eqref{eq:dc}, and defining $\mathcal{I} _{k}^{f}\triangleq \left\{ \forall \left\| \bigtriangleup \mathbf{h}_{\mathrm{D},k} \right\| _2\le {\xi _{\mathrm{D}}}_{,k},\forall \left\| \bigtriangleup \mathbf{H}_k \right\| _F\le \xi _{\mathrm{H},k} \right\}$, the constraints in \eqref{eq:Pro1b} can be extended as follows:
\begin{equation} \label{eq:full-error}
  R_k\left( \mathbf{W},\boldsymbol{\theta } \right) \ge R_{\mathrm{th}},\mathcal{I} _{k}^{f},\forall k\in \mathcal{K},
\end{equation}
which is the equivalent to
\begin{align}
  &| ( \mathbf{h}_{\mathrm{D},k}^{\mathrm{H}}+\boldsymbol{\theta }^{\mathrm{H}}\mathbf{H}_k ) \mathbf{w}_k |^2\ge \iota _k\left( 2^{R_{\mathrm{th}}}-1 \right) ,\mathcal{I} _{k}^{f},\forall k\in \mathcal{K}, \label{eq:full-SINR-signal} \\
  &\|( \mathbf{h}_{\mathrm{D},k}^{\mathrm{H}}+\boldsymbol{\theta }^{\mathrm{H}}\mathbf{H}_k ) \mathbf{W}_{-k} \| _{2}^{2}+\sigma _{k}^{2}\le \iota _k,\mathcal{I} _{k}^{f},\forall k\in \mathcal{K}. \label{eq:full-SINR-interference}
\end{align}

The semi-infinite inequality constraints that are NC mentioned above can be solved in a manner similar to Scenario 1. Particularly, the following lemma provides the linear approximation of the useful signal power, as outlined in \eqref{eq:full-SINR-signal}.
\begin{lemma} \label{lemma-full}
  Let $\mathbf{w}_{k}^{\left( n \right)}$ and $\boldsymbol{\theta }_{k}^{\left( n \right)}$ represent the optimal solutions obtained during the $n$-th iteration. By substituting $\mathbf{h}_{\mathrm{D},k}=\widehat{\mathbf{h}}_{\mathrm{D},k}+\bigtriangleup \mathbf{h}_{\mathrm{D},k}$ and $\mathbf{H}_k=\widehat{\mathbf{H}}_k+\bigtriangleup \mathbf{H}_k$ into the formulation of the useful signal power as shown in \eqref{eq:full-SINR-signal}, the resulting term $| [ ( \widehat{\mathbf{h}}_{\mathrm{D},k}+\bigtriangleup \mathbf{h}_{\mathrm{D},k}) +\boldsymbol{\theta }^{\mathrm{H}}( \widehat{\mathbf{H}}_k+\bigtriangleup \mathbf{H}_k ) ] \mathbf{w}_k |^2$ is bounded below linearly at the point $( \mathbf{w}_{k}^{\left( n \right)},\boldsymbol{\theta }_{k}^{\left( n \right)} ) $, as shown below
  \begin{equation}  \label{eq:lemma-full}
    \mathbf{x}_{k}^{\mathrm{H}}\widetilde{\mathbf{A}}_k\mathbf{x}_k+2\mathfrak{R} \left\{ \widetilde{\mathbf{e}}_{k}^{\mathrm{H}}\mathbf{x}_k \right\} +\widetilde{e}_k,
  \end{equation}
  where
  \begin{align*}
    \widetilde{\mathbf{A}}_k & =\mathbf{J}_k+\mathbf{J}_{k}^{\mathrm{H}}-\mathbf{L}_k,  \\
    \mathbf{J}_k & =\left[ \begin{array}{c}
    \mathbf{w}_{k}^{\left( n \right)}\\
    {\mathbf{w}}_{k}^{\left( n \right)}\otimes \boldsymbol{\theta }^{\left( n \right),*}\\
    \end{array} \right] \left[ \begin{matrix}
    \mathbf{w}_{k}^{\mathrm{H}}&\mathbf{w}_{k}^{\mathrm{H}}\otimes \boldsymbol{\theta }^{\mathrm{T}}
    \end{matrix} \right], \\
    \mathbf{L}_k & =\left[ \begin{array}{c}
    \mathbf{w}_{k}^{\left( n \right)}\\
    \mathbf{w}_{k}^{\left( n \right)}\otimes \boldsymbol{\theta }^{\left( n \right),*}\\
    \end{array} \right] \left[ \begin{matrix}
    \mathbf{w}_{k}^{\left( n \right),\mathrm{H}}&		\mathbf{w}_{k}^{\left( n \right),\mathrm{H}}\otimes \boldsymbol{\theta }^{\left( n \right),\mathrm{T}}\\
    \end{matrix} \right], \\
    \widetilde{\mathbf{e}}_k&=\mathbf{j}_{1,k}+\mathbf{j}_{2,k}-\mathbf{l}_k, \\
    \mathbf{j}_{1,k}&=\left[ \begin{array}{c}
    \mathbf{w}_k\mathbf{w}_{k}^{\left(n\right),\mathrm{H}}( \widehat{\mathbf{h}}_{\mathrm{D},k}+\widehat{\mathbf{H}}_{k}^{\mathrm{H}}\boldsymbol{\theta }^{\left(n\right)} )\\
    \mathrm{vec}^*(\boldsymbol{\theta }( \widehat{\mathbf{h}}_{\mathrm{D},k}+\boldsymbol{\theta }^{\left(n\right),\mathrm{H}}\widehat{\mathbf{H}}_k) \mathbf{w}_{k}^{\left(n\right)}\mathbf{w}_{k}^{\mathrm{H}})\\
    \end{array} \right], \\
    \mathbf{j}_{2,k}&=\left[ \begin{array}{c}
    \mathbf{w}_{k}^{\left(n\right)}\mathbf{w}_{k}^{\mathrm{H}}( \widehat{\mathbf{h}}_{\mathrm{D},k}+\widehat{\mathbf{H}}_{k}^{\mathrm{H}}\boldsymbol{\theta } )\\
    \mathrm{vec}^*(\boldsymbol{\theta }^{\left(n\right)}( \widehat{\mathbf{h}}_{\mathrm{D},k}^{\mathrm{H}}+\boldsymbol{\theta }^{\mathrm{H}}\widehat{\mathbf{H}}_k ) \mathbf{w}_k\mathbf{w}_{k}^{\left(n\right),\mathrm{H}})\\
    \end{array} \right], \\
    \mathbf{l}_k&=\left[ \begin{array}{c}
    \mathbf{w}_{k}^{\left(n\right)}\mathbf{w}_{k}^{\left(n\right),\mathrm{H}}( \widehat{\mathbf{h}}_{\mathrm{D},k}+\widehat{\mathbf{H}}_{k}^{\mathrm{H}}\boldsymbol{\theta }^{\left(n\right)} )\\
    \mathrm{vec}^*(\boldsymbol{\theta }^{\left(n\right)}( \widehat{\mathbf{h}}_{\mathrm{D},k}^{\mathrm{H}}+\boldsymbol{\theta }^{\left(n\right),\mathrm{H}}\widehat{\mathbf{H}}_k ) \mathbf{w}_{k}^{\left(n\right)}\mathbf{w}_{k}^{\left(n\right),\mathrm{H}})\\
    \end{array} \right], \\
    \widetilde{e}_k&=2\mathfrak{R} \left( j_k \right) -l_k, \\
    j_k&=( \widehat{\mathbf{h}}_{\mathrm{D},k}^{\mathrm{H}}+\boldsymbol{\theta }^{\left(n\right),\mathrm{H}}\widehat{\mathbf{H}}_k ) \mathbf{w}_{k}^{\left(n\right)}\mathbf{w}_{k}^{\mathrm{H}}( \widehat{\mathbf{h}}_{\mathrm{D},k}+\widehat{\mathbf{H}}_{k}^{\mathrm{H}}\boldsymbol{\theta } ), \\
    l_k&=( \widehat{\mathbf{h}}_{\mathrm{D},k}^{\mathrm{H}}+\boldsymbol{\theta }^{\left(n\right),\mathrm{H}}\widehat{\mathbf{H}} ) \mathbf{w}_{k}^{\left(n\right)}\mathbf{w}_{k}^{\mathrm{H}}( \widehat{\mathbf{h}}_{\mathrm{D},k}+\widehat{\mathbf{H}}_{k}^{\mathrm{H}}\boldsymbol{\theta }^{\left(n\right)} ),\\
    \mathbf{x}_k&=\left[ \begin{matrix}
    \bigtriangleup \widehat{\mathbf{h}}_{\mathrm{D},k}^{\mathrm{H}}&		\mathrm{vec}^{\mathrm{H}}( \bigtriangleup \widehat{\mathbf{H}}_{k}^{*} )\\
    \end{matrix} \right] ^{\mathrm{H}}.
  \end{align*}
\end{lemma}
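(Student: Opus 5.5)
The approach mirrors Lemma \ref{le1}: the useful signal power is a squared modulus of a quantity bilinear in the optimization variables, and its first-order Taylor expansion at the current iterate is a global lower bound. The key tool is the elementary inequality
\begin{equation*}
  |a|^2 \ge 2\mathfrak{R}\{a^{(n),*} a\} - |a^{(n)}|^2,
\end{equation*}
which follows from $|a-a^{(n)}|^2\ge 0$. More generally, for vectors $\mathbf{u}$ and $\mathbf{v}$,
\begin{equation*}
  \mathbf{u}^{\mathrm{H}}\mathbf{v}\mathbf{v}^{\mathrm{H}}\mathbf{u} \ge 2\mathfrak{R}\{\mathbf{u}^{(n),\mathrm{H}}\mathbf{v}^{(n)}\mathbf{v}^{\mathrm{H}}\mathbf{u}\} - \mathbf{u}^{(n),\mathrm{H}}\mathbf{v}^{(n)}\mathbf{v}^{(n),\mathrm{H}}\mathbf{u}^{(n)}.
\end{equation*}
This is what produces the ``$\mathbf{J}+\mathbf{J}^{\mathrm{H}}-\mathbf{L}$'' and ``$\mathbf{j}_1+\mathbf{j}_2-\mathbf{l}$'' patterns in the statement.

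\textbf{Step 1: isolate the error vector.} Substitute $\mathbf{h}_{\mathrm{D},k}=\widehat{\mathbf{h}}_{\mathrm{D},k}+\bigtriangleup\mathbf{h}_{\mathrm{D},k}$ and $\mathbf{H}_k=\widehat{\mathbf{H}}_k+\bigtriangleup\mathbf{H}_k$. Then write the scalar as $a_k=\widehat a_k+\mathbf{x}_k^{\mathrm{H}}\mathbf{c}_k$, where
\begin{equation*}
  \widehat a_k=(\widehat{\mathbf{h}}_{\mathrm{D},k}^{\mathrm{H}}+\boldsymbol{\theta}^{\mathrm{H}}\widehat{\mathbf{H}}_k)\mathbf{w}_k .
\end{equation*}
Using $\boldsymbol{\theta}^{\mathrm{H}}\bigtriangleup\mathbf{H}_k\mathbf{w}_k=\mathrm{vec}^{\mathrm{T}}(\bigtriangleup\mathbf{H}_k)(\mathbf{w}_k\otimes\boldsymbol{\theta}^*)$, the coefficient vector is
\begin{equation*}
  \mathbf{c}_k=[\mathbf{w}_k^{\mathrm{T}},\,(\mathbf{w}_k\otimes\boldsymbol{\theta}^*)^{\mathrm{T}}]^{\mathrm{T}},
\end{equation*}
and $\mathbf{x}_k$ stacks $\bigtriangleup\mathbf{h}_{\mathrm{D},k}$ and $\mathrm{vec}(\bigtriangleup\mathbf{H}_k^*)$, consistent with the statement. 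Define $\widehat a_k^{(n)}$ and $\mathbf{c}_k^{(n)}$ in the same way at the iterate $(\mathbf{w}_k^{(n)},\boldsymbol{\theta}^{(n)})$.

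\textbf{Step 2: expand and group.} Apply the inequality with $a=a_k$ and $a^{(n)}=a_k^{(n)}$, where the error $\mathbf{x}_k$ is common to both (it is not linearized). Expand $2\mathfrak{R}\{a_k^{(n),*}a_k\}-|a_k^{(n)}|^2$ and collect terms by degree in $\mathbf{x}_k$:
\begin{itemize}
  \item The quadratic terms are $2\mathfrak{R}\{\mathbf{x}_k^{\mathrm{H}}\mathbf{c}_k\mathbf{c}_k^{(n),\mathrm{H}}\mathbf{x}_k\}-\mathbf{x}_k^{\mathrm{H}}\mathbf{c}_k^{(n)}\mathbf{c}_k^{(n),\mathrm{H}}\mathbf{x}_k$. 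Since $2\mathfrak{R}\{\mathbf{x}^{\mathrm{H}}\mathbf{M}\mathbf{x}\}=\mathbf{x}^{\mathrm{H}}(\mathbf{M}+\mathbf{M}^{\mathrm{H}})\mathbf{x}$, this gives $\widetilde{\mathbf{A}}_k=\mathbf{J}_k+\mathbf{J}_k^{\mathrm{H}}-\mathbf{L}_k$, with $\mathbf{J}_k=\mathbf{c}_k^{(n)}\mathbf{c}_k^{\mathrm{H}}$ and $\mathbf{L}_k=\mathbf{c}_k^{(n)}\mathbf{c}_k^{(n),\mathrm{H}}$, matching the stated block forms.
  \item The linear terms are cross terms between $\widehat a_k$, $\widehat a_k^{(n)}$ and $\mathbf{x}_k^{\mathrm{H}}\mathbf{c}_k$, $\mathbf{x}_k^{\mathrm{H}}\mathbf{c}_k^{(n)}$. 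Writing them as $2\mathfrak{R}\{\widetilde{\mathbf{e}}_k^{\mathrm{H}}\mathbf{x}_k\}$ gives the three pieces $\mathbf{j}_{1,k}$, $\mathbf{j}_{2,k}$ and $\mathbf{l}_k$.
  \item The constant terms give $2\mathfrak{R}\{j_k\}-l_k$.
\end{itemize}

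\textbf{Main obstacle.} The main work is the bookkeeping in the linear terms. Each scalar $\widehat a^{*}\,\mathbf{x}^{\mathrm{H}}\mathbf{c}$ must be rewritten in the form $\widetilde{\mathbf{e}}^{\mathrm{H}}\mathbf{x}$, using $\mathrm{vec}(\mathbf{ABC})=(\mathbf{C}^{\mathrm{T}}\otimes\mathbf{A})\mathrm{vec}(\mathbf{B})$ and $\mathrm{Tr}(\mathbf{A}^{\mathrm{H}}\mathbf{B})=\mathrm{vec}^{\mathrm{H}}(\mathbf{A})\mathrm{vec}(\mathbf{B})$. This yields blocks such as $\mathrm{vec}^*(\boldsymbol{\theta}\,\widehat a^{(n)}\mathbf{w}_k^{\mathrm{H}})$. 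Conjugations and the ordering of $\mathbf{w}$ versus $\mathbf{w}^{(n)}$ must be tracked carefully to match the statement up to its notational conventions. I would verify each block by checking the case $\bigtriangleup\mathbf{H}_k=\mathbf{0}$, which recovers a direct-channel-only expansion, and the case $\bigtriangleup\mathbf{h}_{\mathrm{D},k}=\mathbf{0}$, which should recover Lemma \ref{le1}.

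Finally, equality holds at $(\mathbf{w}_k^{(n)},\boldsymbol{\theta}^{(n)})$ for every $\mathbf{x}_k$, so the bound is tight at the iterate. This is the property that makes it a valid surrogate for constraint \eqref{eq:full-SINR-signal}.
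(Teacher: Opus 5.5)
Your proposal is correct and follows essentially the same route as the paper's Appendix~B: apply the first-order Taylor bound \eqref{eq:A2} with the true (error-containing) channels in both the variable and the iterate, substitute the error decompositions, and collect the quadratic, linear and constant terms in the stacked error vector to read off $\mathbf{J}_k$, $\mathbf{L}_k$, $\mathbf{j}_{1,k}$, $\mathbf{j}_{2,k}$, $\mathbf{l}_k$, $j_k$ and $l_k$. Your bookkeeping via $\mathbf{c}_k$ and $\mathbf{c}_k^{(n)}$ is a tidier version of the paper's term-by-term expansion. It gives $\widetilde{\mathbf{e}}_k=\widehat a_k^{(n),*}\mathbf{c}_k+\widehat a_k^{*}\mathbf{c}_k^{(n)}-\widehat a_k^{(n),*}\mathbf{c}_k^{(n)}$ and constant $2\mathfrak{R}\{\widehat a_k^{(n),*}\widehat a_k\}-|\widehat a_k^{(n)}|^2$, which also shows that the $\mathbf{w}_k^{\mathrm{H}}$ in the stated $l_k$ should read $\mathbf{w}_k^{(n),\mathrm{H}}$.
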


\begin{proof}
  Kindly consult Appendix \ref{appb}.
\end{proof}

By using Lemma \ref{lemma-full}, constraints \eqref{eq:full-SINR-signal} can be expressed as
\begin{equation} \label{eq:lemma-full-1}
  \mathbf{x}_{k}^{\mathrm{H}}\widetilde{\mathbf{A}}_k\mathbf{x}_k+2\mathfrak{R} \{ \widetilde{\mathbf{e}}_{k}^{\mathrm{H}}\mathbf{x}_k \} +\widetilde{e}_k\ge \iota _k( 2^{R_{\mathrm{th}}}-1) ,\mathcal{I} _{k}^{f},\forall k\in \mathcal{K}.
\end{equation}

For the application of Lemma \ref{S-procedure}, expressing $\mathcal{I} _{k}^{f}$ concerning the quadratic expressions presented below, is beneficial:
\begin{equation*}
  \mathcal{I} _{k}^{f}\triangleq \begin{cases}
    \mathbf{x}_{k}^{\mathrm{H}}\left[ \begin{matrix}
    \mathbf{I}_M&		\mathbf{0}\\
    \mathbf{0}&		\mathbf{0}\\
  \end{matrix} \right] \mathbf{x}_k-\xi _{\mathrm{D},k}^{2}\le 0,\\
    \mathbf{x}_{k}^{\mathrm{H}}\left[ \begin{matrix}
    \mathbf{0}&		\mathbf{0}\\
    \mathbf{0}&		\mathbf{I}_{NM}\\
  \end{matrix} \right] \mathbf{x}_k-\xi _{\mathrm{H},k}^{2}\le 0.\\
  \end{cases}
\end{equation*}

Thus, through the use of slack variables $\boldsymbol{\chi }_{\mathrm{D}}=[ \chi _{\mathrm{D},1},\cdots ,\chi _{\mathrm{D},K} ] ^{\mathrm{T}}\ge 0$ and $\boldsymbol{\chi }_{\mathrm{H}}=[ \chi _{\mathrm{H},1},\cdots,\chi _{\mathrm{H},K} ] ^{\mathrm{T}}\ge 0$ and using Lemma \ref{S-procedure}, the constraints in \eqref{eq:full-SINR-signal} are expressed as the following equivalent LMIs:
\begin{equation} \label{eq:full-SINR-signal-LMI}
  \left[ \begin{matrix}
    \widetilde{\mathbf{A}}_k+\left[ \begin{matrix}
    \chi _{\mathrm{D},k}\mathbf{I}_M&		0\\
    0&		\chi _{\mathrm{H},k}\mathbf{I}_{NM}\\
     \end{matrix} \right]&		\widetilde{\mathbf{e}}_k\\
    \widetilde{\mathbf{e}}_{k}^{\mathrm{H}}&		C_{k}^{f}\\
  \end{matrix} \right] \succeq \mathbf{0},\forall k\in \mathcal{K},
\end{equation}
where $C_{k}^{f}=\widetilde{e}_k-\iota _k(2^{R_{\mathrm{th}}}-1)-\chi _{\mathrm{D},k}\xi _{\mathrm{D},k}^{2}-\chi _{\mathrm{H},k}\xi _{\mathrm{H},k}^{2}$.

In the following, we substitute $\mathbf{h}_{\mathrm{D},k}=\widehat{\mathbf{h}}_{\mathrm{D},k}+\bigtriangleup \mathbf{h}_{\mathrm{D},k}$ and $\mathbf{H}_k=\widehat{\mathbf{H}}_k+\bigtriangleup \mathbf{H}_k$ into the equivalent  matrix inequality representing the INs power in \eqref{eq:IN-LMI-1-1}, resulting in:
\begin{align}
  &\mathbf{0}  \preceq \left[ \begin{matrix}
    \iota _k-\sigma _{k}^{2}&	\widetilde{\mathbf{r}}_{k}^{\mathrm{H}}\\
    \widetilde{\mathbf{r}}_k&	\mathbf{I}\\
  \end{matrix} \right] \nonumber
  \\
  &+\left[ \begin{matrix}
    \mathbf{0}&		( \bigtriangleup \mathbf{h}_{\mathrm{D},k}^{\mathrm{H}}+\boldsymbol{\theta }^{\mathrm{H}}\bigtriangleup \mathbf{H}_k ) \mathbf{W}_{-k}\\
    \mathbf{W}_{-k}^{\mathrm{H}}( \bigtriangleup \mathbf{h}_{\mathrm{D},k}^{\mathrm{H}}+\bigtriangleup \mathbf{H}_{k}^{\mathrm{H}}\boldsymbol{\theta } )&	\mathbf{0}\\
  \end{matrix} \right] \nonumber
  \\
  &\preceq \left[ \begin{array}{c}
    \mathbf{0}\\
    \mathbf{W}_{-k}^{\mathrm{H}}\\
  \end{array} \right] \left[ \begin{matrix}
    \bigtriangleup \mathbf{h}_{\mathrm{D},k}&		\mathbf{0}\\
  \end{matrix} \right] +\left[ \begin{array}{c}
    \bigtriangleup \mathbf{h}_{\mathrm{D},k}^{\mathrm{H}}\\
    \mathbf{0}\\
  \end{array} \right] \left[ \begin{matrix}
    \mathbf{0}&		\mathbf{W}_{-k}\\
  \end{matrix} \right] \nonumber
  \\
  &+\left[ \begin{array}{c}
    \mathbf{0}\\
    \mathbf{W}_{-k}^{\mathrm{H}}\\
  \end{array} \right] \bigtriangleup \mathbf{H}_{k}^{\mathrm{H}}\left[ \begin{matrix}
    \boldsymbol{\theta }&		\mathbf{0}\\
  \end{matrix} \right] +\left[ \begin{array}{c}
    \boldsymbol{\theta }^{\mathrm{H}}\\
    \mathbf{0}\\
  \end{array} \right] \bigtriangleup \mathbf{H}_k\left[ \begin{matrix}
    \mathbf{0}&		\mathbf{W}_{-k}\\
  \end{matrix} \right] \nonumber
  \\
  &+\left[ \begin{matrix}
    \iota _k-\sigma _{k}^{2}&		\widetilde{\mathbf{r}}_{k}^{\mathrm{H}}\\
    \widetilde{\mathbf{r}}_k&		\mathbf{I}\\
  \end{matrix} \right],
\end{align}
where $\widetilde{\mathbf{r}}_k=( ( \mathbf{h}_{\mathrm{D},k}^{\mathrm{H}}+\boldsymbol{\theta }^{\mathrm{H}}\widehat{\mathbf{H}}_k ) \mathbf{W}_{-k} ) ^{\mathrm{H}}$.

After introducing slack variables $\boldsymbol{\eta }_{\mathrm{D}}=\left[ \eta _{\mathrm{D},1},\cdots ,\eta _{\mathrm{D},K} \right] ^{\mathrm{T}}\ge 0$ and $\boldsymbol{\eta }_{\mathrm{H}}=\left[ \eta _{\mathrm{H},1},\cdots ,\eta _{\mathrm{H},K} \right] ^{\mathrm{T}}\ge 0$. Using Lemma \ref{sign-definiteness}, the constraints in \eqref{eq:full-SINR-interference} are reformulated into the following equivalent LMIs:
\begin{equation} \label{eq:full-SINR-interference-LMI}
   \left[ \begin{matrix}
    \rho _k&		\widetilde{\mathbf{r}}_{k}^{\mathrm{H}}&		\mathbf{0}_{1\times M}&		\mathbf{0}_{1\times M}\\
    \widetilde{\mathbf{r}}_k&		\mathbf{I}_{(K-1)}&		\xi _{\mathrm{H},k}\mathbf{W}_{-k}^{\mathrm{H}}&		\xi _{\mathrm{D},k}\mathbf{W}_{-k}^{\mathrm{H}}\\
    \mathbf{0}_{M\times 1}&		\xi _{\mathrm{H},k}\mathbf{W}_{-k}&		\eta _{\mathrm{H},k}\mathbf{I}_M&		0_{M\times M}\\
    \mathbf{0}_{M\times 1}&		\xi _{\mathrm{D},k}\mathbf{W}_{-k}&		0_{M\times M}&		\eta _{\mathrm{D},k}\mathbf{I}_M\\
  \end{matrix} \right] \succeq \mathbf{0},\forall k\in \mathcal{K},
\end{equation}
where $\rho _k=\iota _k-\sigma _{k}^{2}-\eta _{\mathrm{H},k}N-\eta _{\mathrm{D},k}$.

Using \eqref{eq:full-SINR-signal-LMI} and \eqref{eq:full-SINR-interference-LMI}, the worst-case robust beamforming design problem under FCU is expressed as
\begin{subequations}
  \begin{align}   \label{eq:Pro6}
    \mathcal{P}_6:
    \max\limits _{{\scriptstyle {\mathbf{W},\boldsymbol{\theta },\boldsymbol{\iota },\boldsymbol{\chi }_{\mathrm{D}}\hfill\atop {\scriptstyle \boldsymbol{\chi }_{\mathrm{H}},\boldsymbol{\eta }_{\mathrm{D}},\boldsymbol{\eta }_{\mathrm{H}}  }}}} & \left\| \mathbf{W} \right\| _{F}^{2}     \\
    \mathrm{s}.\mathrm{t}. \thinspace\thinspace\thinspace\thinspace \ \ &  \eqref{eq:full-SINR-signal-LMI}, \eqref{eq:full-SINR-interference-LMI}, \eqref{eq:Pro1c},  \label{eq:Pro6b} \\
    & \boldsymbol{\chi }_{\mathrm{D}}\ge 0,\boldsymbol{\chi }_{\mathrm{H}}\ge 0,\boldsymbol{\eta }_{\mathrm{D}}\ge 0,\boldsymbol{\eta }_{\mathrm{H}}\ge 0.  \label{eq:Pro6c}
  \end{align}
\end{subequations}

$\mathcal{P}_6$ is also NC and involves coupled variables. Its solution follows a similar approach to that of  $\mathcal{P}_2$, and therefore, we omit the detailed derivation for simplicity.

\section{Outage Constrained Robust Beamforming Design For The SCSIE Model}  \label{ocrbd}
In traditional approaches, we often assume the CSI channel estimation error is follow a Gaussian distribution \cite{c14}, but it is actually unbounded in reality. Therefore, the BCSIE model can not accurately represent the real channel error. Therefore, this section introduces the SCSIE model.  In particular, let $\varepsilon _1,\dots,\varepsilon _K\in \left( 0,1 \right]$ be the maximum outage probabilities, and the transmit power minimization problem is given by
\begin{subequations}
  \begin{align}   \label{eq:Pro7}
    \mathcal{P}_7:
    \underset{\mathbf{W},\boldsymbol{\theta }\thinspace\thinspace}{\min} &  \left\| \mathbf{W} \right\| _{F}^{2}     \\
    \mathrm{s}.\mathrm{t}. & \thinspace\thinspace \mathrm{Pr}\left\{ R_k\left( \mathbf{W},\boldsymbol{\theta } \right) \ge R_{\mathrm{th}} \right\} \ge 1-\varepsilon _k,\forall k\in \mathcal{K},  \label{eq:Pro7b} \\
    & \left| \theta _n \right|^2=1,\forall n\in \mathcal{N}.  \label{eq:Pro7c}
  \end{align}
\end{subequations}
The outage constraints in \eqref{eq:Pro7b} are designed to control each user's OP, ensuring that the user can successfully receive its data transmission at a rate of $R_{\mathrm{th}}$ with a success probability no less than $1- \varepsilon_k$.

Solving $\mathcal{P}_7$ is difficult since the OP constraints in \eqref{eq:Pro7b} exhibit non-convexity, which lack simple closed-form solutions, as discussed in \cite{c15}. To overcome this problem, we employ the BTI as an approximation for the constraints, as described in the subsequent lemma.
\begin{lemma} \label{BTI} (Bernstein-Type Inequality (BTI): Lemma \ref{S-procedure} in \cite{c15})
Assume that $f\left( \mathbf{x} \right) =\mathbf{x}^{\mathrm{H}}\mathbf{Qx}+2\mathfrak{R} \left\{ \mathbf{q}^{\mathrm{H}}\mathbf{x} \right\} +q$, where $\mathbf{Q}\in \mathbb{H} ^{n\times n}$, $\mathbf{q}\in \mathbb{C} ^{n\times 1}$, $q\in \mathbb{R} $ and $\mathbf{x}\in \mathbb{C} ^{n\times 1}\sim \mathcal{C} \mathcal{N} (0,\mathbf{I})$. The following approximation holds for any $\varepsilon \in \left[ 0,1 \right] $:
\begin{subequations}\label{eq:outge-BIT}
  \begin{align}
   & \mathrm{Pr}\{\mathbf{x}^{\mathrm{H}}\mathbf{Qx}+2\mathfrak{R} \{\mathbf{q}^{\mathrm{H}}\mathbf{x}\}+q\ge 0\}\ge 1-\varepsilon  \label{eq:outge-BIT-a}\\
  \Rightarrow & \thinspace \mathrm{Pr}\{\mathbf{Q}\}-\sqrt{2\ln\mathrm{(}1/\varepsilon )}x+\ln\mathrm{(}\varepsilon )\lambda _{\max}^{+}(-\mathbf{Q})+q\ge 0 \label{eq:outge-BIT-b}  \\
  \Rightarrow & \left\{ \begin{array}{c} \label{eq:outge-BIT-c}\
    \mathrm{Tr}\{ \mathbf{Q} \} -\sqrt{2\ln\mathrm{(}1/\varepsilon )}x+\ln\mathrm{(}\varepsilon )y+q\ge 0,\\
    \sqrt{||\mathbf{Q}||_{F}^{2}+2||\mathbf{q}||^2}\le x,\\
    y\mathbf{I}+\mathbf{Q}\succeq 0,y\ge 0,
  \end{array}\right.
  \end{align}
  where $\lambda _{\max}^{+}(-\mathbf{Q})=\max ( \lambda _{\max}^{+}(-\mathbf{Q}),0 )$. $x$ and $y$ denote the introduced slack variables.
\end{subequations}
\end{lemma}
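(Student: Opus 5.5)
Before starting, note that the implication arrows must be read from right to left. The chain that is true, and that the paper uses as a \emph{safe} approximation of \eqref{eq:Pro7b}, is \eqref{eq:outge-BIT-c} $\Rightarrow$ \eqref{eq:outge-BIT-b} $\Rightarrow$ \eqref{eq:outge-BIT-a}. Also, the ``$\mathrm{Pr}\{\mathbf{Q}\}$'' in \eqref{eq:outge-BIT-b} should read $\mathrm{Tr}\{\mathbf{Q}\}$. In \eqref{eq:outge-BIT-b}, $x$ stands for $\sqrt{\|\mathbf{Q}\|_F^2+2\|\mathbf{q}\|^2}$, which becomes a free slack variable only in \eqref{eq:outge-BIT-c}. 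The plan is to prove the two implications separately, setting $t=\ln(1/\varepsilon)\ge 0$ throughout.

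\emph{Step 1 (\eqref{eq:outge-BIT-c} $\Rightarrow$ \eqref{eq:outge-BIT-b}).} This is a monotonicity argument.
\begin{itemize}
\item $y\mathbf{I}+\mathbf{Q}\succeq \mathbf{0}$ is equivalent to $y\ge\lambda_{\max}(-\mathbf{Q})$. Together with $y\ge0$ this gives $y\ge\lambda^{+}_{\max}(-\mathbf{Q})$.
\item The second-order cone constraint gives $x\ge\sqrt{\|\mathbf{Q}\|_F^2+2\|\mathbf{q}\|^2}$.
\item In the first inequality of \eqref{eq:outge-BIT-c}, both $x$ and $y$ have nonpositive coefficients, namely $-\sqrt{2t}$ and $\ln\varepsilon=-t$.
\end{itemize}
Replacing $x$ and $y$ by their smallest admissible values therefore only increases the left-hand side. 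This yields \eqref{eq:outge-BIT-b}.

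\emph{Step 2 (\eqref{eq:outge-BIT-b} $\Rightarrow$ \eqref{eq:outge-BIT-a}).} This step carries the real content. I would use the Bernstein-type concentration inequality for Gaussian quadratic forms (Bechar's inequality, or equivalently the Laurent--Massart/Hanson--Wright bound with explicit constants).

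First reduce to real variables. Write $\mathbf{x}=(\mathbf{u}+j\mathbf{v})/\sqrt2$ with $\boldsymbol{\xi}=[\mathbf{u}^{\mathrm T},\mathbf{v}^{\mathrm T}]^{\mathrm T}\sim\mathcal N(\mathbf 0,\mathbf I_{2n})$. Then $f(\mathbf{x})=\boldsymbol{\xi}^{\mathrm T}\bar{\mathbf Q}\boldsymbol{\xi}+2\bar{\mathbf q}^{\mathrm T}\boldsymbol{\xi}+q$, where
\begin{equation*}
\bar{\mathbf Q}=\tfrac12\begin{bmatrix}\mathfrak{R}\{\mathbf Q\}&-\mathrm{Im}\{\mathbf Q\}\\ \mathrm{Im}\{\mathbf Q\}&\mathfrak{R}\{\mathbf Q\}\end{bmatrix}
\end{equation*}
and $\bar{\mathbf q}$ is the correspondingly scaled stacking of $\mathfrak{R}\{\mathbf q\}$ and $\mathrm{Im}\{\mathbf q\}$. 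One then checks three identities:
\begin{itemize}
\item $\mathrm{Tr}\{\bar{\mathbf Q}\}=\mathrm{Tr}\{\mathbf Q\}$;
\item $\|\bar{\mathbf Q}\|_F^2=\tfrac12\|\mathbf Q\|_F^2$ and $\|\bar{\mathbf q}\|^2=\tfrac12\|\mathbf q\|^2$;
\item the eigenvalues of $\bar{\mathbf Q}$ are those of $\mathbf Q$ halved, each with multiplicity two.
\end{itemize}

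Next diagonalize $\bar{\mathbf Q}=\mathbf U\boldsymbol\Lambda\mathbf U^{\mathrm T}$. By rotation invariance, $f-q=\sum_i(\lambda_i\zeta_i^2+2c_i\zeta_i)$ with i.i.d.\ standard normals $\zeta_i$. A Chernoff bound on the lower tail uses the moment generating function of $-(\lambda_i\zeta_i^2+2c_i\zeta_i)$ together with $-\ln(1-z)-z\le z^2/(2(1-z))$. This gives the sub-gamma lower-tail bound
\begin{equation*}
\mathrm{Pr}\Bigl\{f(\mathbf x)\le \mathrm{Tr}\{\mathbf Q\}+q-\sqrt{2t}\,\sqrt{\|\mathbf Q\|_F^2+2\|\mathbf q\|^2}-t\,\lambda^{+}_{\max}(-\mathbf Q)\Bigr\}\le e^{-t}=\varepsilon .
\end{equation*}
Condition \eqref{eq:outge-BIT-b} says exactly that the threshold inside this probability is $\ge0$. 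Hence $\mathrm{Pr}\{f<0\}\le\varepsilon$, which is \eqref{eq:outge-BIT-a}.

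\emph{Main obstacle.} The delicate part is the constant bookkeeping in Step 2, namely obtaining precisely $\sqrt{2t}$, the factor $2$ in front of $\|\mathbf q\|^2$, and the coefficient $t$ on $\lambda^+_{\max}$. I would do this in two stages:
\begin{itemize}
\item Optimize the Chernoff parameter $s\in[0,1/(2\lambda^+_{\max}))$ in the sub-gamma form $\ln\mathbb E e^{-s(f-\mathbb Ef)}\le \frac{s^2 v}{1-2s\lambda^+_{\max}}$, with variance proxy $v=\sum_i(\lambda_i^2+2c_i^2)$ expressed through the real-to-complex identities above.
\item Invert the tail using the standard identity $\inf_s\{\cdot\}\Rightarrow$ deviation $\sqrt{2vt}+\lambda^+_{\max}t$ (up to the halving from the real model).
\end{itemize}
If the constants fail to close exactly, I would instead cite \cite{c15}, where this lemma is taken from.
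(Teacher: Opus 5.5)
Your proof is correct, and it goes further than the paper, which gives no argument for this lemma and simply imports it from \cite{c15}.

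\textbf{Reading of the statement.} Your reading is the right one. The arrows must run \eqref{eq:outge-BIT-c}$\Rightarrow$\eqref{eq:outge-BIT-b}$\Rightarrow$\eqref{eq:outge-BIT-a}. In fact the literal implication \eqref{eq:outge-BIT-a}$\Rightarrow$\eqref{eq:outge-BIT-b} is false. Take $\mathbf{Q}=\mathbf{0}$ and let $\Phi$ be the standard normal CDF. Then \eqref{eq:outge-BIT-a} reads $q\ge\sqrt{2}\|\mathbf{q}\|\,\Phi^{-1}(1-\varepsilon)$, while \eqref{eq:outge-BIT-b} reads $q\ge\sqrt{2}\|\mathbf{q}\|\sqrt{2\ln(1/\varepsilon)}$. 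Since $\Phi^{-1}(1-\varepsilon)<\sqrt{2\ln(1/\varepsilon)}$ for $\varepsilon\in(0,1)$, some $q$ satisfies the first but not the second. You also tacitly, and correctly, take $\lambda^{+}_{\max}(-\mathbf{Q})=\max\{\lambda_{\max}(-\mathbf{Q}),0\}$ (the paper's definition is circular) and $\varepsilon\in(0,1]$.

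\textbf{What your route adds.} Compared with the bare citation, your derivation checks the constants, and they close exactly, so you do not need the fallback to \cite{c15}. In the real model put $v=\|\bar{\mathbf{Q}}\|_F^2+2\|\bar{\mathbf{q}}\|^2=\tfrac12(\|\mathbf{Q}\|_F^2+2\|\mathbf{q}\|^2)$ and $b=\lambda^{+}_{\max}(-\bar{\mathbf{Q}})=\tfrac12\lambda^{+}_{\max}(-\mathbf{Q})$. The bound $s^2v/(1-2sb)$ is then sub-gamma with variance factor $2v$ and scale $2b$. The resulting deviation is $\sqrt{4vt}+2bt=\sqrt{2t}\sqrt{\|\mathbf{Q}\|_F^2+2\|\mathbf{q}\|^2}+t\lambda^{+}_{\max}(-\mathbf{Q})$, which is exactly the threshold in \eqref{eq:outge-BIT-b}.

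\textbf{One repair in the Chernoff step.} The inequality $-\ln(1-z)-z\le z^2/(2(1-z))$ holds only for $z\in[0,1)$. In your computation the argument is $z=-2s\lambda_i$, which is negative whenever $\lambda_i>0$, and there the inequality fails: at $z=-1$ the left side is $1-\ln 2>1/4$.
\begin{itemize}
\item For those terms, use $-\ln(1-z)-z\le z^2/2$, which is valid for $z\le 0$.
\item For the linear part, use $2s^2c_i^2/(1+2s\lambda_i)\le 2s^2c_i^2$.
\end{itemize}
Both bounds are dominated by the same expressions divided by $1-2sb\in(0,1]$. With this fix the sub-gamma bound you wrote down holds as stated.
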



In the next subsections, we begin by presenting a simplified method for robust beamforming in the PCU scenario with SCSIE. Next, we adapt it for the FCU case.

\subsection{Scenario 1: PCU}
By exploiting the BTI, the  OP constraint of the $k$th user in \eqref{eq:Pro7b} can be transformed as
\begin{align} \label{eq:partial-outage}
 & \mathrm{Pr}\left\{ \log _2\left( 1+\frac{\left| \left( \mathbf{h}_{\mathrm{D},k}^{\mathrm{H}}+\boldsymbol{\theta }^{\mathrm{H}}\mathbf{H}_k \right) \mathbf{w}_k \right|^2}{\left\| \left( \mathbf{h}_{\mathrm{D},k}^{\mathrm{H}}+\boldsymbol{\theta }^{\mathrm{H}}\mathbf{H}_k \right) \mathbf{W}_{-k} \right\| _{2}^{2}+\sigma _{k}^{2}} \right) \ge R_{\mathrm{th}} \right\} \nonumber
\\
& =\mathrm{Pr}\{ ( \mathbf{h}_{\mathrm{D},k}^{\mathrm{H}}+\boldsymbol{\theta }^{\mathrm{H}}\mathbf{H}_k ) \mathbf{\Xi }_k( \mathbf{h}_{\mathrm{D},k}+\mathbf{H}_{k}^{\mathrm{H}}\boldsymbol{\theta } ) -\sigma _{k}^{2}\ge 0 \},
\end{align}
where $\mathbf{\Xi }_k=\mathbf{w}_k\mathbf{w}_{k}^{\mathrm{H}}/( 2^{R_{\mathrm{th}}}-1 ) -\mathbf{W}_{-k}\mathbf{W}_{-k}^{\mathrm{H}}$.

For simplicity, we assume that $\mathbf{\Sigma }_{\mathrm{H},k}=\delta _{\mathrm{H},k}^{2}\mathbf{I}$, and rewrite the CBRUB error in \eqref{eq:statistical} as $\mathrm{vec}( \bigtriangleup \mathbf{H}_k ) =\delta _{\mathrm{H},k}\mathbf{i}_{\mathrm{H},k}$, where $\mathbf{i}_{\mathrm{H},k}\in \mathcal{C} \mathcal{N} (0,\mathbf{I})$. By defining $\mathbf{\Theta }=\boldsymbol{\theta \theta }^{\mathrm{H}}$ and using $\mathbf{H}_k=\widehat{\mathbf{H}}_k+\bigtriangleup \mathbf{H}_k$, the OP given in \eqref{eq:partial-outage} can be equivalently transformed to \eqref{eq:partial-outage-1}. 
\begin{figure*}
  \begin{align}  \label{eq:partial-outage-1}
    &\mathrm{Pr}\left\{ \left( \mathbf{h}_{\mathrm{D},k}^{\mathrm{H}}+\boldsymbol{\theta }^H\left( \widehat{\mathbf{H}}_k+\bigtriangleup \mathbf{H}_k \right) \right) \mathbf{\Xi }_k\left( \mathbf{h}_{\mathrm{D},k}+\left( \widehat{\mathbf{H}}_k+\bigtriangleup \mathbf{H}_k \right) ^{\mathrm{H}}\boldsymbol{\theta } \right) -\sigma _{k}^{2}\ge 0 \right\} \nonumber   \\
    =& \thinspace \mathrm{Pr}\left\{  {\boldsymbol{\theta }^H\bigtriangleup \mathbf{H}_k\mathbf{\Xi }_k\bigtriangleup \mathbf{H}_{k}^{\mathrm{H}}\boldsymbol{\theta }} +2\mathrm{Re}\left\{  {\boldsymbol{\theta }^H\bigtriangleup \mathbf{H}_k\mathbf{\Xi }_k\left( \mathbf{h}_{\mathrm{D},k}+\widehat{\mathbf{H}}_{k}^{\mathrm{H}}\boldsymbol{\theta } \right) } \right\}  + {{\left( \mathbf{h}_{\mathrm{D},k}^{\mathrm{H}}+\boldsymbol{\theta }^{\mathrm{H}}\widehat{\mathbf{H}}_k \right) \mathbf{\Xi }_k\left( \mathbf{h}_{\mathrm{D},k}+\widehat{\mathbf{H}}_{k}^{\mathrm{H}}\boldsymbol{\theta } \right) -\sigma _{k}^{2}}}\ge 0 \right\}. 
  \end{align}
  \hrule 
\end{figure*}

The first term of the right-hand side of \eqref{eq:partial-outage-1} can be reformulated as 
\begin{align} \label{eq:f2}
  &{\boldsymbol{\theta }^{\mathrm{H}}\bigtriangleup \mathbf{H}_k\mathbf{\Xi }_k\bigtriangleup \mathbf{H}_{k}^{\mathrm{H}}\boldsymbol{\theta }} \nonumber \\
  =& \thinspace \mathrm{Tr}\left( \bigtriangleup \mathbf{H}_k\mathbf{\Xi }_k\bigtriangleup \mathbf{H}_{k}^{\mathrm{H}}\boldsymbol{\theta \theta }^{\mathrm{H}} \right) \nonumber \\
   =& \thinspace \mathrm{Tr}\left( \bigtriangleup \mathbf{H}_{k}^{\mathrm{H}}\mathbf{\Theta }\bigtriangleup \mathbf{H}_k\mathbf{\Xi }_k \right) \nonumber   \\
    \overset{\left( \mathrm{a} \right)}{=} &  \thinspace \mathrm{vec}^{\mathrm{H}}\left( \bigtriangleup \mathbf{H}_k \right) \left( \mathbf{\Xi }_{k}^{\mathrm{T}}\otimes \mathbf{\Theta } \right) \mathrm{vec}\left( \bigtriangleup \mathbf{H}_k \right),
\end{align}
where $\left( \mathrm{a} \right)$ is obtained by invoking the identity $\mathrm{Tr(}\mathbf{A}^{\mathrm{H}}\mathbf{BCD})=\mathrm{vec}^{\mathrm{H}}(\mathbf{A})(\mathbf{D}^{\mathrm{T}}\otimes \mathbf{B})\mathrm{vec(}\mathbf{C})$ \cite{c16,c18}. The second term of the right-hand side of \eqref{eq:partial-outage-1}  can be reformulated as
\begin{align} \label{eq:f3}
  &{\boldsymbol{\theta }^{\mathrm{H}}\bigtriangleup \mathbf{H}_k\mathbf{\Xi }_k\left( \mathbf{h}_{\mathrm{D},k}+\widehat{\mathbf{H}}_{k}^{\mathrm{H}}\boldsymbol{\theta } \right) } \nonumber  \\
  =& \thinspace\mathrm{Tr}\left( \bigtriangleup \mathbf{H}_k\mathbf{\Xi }_k\left( \mathbf{h}_{\mathrm{D},k}\boldsymbol{\theta }^{\mathrm{H}}+\widehat{\mathbf{H}}_{k}^{\mathrm{H}}\boldsymbol{\theta \theta }^{\mathrm{H}} \right) \right)  \nonumber \\
  = &\thinspace\mathrm{Tr}\left( \bigtriangleup \mathbf{H}_k\mathbf{\Xi }_k\left( \mathbf{h}_{\mathrm{D},k}\boldsymbol{\theta }^{\mathrm{H}}+\widehat{\mathbf{H}}_{k}^{\mathrm{H}}\mathbf{\Theta } \right) \right) \nonumber \\
   \overset{\left( b \right)}{=} & \thinspace \mathrm{vec}^{\mathrm{H}}\left( \boldsymbol{\theta }\mathbf{h}_{\mathrm{D},k}^{\mathrm{H}}+\mathbf{\Theta }\widehat{\mathbf{H}}_k \right) \left( \mathbf{\Xi }_{k}^{\mathrm{T}}\otimes \mathbf{I}_N \right) \mathrm{vec}\left( \bigtriangleup \mathbf{H}_k \right) \nonumber \\
 = &\thinspace\mathrm{vec}^{\mathrm{H}}\left( \left( \boldsymbol{\theta }\mathbf{h}_{\mathrm{D},k}^{\mathrm{H}}+\mathbf{\Theta }\widehat{\mathbf{H}}_k \right) \mathbf{\Xi }_k \right) \mathrm{vec}\left( \bigtriangleup \mathbf{H}_k \right),
\end{align}
where $\left( \mathrm{b} \right)$ is obtained by invoking the identity $\mathrm{Tr(}\mathbf{ABC}^{\mathrm{H}})=\mathrm{vec}^{\mathrm{H}}(\mathbf{C})(\mathbf{B}^{\mathrm{T}}\otimes \mathbf{I})\mathrm{vec(}\mathbf{A})$ \cite{c16,c18}. According to the above derivation, \eqref{eq:partial-outage-1} can be rewritten as in \eqref{eq:partial-outage-2}.
\begin{figure*}
  \begin{align}  \label{eq:partial-outage-2}
    & \mathrm{Pr}\left\{ \left( \mathbf{h}_{\mathrm{D},k}^{\mathrm{H}}+\boldsymbol{\theta }^{\mathrm{H}}\left( \widehat{\mathbf{H}}_k+\bigtriangleup \mathbf{H}_k \right) \right) \mathbf{\Xi }_k\left( \mathbf{h}_{\mathrm{D},k}+\left( \widehat{\mathbf{H}}_k+\bigtriangleup \mathbf{H}_k \right) ^{\mathrm{H}}\boldsymbol{\theta } \right) -\sigma _{k}^{2}\ge 0 \right\} \nonumber \\
     =& \thinspace\mathrm{Pr}\left\{ \mathrm{vec}^{\mathrm{H}}\left( \bigtriangleup \mathbf{H}_k \right) \left( \mathbf{\Xi }_{k}^{\mathrm{T}}\otimes \mathbf{\Theta } \right) \mathrm{vec}\left( \bigtriangleup \mathbf{H}_k \right) +2\mathrm{Re}\left\{ \mathrm{vec}^{\mathrm{H}}\left( \left( \boldsymbol{\theta }\mathbf{h}_{\mathrm{D},k}^{\mathrm{H}}+\mathbf{\Theta }\widehat{\mathbf{H}}_k \right) \mathbf{\Xi }_k \right) \mathrm{vec}\left( \bigtriangleup \mathbf{H}_k \right) \right\} \right. \nonumber \\
    & \thinspace\thinspace\thinspace\thinspace\thinspace\thinspace\thinspace+\left. \left( \mathbf{h}_{\mathrm{D},k}^{\mathrm{H}}+\boldsymbol{\theta }^{\mathrm{H}}\widehat{\mathbf{H}}_k \right) \mathbf{\Xi }_k\left( \mathbf{h}_{\mathrm{D},k}+\widehat{\mathbf{H}}_{k}^{\mathrm{H}}\boldsymbol{\theta } \right) -\sigma _{k}^{2}\ge 0 \right\}  \nonumber  \\
    =& \thinspace \mathrm{Pr}\left\{ \delta _{\mathrm{H},k}^{2}\mathbf{i}_{\mathrm{H},k}^{\mathrm{H}}\left( \mathbf{\Xi }_{k}^{\mathrm{T}}\otimes \mathbf{\Theta } \right) \mathbf{i}_{\mathrm{H},k}+2\mathrm{Re}\left\{ \delta _{\mathrm{H},k}\mathrm{vec}^{\mathrm{H}}\left( \left( \boldsymbol{\theta }\mathbf{h}_{\mathrm{D},k}^{\mathrm{H}}+\mathbf{\Theta }\widehat{\mathbf{H}}_k \right) \mathbf{\Xi }_k \right) \mathbf{i}_{\mathrm{H},k} \right\} \right. \nonumber \\
    & \thinspace\thinspace\thinspace\thinspace\thinspace\thinspace\thinspace \left. +\left( \mathbf{h}_{\mathrm{D},k}^{\mathrm{H}}+\boldsymbol{\theta }^{\mathrm{H}}\widehat{\mathbf{H}}_k \right) \mathbf{\Xi }_k\left( \mathbf{h}_{\mathrm{D},k}+\widehat{\mathbf{H}}_{k}^{\mathrm{H}}\boldsymbol{\theta } \right) -\sigma _{k}^{2}\ge 0 \right\},
  \end{align}
  \hrule 
\end{figure*}

Thus, the OP constraints \eqref{eq:Pro7b} can be reformulated as
\begin{equation} \label{eq:partial-outage-3}
  \mathrm{Pr}\{ \mathbf{i}_{\mathrm{H},k}^{\mathrm{H}}\mathbf{Q}_k\mathbf{i}_{\mathrm{H},k}+2\mathfrak{R} \{ \mathbf{q}_{k}^{\mathrm{H}}\mathbf{i}_{\mathrm{H},k} \}+q_k\ge 0 \} \ge 1-\varepsilon _k,\forall k\in \mathcal{K},
\end{equation}
where
\begin{subequations}
\begin{align}
  &\mathbf{Q}_k=\delta _{\mathrm{H},k}^{2}( \mathbf{\Xi }_{k}^{\mathrm{T}}\otimes \mathbf{\Theta } , \label{eq:Q}  \\
  &\mathbf{q}_k=\delta _{\mathrm{H},k}\mathrm{vec}( ( \boldsymbol{\theta }\mathbf{h}_{\mathrm{D},k}^{\mathrm{H}}+\mathbf{\Theta }\widehat{\mathbf{H}}_k ) \mathbf{\Xi }_{k}^{\mathrm{H}} ), \label{eq:q1} \\
  &q_k=( \mathbf{h}_{\mathrm{D},k}^{\mathrm{H}}+\boldsymbol{\theta }^{\mathrm{H}}\widehat{\mathbf{H}}_k ) \mathbf{\Xi }_k( \mathbf{h}_{\mathrm{D},k}+\widehat{\mathbf{H}}_{k}^{\mathrm{H}}\boldsymbol{\theta }) -\sigma _{k}^{2}. \label{eq:q2}
\end{align}
\end{subequations}

Using Lemma \ref{BTI} and defining auxiliary variables $\mathbf{x}=\left[ x_1,\cdots ,x_K \right] ^{\mathrm{T}}$ and $\mathbf{y}=\left[ y_1,\cdots ,y_K \right] ^{\mathrm{T}}$, the OP of user $k$ in \eqref{eq:partial-outage-3} can be given in the deterministic form as
\begin{subequations} \label{eq:BTI}
  \begin{align}
    &\mathrm{Tr}\left\{ \mathbf{Q}_k \right\} -\sqrt{2\ln\mathrm{(}1/\varepsilon _k)}x_k+\ln\mathrm{(}\varepsilon _k)y_k+q_k\ge 0, \label{eq:BTI1}  \\
    &\sqrt{||\mathbf{Q}_k||_{F}^{2}+2||\mathbf{q}_k|||^2}\le x_k,  \\
    &y_k\mathbf{I}+\mathbf{Q}_k\succeq \mathbf{0},y_k\ge 0. 
  \end{align}
\end{subequations}

By utilizing  $\mathrm{Tr}(\mathbf{A}\otimes\mathbf{B})=\mathrm{Tr}(\mathbf{A})\mathrm{Tr}(\mathbf{B})$ and $\mathbf{AB}\otimes\mathbf{CD}=(\mathbf{A}\otimes\mathbf{C})(\mathbf{B}\otimes\mathbf{D})\mathrm{~and~}\mathbf{a}\otimes\mathbf{b}=\mathrm{vec}(\mathbf{ba}^{\mathrm{T}})$ \cite{c18}, \eqref{eq:BTI} can be simplified further through the following mathematical transformations:
\begin{subequations} \label{eq:BTI2}
  \begin{align}
    \mathrm{Tr}\{ \mathbf{Q}_k \} &=\delta _{\mathrm{H},k}^{2}\mathrm{Tr}\{ \mathbf{\Xi }_{k}^{\mathrm{T}}\otimes \mathbf{\Theta } \} =\delta _{\mathrm{H},k}^{2}\mathrm{Tr}\{ \mathbf{\Xi }_k \} \mathrm{Tr}\{ \mathbf{\Theta } \}  \nonumber \\
    &=\delta _{\mathrm{H},k}^{2}N\mathrm{Tr}\{ \mathbf{\Xi }_k \}, \label{eq:BTI2-1}  \\
    \| \mathbf{Q}_k \| _{F}^{2}&=\delta _{\mathrm{H},k}^{4}\| \mathbf{\Xi }_{k}^{\mathrm{T}}\otimes \mathbf{\Theta } \| _{F}^{2}=\delta _{\mathrm{H},k}^{4}\| \mathbf{\Xi }_k \| _{F}^{2}\| \mathbf{\Theta } \| _{F}^{2}
    \nonumber   \\
    & =\delta _{\mathrm{H},k}^{4}N^2\| \mathbf{\Xi }_k \| _{F}^{2}, \label{eq:BTI2-2} \\
    \| \mathbf{q}_k \| ^2&=\delta _{\mathrm{H},k}^{2}\| \mathrm{vec}( ( \boldsymbol{\theta }\mathbf{h}_{\mathrm{D},k}^{\mathrm{H}}+\mathbf{\Theta }\widehat{\mathbf{H}}_k ) \mathbf{\Xi }_{k}^{\mathrm{H}} ) \| ^2
    \nonumber  \\
    &=\delta _{\mathrm{H},k}^{2}N\| ( \mathbf{h}_{\mathrm{D},k}^{\mathrm{H}}+\boldsymbol{\theta }^{\mathrm{H}}\widehat{\mathbf{H}}_k ) \mathbf{\Xi }_k \| _{2}^{2}, \label{eq:BTI2-3}  \\
    \lambda ( \mathbf{Q}_k)&  =\lambda ( \delta _{\mathrm{H},k}^{2}( \mathbf{\Xi }_{k}^{\mathrm{T}}\otimes \mathbf{\Theta } ) ) =\delta _{\mathrm{H},k}^{2}\lambda ( \mathbf{\Xi }_{k}^{\mathrm{T}}\otimes \mathbf{\Theta } )
     \nonumber \\
    &=\delta _{\mathrm{H},k}^{2}\lambda ( \mathbf{\Xi }_k ) \lambda ( \mathbf{\Theta } ) =\delta _{\mathrm{H},k}^{2}N\lambda ( \mathbf{\Xi }_k ), \label{eq:BTI2-4}
  \end{align}
\end{subequations}
where $\lambda ( \mathbf{X} ) $ denotes the eigenvalues of $\mathbf{X}$.

Hence, leveraging Lemma \ref{BTI} and equation \eqref{eq:BTI2}, we can represent the approximation $\mathcal{P}_7$ as:
\begin{subequations}
  \begin{align}   \label{eq:Pro8}
    &  \mathcal{P}_8:
    \underset{\mathbf{W},\boldsymbol{\theta },\mathbf{x},\mathbf{y}}{\min}
    \left\| \mathbf{W} \right\| _{F}^{2}     \\
    & \mathrm{s}.\mathrm{t}. \thinspace\thinspace\thinspace\thinspace   \delta _{\mathrm{H},k}^{2}N\mathrm{Tr}\left\{ \mathbf{\Xi }_k \right\} -\sqrt{2\ln\mathrm{(}1/\varepsilon _k)}x_k  \nonumber    \\
    & \ \ \ \ \ -\ln\mathrm{(}1/\varepsilon _k)y_k   +q_k\ge 0,\forall k\in \mathcal{K}, \label{eq:Pro8b} \\
    & \ \ \ \ \ \left\| \begin{array}{c}
      \delta _{\mathrm{H},k}^{2}N\mathrm{vec}\left( \mathbf{\Xi }_k \right)\\
      \sqrt{2N}\delta _{\mathrm{H},k}\mathbf{\Xi }_k( \mathbf{h}_{\mathrm{D},k}+\widehat{\mathbf{H}}_{k}^{\mathrm{H}}\boldsymbol{\theta } )\\
    \end{array} \right\| \le x_{k,}\forall k\in \mathcal{K},  \label{eq:Pro8c} \\
    & \ \ \ \ \  y_k\mathbf{I}+\mathbf{Q}_k\succeq \mathbf{0},y_k\ge 0,\forall k\in \mathcal{K},  \label{eq:Pro8d} \\
    &  \ \ \ \ \ \left| \theta _n \right|^2=1,\forall n\in \mathcal{N}. \label{eq:Pro8e}
  \end{align}
\end{subequations}

This problem remains difficult to address since the constraints \eqref{eq:Pro8c} and  \eqref{eq:Pro8e} exhibit non-convexity, which involve coupled variables $\mathbf{W}$ and $\boldsymbol{\theta }$. To overcome this, we employ the AO approach to iteratively update $\mathbf{W}$ and $\boldsymbol{\theta }$.  In particular, when $\boldsymbol{\theta}$ is fixed, we relax the NC problem with respect to $\mathbf{W}$ the SDR technique \cite{c17}, solving it with CVX. Afterward, $\mathbf{W}$ is given, and the resulting problem that is NC for $\boldsymbol{\theta}$ is also addressed using the SDR technique.

For fixed $\boldsymbol{\theta}$, defining $\mathbf{\Xi }_k=\mathbf{\Psi }_k/( 2^{R_{\mathrm{th}}}-1 ) -\sum\nolimits_{i=1,i\ne k}^K{\mathbf{\Psi }_i}$, where $\mathbf{\Psi }_k=\mathbf{w}_k\mathbf{w}_{k}^{\mathrm{H}}$, the problem $\mathcal{P}_8$ associated with $\mathbf{W}$ is re-expressed as:
\begin{subequations}
  \begin{align}   \label{eq:Pro9}
    &  \mathcal{P}_9:
    \underset{\mathbf{\Psi },\mathbf{x},\mathbf{y}\,\,}{\min} \sum_{k=1}^K{\mathrm{Tr}\left\{ \mathbf{\Psi }_k \right\}}     \\
    & \mathrm{s}.\mathrm{t}. \thinspace\thinspace\thinspace\thinspace   \delta _{\mathrm{H},k}^{2}N\mathrm{Tr}\left\{ \mathbf{\Xi }_k \right\} -\sqrt{2\ln\mathrm{(}1/\varepsilon _k)}x_k
     \nonumber    \\
    & \ \ \ \ -\ln\mathrm{(}1/\varepsilon _k)y_k+q_k\ge 0,\forall k\in \mathcal{K}, \label{eq:Pro9b} \\
    & \ \ \ \  \left\| \begin{array}{c}
      \delta _{\mathrm{H},k}^{2}N\mathrm{vec}\left( \mathbf{\Xi }_k \right)\\
      \sqrt{2N}\delta _{\mathrm{H},k}\mathbf{\Xi }_k( \mathbf{h}_{\mathrm{D},k}+\widehat{\mathbf{H}}_{k}^{\mathrm{H}}\boldsymbol{\theta } )\\
    \end{array} \right\| \le x_{k,}\forall k\in \mathcal{K},  \label{eq:Pro9c} \\
    & \ \ \ \  y_k\mathbf{I}+\mathbf{Q}_k\succeq \mathbf{0},y_k\ge 0,\forall k\in \mathcal{K},  \label{eq:Pro9d} \\
    &  \ \ \ \  \left| \theta _n \right|^2=1,\forall n\in \mathcal{N}, \label{eq:Pro9e} \\
    & \ \ \ \  \mathbf{\Psi }_k\succeq \mathbf{0},\forall k\in \mathcal{K},  \label{eq:Pro9f} \\
    & \ \ \ \  \mathrm{rank}( \mathbf{\Psi }_k ) =1,\forall k\in \mathcal{K}. \label{eq:Pro9g}
  \end{align}
\end{subequations}
where $\mathbf{\Psi }=\left[ \mathbf{\Psi }_1,\dots ,\mathbf{\Psi }_K \right] $.  $\mathcal{P}_9$ is addressed by applying the SDR technique, where constraints \eqref{eq:Pro9f} are removed from the formulation. The resulting  problem that is convex SDP can be addressed efficiently via the CVX tools. The subsequent theorem shows the tightness of the SDR.
\begin{theorem} \cite{r16} \label{theorem-partical} If it is feasible for the relaxed form of $\mathcal{P}_9$, a solution that satisfies feasibility always exists, denoted as $\mathbf{\Psi }^{\star}=\left[ \mathbf{\Psi }_{1}^{\star},\dots ,\mathbf{\Psi }_{K}^{\star} \right] $, where $\mathrm{rank}\left( \mathbf{\Psi }_{k}^{\star} \right) =1,\forall k\in \mathcal{K}$.
\end{theorem}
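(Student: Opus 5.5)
The plan is to use the standard Lagrangian-duality/KKT argument for tightness of SDR in power-minimization problems, as in \cite{r16}. First I would put the relaxed $\mathcal{P}_9$ into a form where every constraint is affine or conic in $\{\mathbf{\Psi}_k\}$. With $\boldsymbol{\theta}$ fixed, $\mathbf{\Theta}$ is a constant, so $\mathbf{\Xi}_k$ is affine in $\mathbf{\Psi}$. Hence \eqref{eq:Pro9b} is affine and \eqref{eq:Pro9c} is a second-order cone constraint. Constraint \eqref{eq:Pro9d}, $y_k\mathbf{I}+\delta_{\mathrm{H},k}^2(\mathbf{\Xi}_k^{\mathrm{T}}\otimes\mathbf{\Theta})\succeq\mathbf{0}$, is an LMI, and \eqref{eq:Pro9e} drops out. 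The relaxed problem is therefore a convex SDP. I would assume feasibility together with a Slater point, which is obtained by scaling up a feasible $\mathbf{\Psi}_k$ along its own direction; strong duality then holds and the KKT conditions characterize an optimum $\{\mathbf{\Psi}_k^\star\}$.

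Next I would write the Lagrangian with the following multipliers: $\lambda_k\ge0$ for \eqref{eq:Pro9b}, a cone multiplier $(\mathbf{u}_k,t_k)$ for \eqref{eq:Pro9c}, $\mathbf{Y}_k\succeq\mathbf{0}$ for \eqref{eq:Pro9d}, and $\mathbf{Z}_k\succeq\mathbf{0}$ for $\mathbf{\Psi}_k\succeq\mathbf{0}$. Stationarity in $\mathbf{\Psi}_k$ gives an equation of the form $\mathbf{Z}_k=\mathbf{I}_M+\mathbf{B}_{-k}-\frac{1}{2^{R_{\mathrm{th}}}-1}\mathbf{C}_k$. Here $\mathbf{C}_k$ gathers the derivative terms of every constraint indexed $k$ through $\mathbf{\Xi}_k$, and $\mathbf{B}_{-k}$ gathers those of the constraints $i\neq k$, which enter with the opposite sign through $-\mathbf{\Psi}_k$ inside $\mathbf{\Xi}_i$. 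The aim is to show that $\mathbf{C}_k$ and $\mathbf{B}_{-k}$ are built from the same family of Hermitian matrices: the $\lambda_k$ term $\lambda_k(\delta_{\mathrm{H},k}^2N\mathbf{I}+\mathbf{g}_k\mathbf{g}_k^{\mathrm{H}})$ with $\mathbf{g}_k=\mathbf{h}_{\mathrm{D},k}+\widehat{\mathbf{H}}_k^{\mathrm{H}}\boldsymbol{\theta}$, plus the SOC and LMI adjoint terms.

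Then, as in the usual proof, I would show $\mathbf{Z}_k$ has rank at least $M-1$. Complementary slackness $\mathbf{Z}_k\mathbf{\Psi}_k^\star=\mathbf{0}$ would then force $\mathrm{rank}(\mathbf{\Psi}_k^\star)\le1$, and $\mathbf{\Psi}_k^\star\neq\mathbf{0}$ because \eqref{eq:Pro9b} with $\sigma_k^2>0$ cannot hold otherwise. The standard route is to define $\mathbf{D}_k=\mathbf{I}_M+\mathbf{B}_{-k}+(\text{PSD parts of }\mathbf{C}_k)$, argue $\mathbf{D}_k\succ\mathbf{0}$, and observe that $\mathbf{Z}_k=\mathbf{D}_k-(\text{rank-one term})$. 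That gives $\mathrm{rank}(\mathbf{Z}_k)\ge M-1$.

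I expect this rank step to be the main obstacle. Unlike the perfect-CSI case, $\mathbf{C}_k$ is not rank one: the trace term $\delta_{\mathrm{H},k}^2N\mathbf{I}$, the adjoint of the LMI $\mathrm{Tr}_{\Theta}(\mathbf{Y}_k)$, and the SOC term $\mathbf{u}$ all contribute full-rank pieces. If a direct rank-one decomposition fails, I would fall back on the constructive route of \cite{r16}. Take any relaxed optimum $\mathbf{\Psi}_k^\star$ with rank above one and build $\widetilde{\mathbf{\Psi}}_k=\mathbf{w}_k\mathbf{w}_k^{\mathrm{H}}$ with $\mathbf{w}_k=(\mathbf{g}_k^{\mathrm{H}}\mathbf{\Psi}_k^\star\mathbf{g}_k)^{-1/2}\mathbf{\Psi}_k^\star\mathbf{g}_k$. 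This preserves the quadratic form $\mathbf{g}_k^{\mathrm{H}}\widetilde{\mathbf{\Psi}}_k\mathbf{g}_k$ and satisfies $\widetilde{\mathbf{\Psi}}_k\preceq\mathbf{\Psi}_k^\star$ with no larger trace. I would then check the constraints term by term. Shrinking the interferers' $\mathbf{\Psi}_i$ only helps user $k$. For the user's own $\mathbf{\Psi}_k$, the trace and $\lambda_{\min}$ terms in \eqref{eq:Pro9b}--\eqref{eq:Pro9d} decrease, and this must be compensated, possibly after adjusting the slack variables $x_k,y_k$ and enlarging $\widetilde{\mathbf{\Psi}}_k$ along $\mathbf{w}_k$. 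This compensation is where the argument is most delicate.
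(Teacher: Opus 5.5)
\textbf{Verdict: genuine gap, and it cannot be closed as stated.} Your fallback is exactly the construction in the paper's Appendix~C. There, $\widetilde{\mathbf{\Psi}}_{k}^{\star}=\widehat{\mathbf{\Psi}}_{k}^{\star\frac{1}{2}}\mathbf{U}_k\widehat{\mathbf{\Psi}}_{k}^{\star\frac{1}{2}}$ is your $\mathbf{w}_k\mathbf{w}_k^{\mathrm{H}}$, with the paper's $\widehat{\mathbf{h}}_k$ playing the role of $\mathbf{g}_k$. The paper checks only two things: the total trace does not grow, and the single quadratic form $\widehat{\mathbf{h}}_k^{\mathrm{H}}\mathbf{\Xi}_k\widehat{\mathbf{h}}_k$ does not decrease (\eqref{eq:Pro9-check-1}--\eqref{eq:Pro9-check-3}). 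It then asserts \eqref{eq:Pro7b} and \eqref{eq:Pro9b}--\eqref{eq:Pro9e}.

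Your proposal stops exactly where the work lies, so neither route reaches the conclusion. The KKT route never obtains $\mathrm{rank}(\mathbf{Z}_k)\ge M-1$. The constructive route leaves the ``compensation'' for $\delta_{\mathrm{H},k}^{2}N\mathrm{Tr}\{\mathbf{\Xi}_k\}$, $x_k$ and $y_k$ undone. You are right that this step is not automatic, and the paper does not supply it:
the quadratic-form check controls only $q_k$;
a check at one channel vector (which the paper even writes with the random $\mathbf{H}_k$) says nothing about the probability in \eqref{eq:Pro7b};
and \eqref{eq:Pro7b} would not imply the BTI conditions anyway, since those are only sufficient.

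Your remark that shrinking the interferers ``only helps'' also relies on a monotonicity in $\mathbf{\Xi}_k$ that the BTI constraints lack. Let $a=\sqrt{2\ln(1/\varepsilon_k)}>1$ and take $\mathbf{\Xi}_k=t\mathbf{v}\mathbf{v}^{\mathrm{H}}$ with $\|\mathbf{v}\|=1$ and $\mathbf{v}\perp\mathbf{g}_k$. With the best $x_k,y_k$, the left side of \eqref{eq:Pro9b} equals $(1-a)\delta_{\mathrm{H},k}^{2}Nt-\sigma_k^2$, which decreases as $\mathbf{\Xi}_k$ grows in the PSD order.

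More seriously, the step cannot be closed in the stated generality. The SOC term in \eqref{eq:Pro9c} penalizes $\|\mathbf{\Xi}_k\|_F$, and a PSD matrix of given trace has the largest Frobenius norm when it has rank one. Here is a counterexample.
Take $K=1$ and write $\gamma=2^{R_{\mathrm{th}}}-1$, $\delta=\delta_{\mathrm{H},1}$, $\mathbf{g}=\mathbf{h}_{\mathrm{D},1}+\widehat{\mathbf{H}}_1^{\mathrm{H}}\boldsymbol{\theta}$ and $a=\sqrt{2\ln(1/\varepsilon_1)}$.
For $\mathbf{\Psi}_1=\mathbf{w}\mathbf{w}^{\mathrm{H}}$ we have $\mathbf{\Xi}_1\succeq\mathbf{0}$, so $y_1=0$ and the smallest admissible $x_1$ are optimal. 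Constraints \eqref{eq:Pro9b}--\eqref{eq:Pro9d} then reduce to $\delta^2Np+s-a\sqrt{\delta^4N^2p^2+2\delta^2Nps}\ge\gamma\sigma_1^2$, where $p=\|\mathbf{w}\|^2$ and $s=|\mathbf{g}^{\mathrm{H}}\mathbf{w}|^2\le\|\mathbf{g}\|^2p$.
The square root is at least $\delta^2Np$, so the left side is at most $p((1-a)\delta^2N+\|\mathbf{g}\|^2)$. This is $\le 0$ whenever $\|\mathbf{g}\|^2\le(a-1)\delta^2N$, so no rank-one point is feasible at any power.
Yet $\mathbf{\Psi}_1=t\mathbf{I}_M$ gives left side $t(\delta^2NM+\|\mathbf{g}\|^2-a\sqrt{\delta^4N^2M+2\delta^2N\|\mathbf{g}\|^2})$. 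The bracket is positive for every $\mathbf{g}$ once $M>a^2$ (e.g.\ $M\ge 6$ at $\varepsilon_1=0.05$), so the relaxed $\mathcal{P}_9$ is feasible for large $t$.

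Hence the theorem is false without an extra hypothesis, for instance one making $\|\mathbf{g}_k\|^2$ dominate $\delta_{\mathrm{H},k}^{2}N$. The ``delicate'' step you flagged is exactly where both your argument and the paper's break down.
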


\begin{proof}
  Kindly consult Appendix \ref{appc}.
\end{proof}

\begin{remark} \cite{r16} \label{remark-partical}
  The simulation analysis demonstrate that the optimal $\mathbf{\Psi }_{k}^{\star}$ generally has a rank of one before the rank-1 solution is constructed as described in Appendix \ref{appc}. The optimal $\mathbf{w}_k$ can be obtained from $\mathbf{\Psi }_{k}^{\star}$ through eigenvalue decomposition.
\end{remark}

For fixed $\mathbf{W}$, the subproblem involving $\boldsymbol{\theta}$ turns into a feasibility-checking problem. To refine the optimized solution for $\boldsymbol{\theta}$, as indicated in \cite{c11, c12}, we adjust the OP in \eqref{eq:partial-outage} by adding slack variable  $\widetilde{\boldsymbol{\alpha}}=\left[ \widetilde{\alpha}_1,\dots ,\widetilde{\alpha}_K \right] ^{\mathrm{T}}\ge 0$, which can be reformulated as:
\begin{equation*}
  \mathrm{Pr}\{ ( \mathbf{h}_{\mathrm{D},k}^{\mathrm{H}}+\boldsymbol{\theta }^{\mathrm{H}}\mathbf{H}_k ) \mathbf{\Xi }_k( \mathbf{h}_{\mathrm{D},k}+\mathbf{H}_{k}^{\mathrm{H}}\boldsymbol{\theta } ) -\sigma _{k}^{2}-\widetilde{{\alpha}}_{k}\ge 0 \}.
\end{equation*}
Then, \eqref{eq:q2} is also rewritten as
\begin{align} \label{eq:q2-new}
  q_{k}^{\boldsymbol{\theta }}&=( \mathbf{h}_{\mathrm{D},k}^{\mathrm{H}}+\boldsymbol{\theta }^{\mathrm{H}}\widehat{\mathbf{H}}_k ) \mathbf{\Xi }_k( \mathbf{h}_{\mathrm{D},k}+\widehat{\mathbf{H}}_{k}^{\mathrm{H}}\boldsymbol{\theta } ) -\sigma _{k}^{2}-\widetilde{{\alpha}} _k \nonumber \\
  &=\mathrm{Tr}\{ \mathbf{C}_k\widehat{\mathbf{\Theta}} \} +\mathbf{h}_{\mathrm{D},k}^{\mathrm{H}}\mathbf{\Xi }_k\mathbf{h}_{\mathrm{D},k}-\sigma _{k}^{2}-\widetilde{{\alpha}} _k,
\end{align}
where
\begin{equation*}
  \mathbf{C}_k=\left[ \begin{matrix}
    \widehat{\mathbf{H}}_k\mathbf{\Xi }_k\widehat{\mathbf{H}}_{k}^{\mathrm{H}}&		\widehat{\mathbf{H}}_k\mathbf{\Xi }_k\mathbf{h}_k\\
    \mathbf{h}_{k}^{\mathrm{H}}\mathbf{\Xi }_k\widehat{\mathbf{H}}_{k}^{\mathrm{H}}&		\mathbf{0}\\
  \end{matrix} \right],\widehat{\mathbf{\Theta}}=\widehat{\boldsymbol{\theta}}\widehat{\boldsymbol{\theta}}^{\mathrm{H}},\widehat{\boldsymbol{\theta}}=\left[ \begin{array}{c}
    \boldsymbol{\theta }\\
    1\\
  \end{array} \right].
\end{equation*}

With variable $\widehat{\mathbf{\Theta}}$, \eqref{eq:Pro8c} can be modified as follows
\begin{align} \label{eq:Pro8c-new}
  &\delta _{\mathrm{H},k}^{4}N^2||\mathbf{\Xi }_k||_{F}^{2}+2\delta _{\mathrm{H},k}^{2}N( \mathrm{Tr}\{ \mathbf{R}_{\mathrm{th}}\widehat{\mathbf{\Theta}} \} +\mathbf{h}_{\mathrm{D},k}^{\mathrm{H}}\mathbf{\Xi }_{k}^{\mathrm{H}}\mathbf{\Xi }_k\mathbf{h}_{\mathrm{D},k} ) \nonumber \\
  &\le 2\mathfrak{R} \{x_{k}^{(n)}x_k\}-x_{k}^{(n),2},\forall k\in \mathcal{K},
\end{align}
where $x_{k}^{(n)}$ denotes the optimal solution of the $n$-th iteration.

Moreover, constraints \eqref{eq:Pro8d} do not depend on $\boldsymbol{\theta}$ and are derived from $\lambda _{\max}^{+}( -\mathbf{Q} ) $ as presented in Lemma \ref{BTI}. As a result, we can express $y_k=\max ( \lambda _{\max}( -\delta _{\mathrm{H},k}^{2}N\mathbf{\Xi }_k ) ,0 ) ,\forall k\in \mathcal{K}$.

As a result, the reflection beamforming subproblem associated with $\boldsymbol{\theta}$ of $\mathcal{P}_8$ can be formulated as:
\begin{subequations}
  \begin{align}   \label{eq:Pro10}
    &\mathcal{P}_{10}:
    \underset{\widehat{\mathbf{\Theta}},\widetilde{\boldsymbol{\alpha}},\mathbf{x}}{\max} \sum_{k=1}^K{\widetilde{\alpha }}_k     \\
    &  \mathrm{s}.\mathrm{t}. \thinspace\thinspace\thinspace\thinspace  \delta _{\mathrm{H},k}^{2}N\mathrm{Tr}\left\{ \mathbf{\Xi }_k \right\} -\sqrt{2\ln\mathrm{(}1/\varepsilon _k)}x_k
     \nonumber  \\
   & \ \ \ \ \ -\ln\mathrm{(}1/\varepsilon _k)y_k+q_{k}^{\boldsymbol{\theta }}\ge 0,\forall k\in \mathcal{K}, \label{eq:Pro10b} \\
    & \ \ \ \ \ \ \eqref{eq:Pro8c-new}, \widetilde{\boldsymbol{\alpha}} \ge 0, \label{eq:Pro10c} \\
    &  \ \ \ \ \ \ \widehat{\mathbf{\Theta}}\succeq \mathbf{0},\mathrm{rank}( \widehat{\mathbf{\Theta}} ) =1,[ \widehat{\mathbf{\Theta}} ] _{k,k}=1,\forall k\in \mathcal{K}. \label{eq:Pro10d}
  \end{align}
\end{subequations}

By applying the SDR method, the convex SDP problem corresponding to the relaxed version of $\mathcal{P}_{10}$ is addressed via the CVX tools. The optimal $\boldsymbol{\theta }$ is  obtained from the optimal $\widehat{\mathbf{\Theta}}^{\star}$ through Gaussian randomization approaches \cite{c17}.

\vspace{-3mm}
\subsection{Scenario 2: FCU}
In this subsection, we broaden the OP constrained robust beamforming design from the PCU scenario to the FCU scenario. By taking into account the full SCSIE in \eqref{eq:statistical}, \eqref{eq:partial-outage} is subsequently rewritten as
\begin{align} \label{eq:full-outage}
  &\mathrm{Pr}\left\{ ( \widehat{\mathbf{h}}_{\mathrm{D},k}^{\mathrm{H}}+\boldsymbol{\theta }^{\mathrm{H}}\widehat{\mathbf{H}}_k ) \mathbf{\Xi }_k( \widehat{\mathbf{h}}_{\mathrm{D},k}+\widehat{\mathbf{H}}_{k}^{\mathrm{H}}\boldsymbol{\theta } ) \right.
  \nonumber      \\
  &+2\mathfrak{R} \{ ( \widehat{\mathbf{h}}_{\mathrm{D},k}^{\mathrm{H}}+\boldsymbol{\theta }^{\mathrm{H}}\widehat{\mathbf{H}}_k ) \mathbf{\Xi }_k( \bigtriangleup \mathbf{h}_{\mathrm{D},k}+\bigtriangleup \mathbf{H}_{k}^{\mathrm{H}}\boldsymbol{\theta } ) \} -\sigma _{k}^{2} \nonumber \\
  &+\left. ( \bigtriangleup \mathbf{H}_{k}^{\mathrm{H}}+\boldsymbol{\theta }^{\mathrm{H}}\bigtriangleup \mathbf{H}_k ) \mathbf{\Xi }_k\bigl( \bigtriangleup \mathbf{h}_{\mathrm{D},k}+\bigtriangleup \mathbf{H}_{k}^{\mathrm{H}}\boldsymbol{\theta } \bigr) \ge 0 \right\}.
\end{align}
Assuming $\mathbf{\Sigma }_{\mathrm{D},k}=\delta _{\mathrm{D},k}^{2}\mathbf{I}$, the DCSIB can be written as $\bigtriangleup \mathbf{h}_k=\delta _{\mathrm{D},k}\mathbf{i}_{\mathrm{D},k}$, where $\mathbf{i}_{\mathrm{D},k}\in \mathcal{C} \mathcal{N} (0,\mathbf{I})$. The second component in \eqref{eq:full-outage}  can be then reformulated as
\begin{align*}
  &2\mathfrak{R} \left\{ ( \widehat{\mathbf{h}}_{\mathrm{D},k}^{\mathrm{H}}+\boldsymbol{\theta }^{\mathrm{H}}\widehat{\mathbf{H}}_k ) \mathbf{\Xi }_k\bigtriangleup \mathbf{h}_{\mathrm{D},k} \right. \\
  & \ \ \ \ \left. +  \mathrm{vec}^{\mathrm{T}}( \boldsymbol{\theta }( \widehat{\mathbf{h}}_{\mathrm{D},k}^{\mathrm{H}}+\boldsymbol{\theta }^{\mathrm{H}}\widehat{\mathbf{H}}_k ) \mathbf{\Xi }_k ) \mathrm{vec}( \bigtriangleup \mathbf{H}_{k}^{*} ) \right\}
  \\
  =&\thinspace 2 \mathfrak{R} \left\{ \delta _{\mathrm{D},k}( \widehat{\mathbf{h}}_{\mathrm{D},k}^{\mathrm{H}}+\boldsymbol{\theta }^{\mathrm{H}}\widehat{\mathbf{H}}_k) \mathbf{\Xi }_k\mathbf{i}_{\mathrm{D},k} \right. \\
  & \ \ \ \ \left.
  +\delta _{\mathrm{H},k}\mathrm{vec}^{\mathrm{T}}( \boldsymbol{\theta }( \widehat{\mathbf{h}}_{\mathrm{D},k}^{\mathrm{H}} +\boldsymbol{\theta }^{\mathrm{H}}\widehat{\mathbf{H}}_k ) \mathbf{\Xi }_k ) \mathbf{i}_{\mathrm{H},k}^{*} \right\} \\
  =&\thinspace2\mathfrak{R}\{ \widetilde{\mathbf{q}}_{k}^{\mathrm{H}}\widetilde{\mathbf{i}}_k \},
\end{align*}
where $\widetilde{\mathbf{i}}_k=\left[ \begin{matrix}
	\mathbf{i}_{\mathrm{D},k}^{\mathrm{T}}&		\mathbf{i}_{\mathrm{H},k}^{\mathrm{T}}\\
\end{matrix} \right] ^{\mathrm{H}}$ and
\begin{equation*}
  \widetilde{\mathbf{q}}_k=\left[ \begin{array}{c}
    \delta _{\mathrm{D},k}\mathbf{\Xi }_k( \widehat{\mathbf{h}}_{\mathrm{D},k}+\widehat{\mathbf{H}}_{k}^{\mathrm{H}}\boldsymbol{\theta } )\\
    \delta _{\mathrm{H},k}\mathrm{vec}^*( \boldsymbol{\theta }( \widehat{\mathbf{h}}_{\mathrm{D},k}^{\mathrm{H}}+\boldsymbol{\theta }^{\mathrm{H}}\widehat{\mathbf{H}}_k ) \mathbf{\Xi }_k )\\
  \end{array} \right].
\end{equation*}
The fourth component on the left of \eqref{eq:full-outage} is reformulated as
\begin{align*}
  &\bigtriangleup \mathbf{H}_{k}^{\mathrm{H}}\mathbf{\Xi }_k\bigtriangleup \mathbf{h}_{\mathrm{D},k}+2\mathfrak{R} \{ \boldsymbol{\theta }^{\mathrm{H}}\bigtriangleup \mathbf{H}_k\mathbf{\Xi }_k\bigtriangleup \mathbf{h}_{\mathrm{D},k} \} \\
  &  +\boldsymbol{\theta }^{\mathrm{H}}\bigtriangleup \mathbf{H}_k\mathbf{\Xi }_k\bigtriangleup \mathbf{H}_{k}^{\mathrm{H}}\boldsymbol{\theta }
\\
=& \thinspace\delta _{\mathrm{D},k}^{2}\mathbf{i}_{\mathrm{D},k}^{\mathrm{H}}\mathbf{\Xi }_k\mathbf{i}_{\mathrm{D},k}+2\mathfrak{R} \{ \bigtriangleup \mathbf{h}_{\mathrm{D},k}^{\mathrm{H}}( \mathbf{\Xi }_k\otimes \boldsymbol{\theta }^{\mathrm{T}} ) \mathrm{vec}\left( \bigtriangleup \mathbf{H}_{k}^{*} \right) \}
\\
& +\mathrm{vec}^{\mathrm{T}}\left( \bigtriangleup \mathbf{H}_k \right) \left( \mathbf{\Xi }_k\otimes \mathbf{\Theta }^{\mathrm{T}} \right) \mathrm{vec}( \bigtriangleup \mathbf{H}_{k}^{*} )
\\
= &\thinspace\delta _{\mathrm{D},k}^{2}\mathbf{i}_{\mathrm{D},k}^{\mathrm{H}}\mathbf{\Xi }_k\mathbf{i}_{\mathrm{D},k}+2\mathfrak{R} \{ \delta _{\mathrm{D},k}\delta _{\mathrm{H},k}\mathbf{i}_{\mathrm{D},k}^{\mathrm{H}}( \mathbf{\Xi }_k\otimes \boldsymbol{\theta }^{\mathrm{T}} ) \mathbf{i}_{\mathrm{H},k}^{*} \}
\\
& +\delta _{\mathrm{H},k}^{2}\mathbf{i}_{\mathrm{H},k}^{\mathrm{T}}( \mathbf{\Xi }_k\otimes \boldsymbol{\theta }^{\mathrm{T}} ) \mathbf{i}_{\mathrm{H},k}^{*} =\widetilde{\mathbf{i}}_{k}^{\mathrm{H}}\widetilde{\mathbf{Q}}_k\widetilde{\mathbf{i}}_k,
\end{align*}
where
\begin{equation*}
  \widetilde{\mathbf{Q}}_k=\left[ \begin{matrix}
    \mathbf{\Sigma }_{\mathrm{D},k}^{1/2}\mathbf{\Xi }_k\mathbf{\Sigma }_{\mathrm{D},k}^{1/2}&		\delta _{\mathrm{D},k}\delta _{\mathrm{H},k}(\mathbf{\Xi }_k\otimes \boldsymbol{\theta }^{\mathrm{T}})\\
    \delta _{\mathrm{D},k}\delta _{\mathrm{H},k}(\mathbf{\Xi }_k\otimes \boldsymbol{\theta }^*)&		\delta _{\mathrm{H},k}^{2}(\mathbf{\Xi }_k\otimes \mathbf{\Theta }^{\mathrm{T}})\\
  \end{matrix} \right].
\end{equation*}

Let $\widetilde{q}_k=( \widehat{\mathbf{h}}_{\mathrm{D},k}^{\mathrm{H}}+\boldsymbol{\theta }^{\mathrm{H}}\widehat{\mathbf{H}}_k ) \mathbf{\Xi }_k( \widehat{\mathbf{h}}_{\mathrm{D},k}+\widehat{\mathbf{H}}_{k}^{\mathrm{H}}\boldsymbol{\theta } )-\sigma _{k}^{2}$, the  OP \eqref{eq:full-outage} is then equivalent to
\begin{equation} \label{eq:full-outage-new}
  \mathrm{Pr}\{ \widetilde{\mathbf{i}}_{k}^{\mathrm{H}}\widetilde{\mathbf{Q}}_k\widetilde{\mathbf{i}}_k+2\mathfrak{R} ( \widetilde{\mathbf{q}}_{k}^{\mathrm{H}}\widetilde{\mathbf{i}}_k ) +\widetilde{q}_k\ge 0 \} \ge 1-\varepsilon _k.
\end{equation}

By using Lemma \ref{BTI} with the introduction of auxiliary variables $\widetilde{\mathbf{x}}=\left[ \widetilde{x}_1,\cdots ,\widetilde{x}_K \right] ^{\mathrm{T}}$ and $\widetilde{\mathbf{y}}=\left[ \widetilde{y}_1,\cdots ,\widetilde{y}_K \right] ^{\mathrm{T}}$, the approximation of the OP constraint for the $k$ user in \eqref{eq:full-outage-new} is expressed as
\begin{subequations} \label{eq:BTI-full}
  \begin{align}
    &\mathrm{Tr}\{ \widetilde{\mathbf{Q}}_k \} -\sqrt{2\ln\mathrm{(}1/\varepsilon _k)}\widetilde{x}_k+\ln\mathrm{(}\varepsilon _k)\widetilde{y}_k+\widetilde{q}_k\ge 0,  \label{eq:BTI-full-1} \\
    &\sqrt{||\widetilde{\mathbf{Q}}_k||_{F}^{2}+2||\widetilde{\mathbf{q}}_k|||^2}\le \widetilde{x}_k, \label{eq:BTI-full-2}  \\
    &\widetilde{y}_k\mathbf{I}+\widetilde{\mathbf{Q}}_k\succeq \mathbf{0},\widetilde{y}_k\ge 0. \label{eq:BTI-full-3}
  \end{align}
\end{subequations}

We simplify some terms in \eqref{eq:BTI-full} as shown below:
\begin{subequations} \label{eq:BTI-full-new}
  \begin{align}
    &\mathrm{Tr}\{ \widetilde{\mathbf{Q}}_k \} =\mathrm{Tr}\left\{ \left[ \begin{array}{c}
      \delta _{\mathrm{D},k}\mathbf{\Xi }_{k}^{1/2}\\
      \delta _{\mathrm{H},k}(\mathbf{\Xi }_{k}^{1/2}\otimes \boldsymbol{\theta }^*)\\
    \end{array} \right] \right. \nonumber  \\
    & \ \ \ \ \ \ \ \ \ \ \ \   \left.  \bullet
    \left[ \begin{matrix}
      \delta _{\mathrm{D},k}\mathbf{\Xi }_{k}^{1/2}&		\delta _{\mathrm{H},k}(\mathbf{\Xi }_{k}^{1/2}\otimes \boldsymbol{\theta }^{\mathrm{T}})\\
    \end{matrix} \right] \right\}  \label {eq:BTI-full-1-new}
    \\
    & \| \widetilde{\mathbf{Q}}_k \| _{F}^{2}=( \delta _{\mathrm{D},k}^{2}+\delta _{\mathrm{H},k}^{2}N ) ^2\mathrm{Tr}\{ \mathbf{\Xi }_k \} \| \mathbf{\Xi }_k \| _{F}^{2},  \label{eq:BTI-full-2-new} \\
    &\| \widetilde{\mathbf{q}}_k \| ^2=( \delta _{\mathrm{D},k}^{2}+\delta _{\mathrm{H},k}^{2}N ) \|( \widehat{\mathbf{h}}_{\mathrm{D},k}^{\mathrm{H}}+\boldsymbol{\theta }^{\mathrm{H}}\widehat{\mathbf{H}}_k ) \mathbf{\Xi }_k | _{2}^{2}, \label{eq:BTI-full-3-new}  \\
    &\widetilde{y}_k\mathbf{I}+\widetilde{\mathbf{Q}}_k\succeq \mathbf{0}\Rightarrow \widetilde{y}_k\mathbf{I}+\left( \delta _{\mathrm{D},k}^{2}+\delta _{\mathrm{H},k}^{2}N \right) \mathbf{\Xi }_k\succeq \mathbf{0}. \label{eq:BTI-full-4-new}
  \end{align}
\end{subequations}

The derivation leading to \eqref{eq:BTI-full-new} mirror those of \eqref{eq:BTI}.

Based on the previous results, $\mathcal{P}_{7}$ involving imperfect DCSIB and CBRUB can be expressed as
\begin{subequations}
  \begin{align}   \label{eq:Pro11}
    & \mathcal{P}_{11}:
    \underset{\mathbf{W},\boldsymbol{\theta },\widetilde{\mathbf{x}},\widetilde{\mathbf{y}}}{\min} \left\| \mathbf{W} \right\| _{F}^{2}     \\
    & \ \ \  \mathrm{s}.\mathrm{t}. \thinspace\thinspace \left(\delta _{\mathrm{D},k}^{2} + \delta _{\mathrm{H},k}^{2}N \right) \mathrm{Tr}\{ \mathbf{\Xi }_k \} -\sqrt{2\ln\mathrm{(}1/\varepsilon _k)}\widetilde{x}_k
     \nonumber  \\
   & \ \ \ \ \ \ \ \  -\ln\mathrm{(}1/\varepsilon _k)\widetilde{y}_k+\widetilde{q}_{k}\ge 0,\forall k\in \mathcal{K}, \label{eq:Pro11b} \\
    &  \ \ \ \ \ \ \ \    \left\| \begin{array}{c}
      ( \delta _{\mathrm{D},k}^{2}+\delta _{\mathrm{H},k}^{2}N ) \mathrm{vec}( \mathbf{\Xi }_k )\\
      \sqrt{2( \delta _{\mathrm{D},k}^{2}+\delta _{\mathrm{H},k}^{2}N )}\mathbf{\Xi }_k( \widehat{\mathbf{h}}_{\mathrm{D},k}+\widehat{\mathbf{H}}_{k}^{\mathrm{H}}\boldsymbol{\theta } )\\
    \end{array} \right\|\le \widetilde{x}_{k}, \nonumber
    \\
    & \ \ \ \ \ \ \ \ \   \forall k\in \mathcal{K}, \label{eq:Pro11c} \\
    & \ \ \ \ \ \ \ \ \  \widetilde{y}_{k}\mathbf{I}+\left(\delta _{\mathrm{D},k}^{2} + \delta _{\mathrm{H},k}^{2}N \right)\mathbf{\Xi }_k\succeq0,\widetilde{y}_{k}\geq0,\forall k\in\mathcal{K}, \label{eq:Pro11d} \\
    & \ \ \ \ \ \ \ \ \ \left| \theta _n \right|^2=1,\forall n\in \mathcal{N}. \label{eq:Pro11e}
  \end{align}
\end{subequations}

By comparing $\mathcal{P}_{11}$ with $\mathcal{P}_{8}$, we find that the former can be derived from the latter by substituting $\delta _{\mathrm{H},k}^{2}N$ with $\delta _{\mathrm{D},k}^{2}+\delta _{\mathrm{H},k}^{2}N$ and $\mathbf{h}_{\mathrm{D},k}$ with $\widehat{\mathbf{h}}_{\mathrm{D},k}$. Thus, solving  $\mathcal{P}_{11}$ follows the same approach as solving  $\mathcal{P}_{8}$.

\section{Convergence and Computational Complexity Analysis} \label{complexity}
In this section, we analyze the convergence and computational complexity of the proposed algorithms.

\vspace{-2mm}
\subsection{Convergence}
In this work, we introduce the BCSIE and the SCSIE models and consider both PCU and FCU scenarios. In the PCU scenario, $\mathcal{P}_3$ and $\mathcal{P}_5$ are solved alternately for the BCSIE model, while $\mathcal{P}_9$ and $\mathcal{P}_{10}$ are solved alternately for the SCSIE model. Furthermore, in the FCU scenario, the solution methods for the two optimization problems $\mathcal{P}_6$ and $\mathcal{P}_{11}$, based on the BCSIE and the SCSIE models respectively, are similar to those used for the corresponding optimization problems in the PCU scenario. In the PCU scenario, each subproblem $\mathcal{P}_3$ and $\mathcal{P}_5$ under the bounded CSI error model, and each subproblem $\mathcal{P}_8$ and $\mathcal{P}_{10}$ under the statistical CSI error model converges to their individual suboptimal solutions. In the FCU scenario, the convergence of each subproblem is similar to that in the PCU scenario.

For the convergence analysis of the BCSIE and the SCSIE models in the PCU scenario, we have:
\begin{align}
  \operatorname{Power}_b\left(\mathbf{W}^{(l+1)},\boldsymbol{\theta}^{(l+1)}\right)&\overset{(a_1)}{\operatorname*{\leq}}\operatorname{Power}_b\left(\mathbf{W}^{(l)},\boldsymbol{\theta}^{(l+1)}\right) \nonumber \\ 
  &\overset{(b_1)}{\operatorname*{\leq}}\operatorname{Power}_b\left(\mathbf{W}^{(l)},\boldsymbol{\theta}^{(l)}\right), \\
  \operatorname{Power}_s\left(\mathbf{W}^{(l+1)},\boldsymbol{\theta}^{(l+1)}\right)&\overset{(c_1)}{\operatorname*{\leq}}\operatorname{Power}_s\left(\mathbf{W}^{(l)},\boldsymbol{\theta}^{(l+1)}\right) \nonumber \\ 
  &\overset{(d_1)}{\operatorname*{\leq}}\operatorname{Power}_s\left(\mathbf{W}^{(l)},\boldsymbol{\theta}^{(l)}\right),
\end{align}
where $l$ denotes the iteration index. $\operatorname{Power}_b$ and $\operatorname{Power}_s$ represent the transmit power under the BCSIE and the SCSIE models, respectively. Here, $(a_1)$ and $(b_1)$ follow since the updates of $\mathbf{W}$ and $\boldsymbol{\theta} $ minimize transmit power when the other variables are fixed in the BCSIE model. Then, $(c_1)$ and $(d_1)$  are due to the updates of $\mathbf{W}$ and $\boldsymbol{\theta} $ to minimize transmit power when the other variables are fixed in the SCSIE model. Furthermore, the fact that $\left\|\mathbf{w}_k\right\|^2 \ge 0 $ ensures that the objective function has a lower bound, both algorithms converge to suboptimal solutions of the original problems $\mathcal{P}_1$ and $\mathcal{P}_7$ after several iterations.

For the convergence analysis of the BCSIE and the SCSIE models in the FCU scenario is similar to the analysis in the PCU scenario.

\subsection{Computational Complexity Analysis}
In this subsection, we analyze the computational complexity associated with the robust transmission design algorithms.  Additionally, each problem incorporates LMIs, second-order cone (SOC) constraints, and linear constraints, which can be addressed via CVX tools. As mentioned in \cite{r16}, ignoring the complexity of linear constraints, we compare the computational complexity of various  approaches in terms of their worst-case runtime, with the general expression given as:


\[
\mathcal{O}\left((\sum_{j=1}^{J}b_{j}+2I)^{1/2}n(n^{2}+\underbrace{n\sum_{j=1}^{J}b_{j}^{2}+\sum_{j=1}^{J}b^{3}}_{{\rm {\mathsf{\mathrm{due\thinspace\thinspace to\thinspace\thinspace LMI}}}}}+\underbrace{n\sum_{i=1}^{I}a_{i}^{2}}_{{\rm {due\thinspace\thinspace to\thinspace\thinspace SOC}}})\right),
\]
where $n$ denotes the number of variables, $J$ represents the number of LMIs of dimension $b_j$, and $I$ indicates the number of SOC with dimension $a_i$. Using the earlier expression as a foundation, we outline the corresponding  computational complexity at each iteration of the algorithm for both types of CSI error models are given as follows:

1) \textit {The BCSIE model algorithm in the PCU scenario:} For the $\mathcal{P}_1$, the approximate complexity of the subproblem $\mathcal{P}_3$ is $o_{\mathbf{W}}=\mathrm{O}([K(NM+K+M+1)]^{1/2}n_1[n_1^2+n_1K((NM+1)^2+(K+M)^2)+K((NM+1)^3+(K+M)^3)])$, where $n_1=MK$, and that of the subproblem $\mathcal{P}_5$ is $o_{\boldsymbol{\theta}}=\mathcal{O}([K(NM+1+K)+2N]^{1/2}n_2[n_2^2+n_2K((NM+1)^2+K^2)+ K((NM+1)^3+K^3)+n_2N])$, where $n_2=N$. Therefore, the approximate complexity of PCU scenario in per iteration of $\mathcal{P}_1$ is $o_{\mathbf{W}} + o_{\boldsymbol{\theta}}$.

2) \textit {The BCSIE model algorithm in the FCU scenario:} For the $\mathcal{P}_6$, the approximate complexity of the beamforming matrix subproblem is $o_{\mathbf{W}}=\mathcal{O}([K(NM+3M+K+1)]^{1/2}n_{1}[n_{1}^{2}+n_{1}K((NM+M+1)^{2}+(K+2M)^{2})+K((NM+M+1)^{3}+(K+2M)^{2})])$ with $n_{1}=MK$, and that of the reflection beamforming optimization subproblem is $o_{\boldsymbol{\theta}}=\mathcal{O}([K(MN+1+K)+2M]^{1/2}n_2[n_2^2+n_2K((MN+1)^2+K^2)+ K((MN+1)^3+K^3)+n_2M])$, where $n_2=N$. Therefore, the approximate complexity of FCU scenario in per iteration of  $\mathcal{P}_6$ is $o_{\mathbf{W}} + o_{\boldsymbol{\theta}}$.

3) \textit {The SCSIE model algorithm in the PCU scenario:} For the  $\mathcal{P}_7$, in this case, the complexities of solving the two subproblems are  $o_{\mathbf{W}}=\mathcal{O}([2K(M+1)]^{1/2}n_{1}[n_{1}^{2}+2n_{1}KM^{2}+2KM^{3}+nKM^{2}(M+1)^{2}])$ and $o_{\boldsymbol{\theta}}=\mathcal{O}([4K+N]^{1/2}n_{2}[n_{2}^{2}+n_{2}(K(N^{2}+(M+1)^{2})+N)])$, respectively, where $n_{1}=MK$ and $n_{2}=N$. Hence, the overall complexity is $o_{\mathbf{W}} + o_{\boldsymbol{\theta}}$.

4) \textit {The SCSIE model algorithm  in the FCU scenario:} In the FCU scenario with statistic CSI error model, the per iteration complexity approximation is identical to that in the PCU scenario, since only certain coefficients vary.

\begin{table}
\caption{System parameters}
\label{table-parameters}
\centering
\begin{tabular}{|c|c|}
\hline
Speed the HST  & $v = 360$ km/h \tabularnewline
\hline
Carrier frequency & $f_c = 200$ MHz \tabularnewline
\hline
Path loss exponents of each link  & $\beta _{\{ {\mathrm{D},k};\mathbf{G}; {\mathrm{R},k}\}} =\{3.7,3,2\} $ \tabularnewline
\hline
Noise power  & \makecell{$  -174~ {\rm{dBm/Hz}}$ \\ $+10\log _{10}B+10 $ dB} \tabularnewline
\hline
System band  & $B = 200$ MHz \tabularnewline
\hline
Maximum outage probabilities  & $\varepsilon_{1}=...=\varepsilon_{K}=\varepsilon=0.05$ \tabularnewline
\hline
Rician K-factor  & $\kappa = 3$ dB \tabularnewline
\hline
\end{tabular}
\end{table}

\section{Numerical Result} \label{nr}
In this section, we offer a numerical assessment of the performance of the proposed algorithm. For small-scale fading, the variance of $\mathrm{vec}( \bigtriangleup \mathbf{H}_k )$ and $\bigtriangleup \mathbf{h}_{\mathrm{D},k}$ in SCSIE model is defined as  $\delta_{\mathrm{H},k}^2  = \omega _{\mathrm{H}}^2  ||\mathrm{vec}( \bigtriangleup \widehat{\mathbf{H}}_{k} )||_{2}^{2}$  and $\delta_{\mathrm{D},k}^2 =  \omega _{\mathrm{D}}^2 || \widehat{\mathbf{h}}_{\mathrm{D},k} ||_{2}^{2}$, respectively. $\omega_{\mathrm{H}} \in [0,1)$ and $\omega_{\mathrm{D}} \in[0,1)$ denote the associated  CSI uncertainty. For the BCSIE model, the extent of the uncertainty areas are given by $\xi_{\mathrm{H},k}=\sqrt{\frac{\delta_{\mathrm{H},k}^{2}}{2}F_{2NM}^{-1}(1-\varepsilon) }$ and $\xi_{\mathrm{h},k}=\sqrt{\frac{\delta_{\mathrm{D},k}^{2}}{2}F_{2M}^{-1}(1-\varepsilon) }$, where $F_{2NM}^{-1}(\cdot)$ and $F_{2M}^{-1}(\cdot)$ correspond to the inverse cumulative distribution function of Chi-square distribution with degrees of freedom $2NM$ and $2M$, respectively. As outlined in \cite{c15}, the BCSIE model described above facilitates a meaningful comparison between the performance of robust designs for worst-case and outage-constrained. Some simulation parameters are detailed in Table~\ref{table-parameters}.

The results shown in Fi{}g.~{\ref{fig:2-2}} highlight the effectiveness of the proposed scheme by illustrating the convergence rates of all evaluated methods in different scenarios. We compare the following schemes: 1) Ideal phase shift: the value of $\theta$ maximizes $| ( \mathbf{h}_{\mathrm{D},k}^{\mathrm{H}}+\boldsymbol{\theta }^{\mathrm{H}}\mathbf{H}_k ) \mathbf{w}_k |$ \cite{c12}, and the optimization of $\mathbf{w}_k$ using the proposed method; 2) Discrete phase shift: we consider the discrete phase shift scheme described in \cite{c20}, and the optimization of $\mathbf{w}_k$ using the proposed method. It is observed that all algorithms exhibit rapid convergence, with just $7$ iterations being sufficient. Additionally, it can also be observed from Fi{}g.~{\ref{fig:2-2}} that the ideal phase shift algorithm consumes less power than both the proposed algorithm and the discrete phase shift algorithm, and the proposed algorithm consumes less power than the discrete phase shift algorithm. Furthermore, the ideal phase shift algorithm converges faster than both the proposed algorithm and the discrete phase shift algorithm, while the proposed algorithm converges faster than the discrete phase shift algorithm. For the discrete phase shift algorithm, the speed of convergence improves the number of quantization bits.
\begin{figure}[!t]
  \centering
  \subfigure[In the PCU scenario, where $\omega_{\mathrm{H}} = 0.01$]{
  \begin{minipage}[t]{\linewidth}
  \centering
  \includegraphics[scale=0.515]{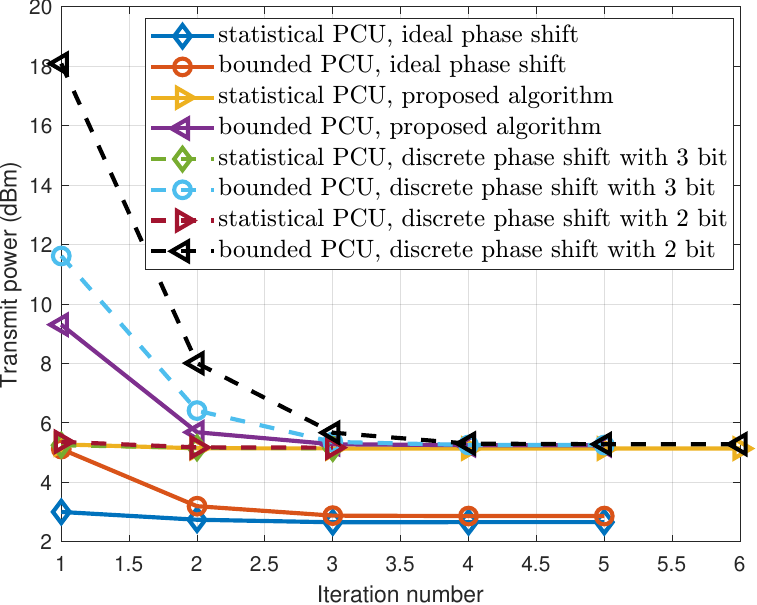}
  \label{fig:2-2a}
  \end{minipage}
  }
  \subfigure[In the FCU scenario, where $\{\omega_{\mathrm{H}},\omega_{\mathrm{D}}\}=\{0.01,0.02\}$]{
  \begin{minipage}[t]{\linewidth}
  \centering
  \includegraphics[scale=0.515]{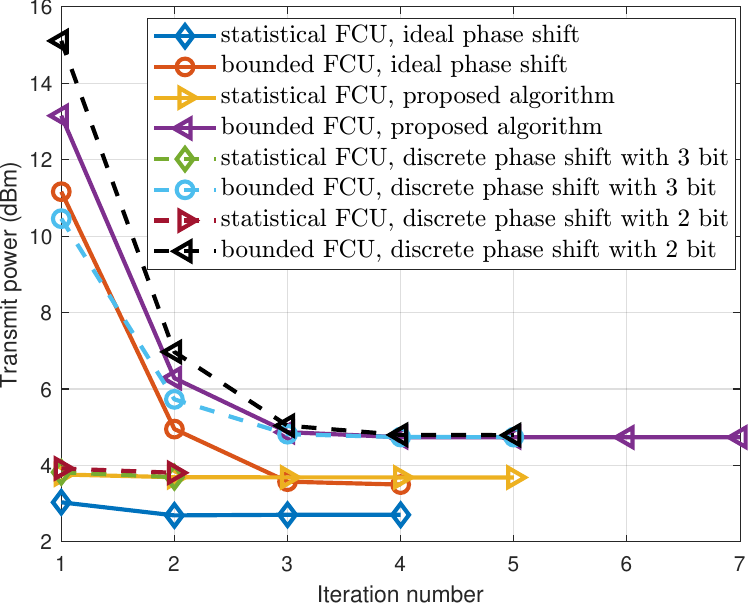}
  \label{fig:2-2b}
  \end{minipage}
  }
  \centering
  \caption{Transmit power versus the iteration numbers for each algorithm in different scenarios with $R=3$ bit/s/Hz, $K = 3$ and $M=N=3$.}
  \label{fig:2-2}
\end{figure}

The results shown in Fi{}g.~{\ref{fig:2}} highlight the effectiveness of the proposed scheme under both PCU and FCU scenarios, based on BCSIE and SCSIE models, and demonstrate the convergence speed of all evaluated methods. In this setup, $R=3$ bit/s/Hz, and $\{\omega_{\mathrm{H}},\omega_{\mathrm{D}}\}=\{0.01,0.02\}$. It is observed that all algorithms exhibit rapid convergence, with just $7$ iterations being sufficient. Additionally, the speed of convergence improves as the number of antennas increases. Furthermore, the algorithms based on the statistical error model converge more quickly than those with the use of BCSIE model.
\begin{figure}[!t]
  \centering
  {\includegraphics[scale=0.515]{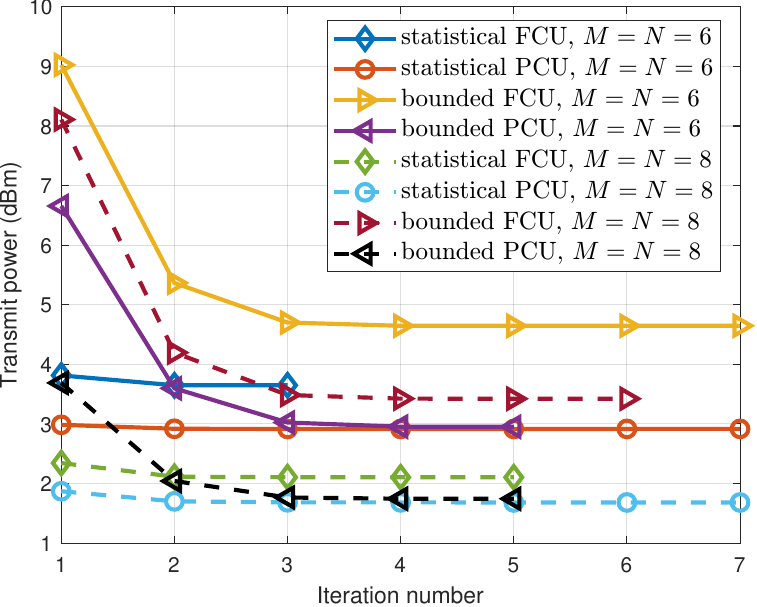}}
  \caption{Transmit power versus the iteration numbers of different algorithms, when $K=3$ and $\{\omega_{\mathrm{H}},\omega_{\mathrm{D}}\}=\{0.01,0.02\}$.}
  \label{fig:2}
\end{figure}

\begin{figure}[!t]
  \centering
  {\includegraphics[scale=0.515]{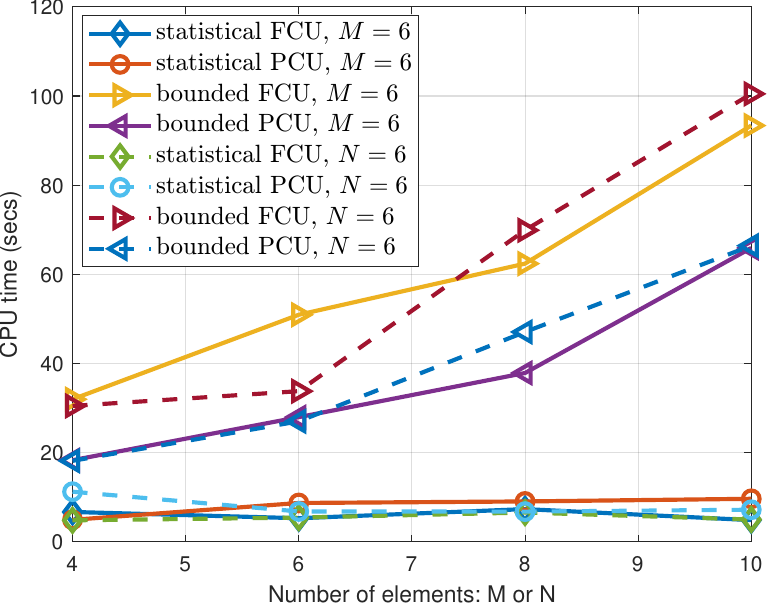}}
  \caption{Average CPU time versus the number of the BS antennas $M$ and the elements of the RIS $N$, when $K=2$ and $\{\omega_{\mathrm{H}},\omega_{\mathrm{D}}\}=\{0.01,0.02\}$.}
  \label{fig:3}
\end{figure}
Fi{}g.~{\ref{fig:3}} shows the comparison of the average (central processing unit) CPU running time for the proposed methods versus the number of the BS antennas $M$ and the  elements of the RIS $N$. The experiments were performed on a computer with a 3.3 GHz AMD Ryzen 9 5900HX CPU and 16 GB RAM. The parameters used are $K=2$, $R=2$ bit/s/Hz, and $\{\omega_{\mathrm{H}},\omega_{\mathrm{D}}\}=\{0.01,0.02\}$. Firstly, it is clear that the robust algorithms under the SCSIE model need much less CPU running time compared to the scenario with the BCSIE model. This is because the worst-case algorithms involve large-dimensional LMI, which increase the computational complexity. Secondly, the bounded FCU method consumes more CPU time than the bounded PCU method, as the DCSIB error $\bigtriangleup\mathbf{h}_{\mathbf{D},k}$ increases the LMIs dimension. Finally, when $N=6$, the CPU time for the OP constrained algorithm in both scenarios are nearly identical, since no additional complexity arises from considering the extra DCSIB error.

\begin{figure}[!t]
  \centering
  {\includegraphics[scale=0.515]{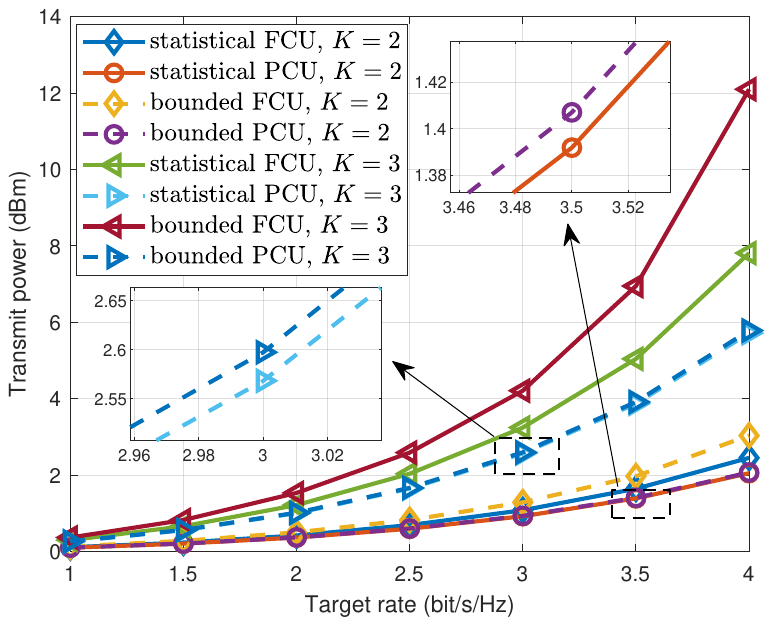}}
  \caption{Transmit power versus the target rate $R$ under $M=N=6$ and $\{\omega_{\mathrm{H}},\omega_{\mathrm{D}}\}=\{0.01,0.02\}$.}
  \label{fig:4}
\end{figure}
In Fi{}g.~{\ref{fig:4}}, we examine the impact of users' target rate requirements on the transmit power in RIS-assisted HST communication coverage enhancement under different CSI error models. The system parameters are specified as $K=\{2,3\}$, $N=M=6$, $\{\omega_{\mathrm{D}},\omega_{\mathrm{D}}\}=\{0.01,0.02\}$. It is observed that the transmit power grows as the target rate for both two scenarios and CSI error models. Furthermore, the transmit power required by the robust design for the worst-case algorithms exceeds that the robust design for the outage-constrained algorithms. This is because the worst-case optimization is a more conservative approach, necessitating higher transmit power to guarantee that each user's achievable rate meets the target rate under the worst-case CSI error realization.

\begin{figure}[!t]   
  \centering
  {\includegraphics[scale=0.515]{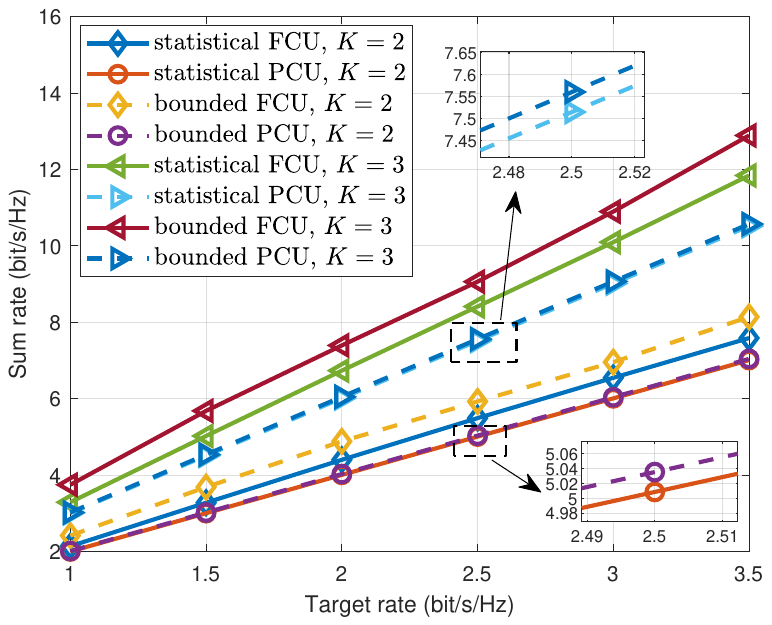}}
  \caption{Sum rate versus the target rate $R$ under $M=N=6$ and $\{\omega_{\mathrm{H}},\omega_{\mathrm{D}}\}=\{0.01,0.02\}$.}
  \label{fig:5}
\end{figure}
In Fig.~\ref{fig:5}, we show the sum rate of the RIS-aided HST communication coverage enhancement versus the target rate requirements of users under various CSI error models. Here, we set $N=M=6$, $K=\{2,3\}$, $\{\omega_{\mathrm{H}},\omega_{\mathrm{D}}\}=\{0.01,0.02\}$. It is observed that the sum rate increases with the target rate for both channel uncertainty scenarios and both CSI error.  In addition, it is also seen that the sum rate of the robust design algorithms with bounded CSI error model is larger than that of the robust design algorithms with statistical CSI error model. This is because of the transmit power of the worst-case robust design algorithms is larger than that of the outage constrained robust design algorithms from Fi{}g.~{\ref{fig:4}}, resulting the rate of each user in the worst-case is increased, thereby increasing the system sum rate.

In the following, we analyze the effect of CSI accuracy on system performance. We choose robust beamforming design algorithms subject to OP constraints, as the computational complexity of the worst-case conditions becomes prohibitively high with numerous antennas.

\begin{figure}[!t]
	\centering
	{\includegraphics[scale=0.515]{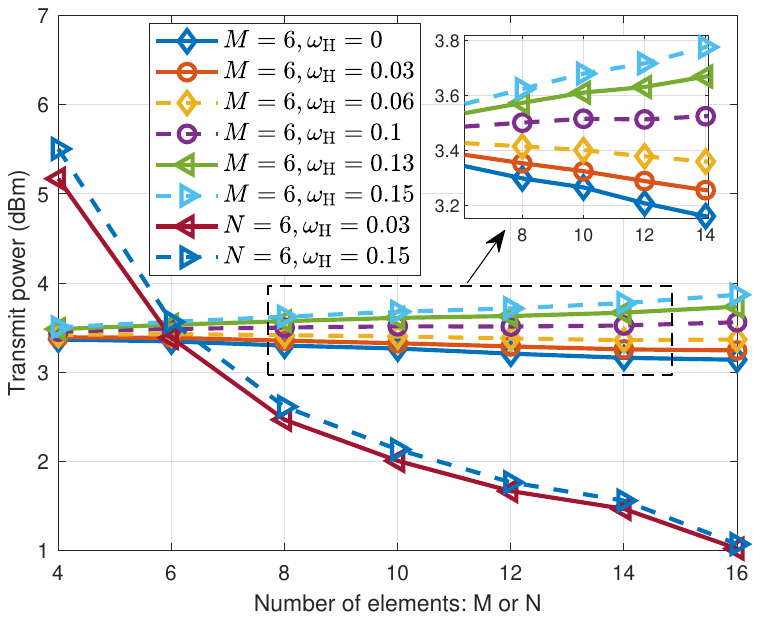}}
	\caption{Transmit power versus the number of the BS antennas $M$ and the elements of the RIS $N$ in the PCU scenario, when $K=3$.}
  \label{fig:6}
\end{figure}
Fi{}g.~{\ref{fig:6}} illustrates how the transmit power varies with $M$ or $N$ under the imperfect CBRUB, i.e., $\omega_{\mathrm{D}}=0$, and it is constructed using the channel realizations where feasible solutions are achieved at $M =16$ or $N =16$. The system parameters are set $K=3$ and $R=3$ bit/s/Hz. First, we analyze the scenario where the number of transmit antennas is fixed at $M = 6$.  In the case of $\omega_{\mathrm{H}} = 0$, representing the perfect CBRUB scenario, the transmit power decreases with reflection elements. For small values of $\omega_{\mathrm{H}}$, such as $\omega_{\mathrm{H}}=\{0.03,0.06\}$, the transmit power decreases as an increasing number of reflection elements, though it remains higher than in the perfect CBRUB case.  This is due to the need for the BS to boost the power to counteract the rate decline induced by the CBRUB error. Nevertheless, when $\omega_{\mathrm{H}}$ rises to $0.1$ or larger, the transmit power begins to increase as the number of RIS elements increase. This is because, although the additional reflection elements increase the beamforming gain and reduce power consumption, they also lead to larger channel estimation errors, requiring more power to compensate. As a result, when the CBRUB error becomes small, the advantages of increasing $N$, outweigh the shortcomings, and vice versa. Hence, choosing the right number of RIS elements and ensuring accurate CBRUB estimation are crucial for maximizing the RIS benefits.

In contrast, when the number of RIS elements remains fixed, the transmit power consumption reduces as the number of BS antennas increases, even in cases where the CBRUB error is high, such as $\omega_{\mathrm{H}}=0.15$. This is because a larger number of antennas provides more degrees of freedom, allowing for better optimization of the active beamforming vector at the BS, which helps mitigate the channel estimation errors.

\begin{figure}[!t]
  \centering
  \subfigure[$\omega_{\mathrm{H}} = 0.05$]{
  \begin{minipage}[t]{0.5\linewidth} %
  \centering
  \includegraphics[width=0.9\textwidth,height=0.7\textwidth]{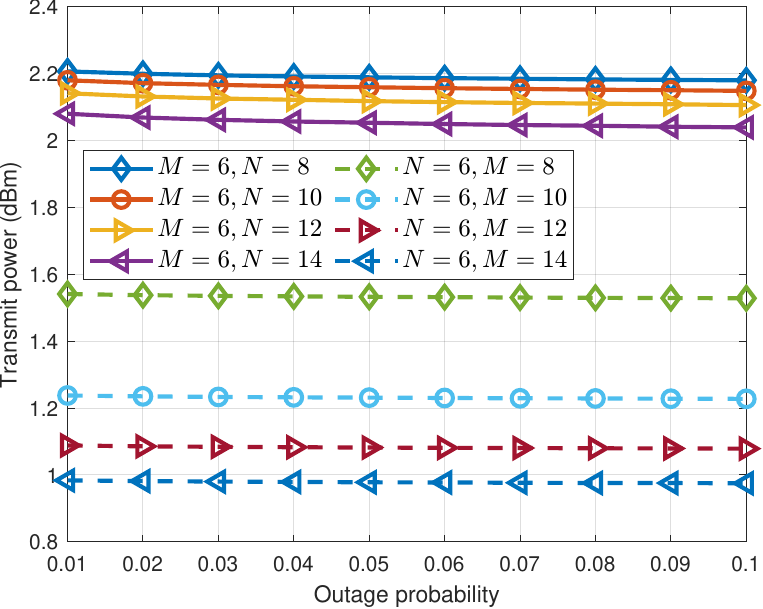}
  \label{fig:7a}
  \end{minipage}%
  }%
  \subfigure[$\omega_{\mathrm{H}} = 0.15$]{
  \begin{minipage}[t]{0.5\linewidth}
  \centering
  \includegraphics[width=0.9\textwidth,height=0.7\textwidth]{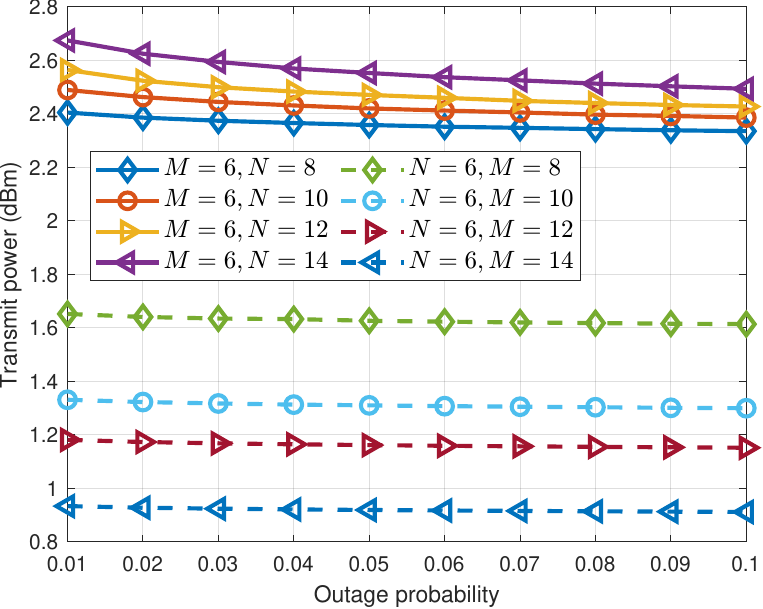}
  \label{fig:7b}
  \end{minipage}
  }
  \centering
  \caption{Transmit power versus the OP in the PCU scenario, when $K=3$.}
  \label{fig:7}
\end{figure}
Fi{}g.~{\ref{fig:7}} shows the transmit power versus the OP when only the CBIUT is imperfect. We assume $K = 3$ with $R = 3$ bit/s/Hz. As shown in Fi{}g.~{\ref{fig:7a}}, the transmit power decreases as the OP increases, which indicates that relaxing the OP criterion requires less transmit power. Additionally, comparing the impact of increasing the number of RIS reflection elements with increasing the number of antennas at the BS, the latter shows a more significant effect on reducing transmit power consumption. The results in Fig.~\ref{fig:7b} exhibit a similar trend to Fig.~\ref{fig:7a}, but with slight differences. Specifically, in Fig.~\ref{fig:7a}, when the number of RIS elements increases, the transmit power decreases for a smaller $\omega_{\mathrm{H}}$ value, e.g., $\omega_{\mathrm{H}} = 0.05$ and a fixed number of transmit antennas $M$, but increases when $\omega_{\mathrm{H}} = 0.15$. This trend is explained in Fig.~\ref{fig:6}.

\begin{figure}[!t]
  \centering
  {\includegraphics[scale=0.515]{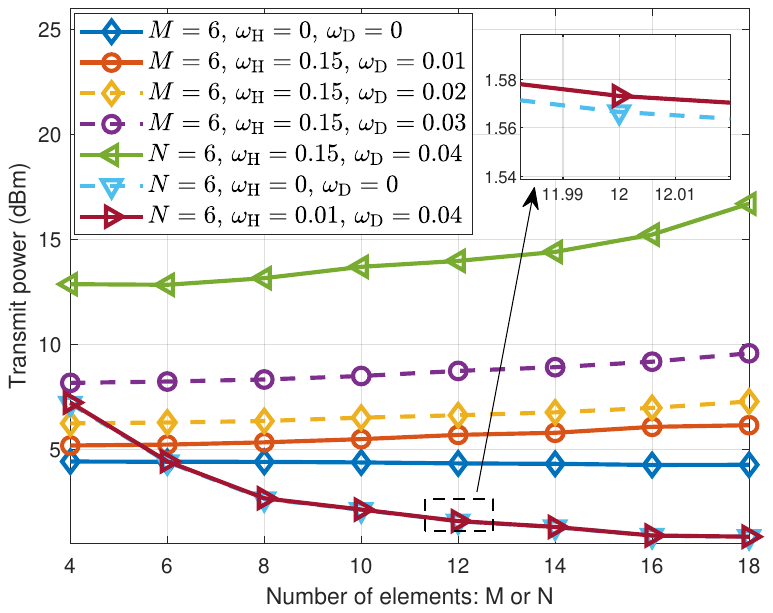}}
  \caption{Transmit power versus the number of the BS antennas $M$ and the elements of the RIS $N$ in the FCU scenario, when $K=3$.}
  \label{fig:8}
\end{figure}
Fi{}g.~{\ref{fig:8}} illustrates the transmit power versus $N$ or $M$ when both DCSIB and CBRUB are imperfect. The case where $\omega_{\mathrm{H}} = \omega_{\mathrm{D}} = 0 $ corresponds to the perfect DCSIB and  CBRUB case. Except for parameters $\omega_{\mathrm{H}}$ and $\omega_{\mathrm{D}}$, other simulation parameters match those in Fi{}g.~{\ref{fig:6}}. It is observed that increasing the number of BS  antennas result in a reduction in transmit power consumption, which remains unaffected by the DCSIB error $\omega_{\mathrm{D}}$, as seen in the curves for $N=6,\omega_{\mathrm{H}}=0.01,\omega_{\mathrm{D}}=\{0.01,0.02,0.3,0.04\}$.

\begin{figure}[!t]
	\centering
	{\includegraphics[scale=0.515]{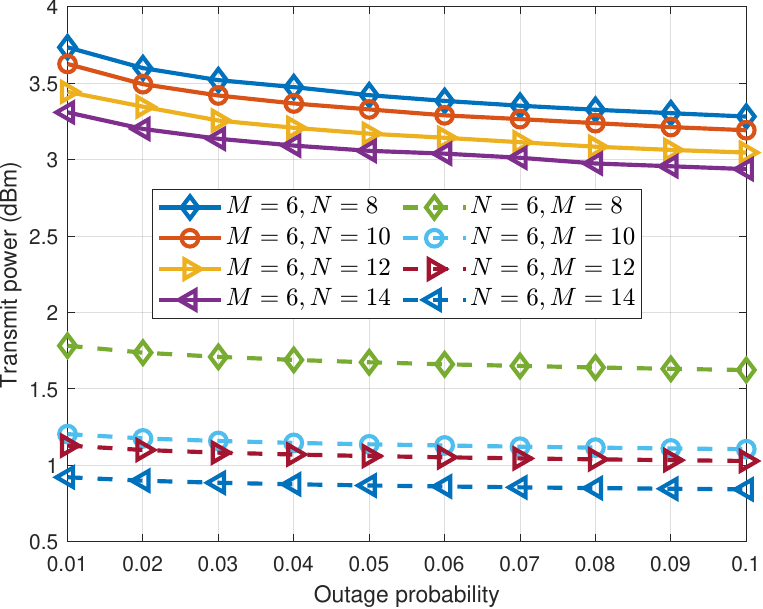}}
	\caption{Transmit power versus the OP in the FCU scenario, when $K=3$.}
  \label{fig:9}
\end{figure}
Fi{}g.~{\ref{fig:9}} presents the transmit power versus the OP when both DCSIB and CBRUB are imperfect. We set $K = 3$ and $R = 3$ bit/s/Hz, $\{\omega_{\mathrm{D}},\omega_{\mathrm{D}}\}=\{0.01,0.02\}$. The trend shown in this figure is consistent with that in Fig.~\ref{fig:7a}. However, the reduction in transmit power seen in Fig.~\ref{fig:9} is more pronounced than in Fig.~\ref{fig:7a}. This is due to the reason that the DCSIB error increases the overall channel estimation error, which in turn requires additional transmit power to mitigate its effects.

\section{Conclusions} \label{con}
In this study, we investigate robust beamforming and RIS phase shifts under imperfect CBRUB for RIS-aided MU-MISO HST communication coverage enhancement. We aim to minimize the BS's transmit power while adhering to rate constraints in the worst-case scenario under the BCSIE model and OP constraints in the SCSIE model, considering both PCU and FCU scenarios. To address CSI uncertainties in the BCSIE model, we employ the S-procedure to convert infinite inequality QoS constraints with unknown CSI errors into finite LMIs, and the penalty CCP is used to manage unit-modulus constraints. For the SCSIE model, CSI uncertainties are mitigated by applying the BTI. Additionally, unit-modulus constraints for RIS are transformed using the SDR technique. Simulation results indicate that the SCSIE model offers better performance, including  faster convergence, lower achievable, reduced complexity, and transmit power compared to the BCSIE model. Moreover, an increase in the number of RIS reflection elements can have an adverse effect on system performance when the CBRUB error is large. In our future work, we will carefully consider the impact of feedback delay, phase noise, and RIS hardware challenges on the performance of RIS-assisted HST communication systems, and propose low-complexity algorithms to further improve system performance. Furthermore, we will also study the performance analysis of RIS-assisted HST communication systems under non-stationary channel conditions.  When analyzing CSI errors, we will further comprehensively consider the impact of dynamics in HST communication on system performance.

\appendices

\section{The proof of Lemma \ref{le1}} \label{appa}
Let $x$ represent a complex scalar variable. The first-order Taylor inequality can be written as
\begin{equation} \label{eq:A1}
  | x |^2\ge 2\mathfrak{R} \{ x^{*,\left(n\right)}x \} -x^{*,(n)}x^{(n)},
\end{equation}
for any fixed point $x^{(n)}$. Substituting $x$ and $x^{\left(n\right)}$ into \eqref{eq:A1} with $( \mathbf{h}_{\mathrm{D},k}^{\mathrm{H}}+\boldsymbol{\theta }^{\mathrm{H}}\mathbf{H}_k ) \mathbf{w}_k$ and $( \mathbf{h}_{\mathrm{D},k}^{\mathrm{H}}+\boldsymbol{\theta }^{\left( n \right) ,\mathrm{H}}\mathbf{H}_k ) \mathbf{w}_{k}^{\left( n \right)}$, respectively, leads to the following expression:
\begin{align} \label{eq:A2}
&| ( \mathbf{h}_{\mathrm{D},k}^{\mathrm{H}}+\boldsymbol{\theta }^{\mathrm{H}}\mathbf{H}_k ) \mathbf{w}_k |^2  \nonumber \\  
&\ge 2\mathfrak{R} \{ ( \mathbf{h}_{\mathrm{D},k}^{\mathrm{H}}+\boldsymbol{\theta }^{\left(n\right),\mathrm{H}}\mathbf{H}_k ) \mathbf{w}_{k}^{\left(n\right)}\mathbf{w}_{k}^{\mathrm{H}}( \mathbf{h}_{\mathrm{D},k}^{\mathrm{H}}+\mathbf{H}_{k}^{\mathrm{H}}\boldsymbol{\theta } ) \}
\nonumber \\ 
 & \ \ \ -( \mathbf{h}_{\mathrm{D},k}^{\mathrm{H}}+\boldsymbol{\theta }^{\left(n\right),\mathrm{H}}\mathbf{H}_k ) \mathbf{w}_{k}^{\left(n\right)}\mathbf{w}_{k}^{\left(n\right),\mathrm{H}}( \mathbf{h}_{\mathrm{D},k}^{\mathrm{H}}+\mathbf{H}_{k}^{\mathrm{H}}\boldsymbol{\theta }^{\left(n\right)}).
\end{align}

By replacing $\mathbf{H}_k$ with $\widehat{\mathbf{H}}_k+\bigtriangleup $ on the right of \eqref{eq:A2} and developing it using established numerical identities, i.e., $\mathrm{Tr(}\mathbf{A}^{\mathrm{H}}\mathbf{B})=\mathrm{vec}^{\mathrm{H}}(\mathbf{A})\mathrm{vec(}\mathbf{B})$ and $\mathrm{Tr(}\mathbf{ABCD})=(\mathrm{vec}^\mathrm{T}(\mathbf{D}))^{\mathrm{T}}(\mathbf{C}^{\mathrm{T}}\otimes \mathbf{A})\mathrm{vec(}\mathbf{B})$ \cite{c18}, the expression in \eqref{eq:le1} can be derived. The proof has been concluded.

\section{The proof of Lemma \ref{lemma-full}} \label{appb}
The lower bound in \eqref{eq:lemma-full} can also be derived from \eqref{eq:A2} under the PCU. More specifically, by replacing  $\mathbf{h}_{\mathrm{D},k}=\widehat{\mathbf{h}}_{\mathrm{D},k}+\bigtriangleup \mathbf{h}_{\mathrm{D},k}$ and $\mathbf{H}_k=\widehat{\mathbf{H}}_k+\bigtriangleup \mathbf{H}_k$ into the first term on the right of \eqref{eq:A2}, we obtain \eqref{eq:B1}.
\begin{figure*}
  \begin{align} \label{eq:B1}
    & [ (\widehat{\mathbf{h}}_{\mathrm{D},k}^{\mathrm{H}}+\bigtriangleup \mathbf{h}_{\mathrm{D},k}^{\mathrm{H}})+\boldsymbol{\theta }^{(n),\mathrm{H}}(\widehat{\mathbf{H}}_k+\bigtriangleup \mathbf{H}_k) ] \mathbf{w}_{k}^{(n)}\mathbf{w}_{k}^{\mathrm{H}}[ (\widehat{\mathbf{h}}_{\mathrm{D},k}+\bigtriangleup \mathbf{h}_{\mathrm{D},k})+(\widehat{\mathbf{H}}_{k}^{\mathrm{H}}+\bigtriangleup \mathbf{H}_{k}^{\mathrm{H}})\boldsymbol{\theta } ] \nonumber
    \\
    = & \thinspace ( \widehat{\mathbf{h}}_{\mathrm{D},k}^{\mathrm{H}}+\boldsymbol{\theta }^{(n),\mathrm{H}}\widehat{\mathbf{H}}_k ) \mathbf{w}_{k}^{(n)}\mathbf{w}_{k}^{\mathrm{H}}( \widehat{\mathbf{h}}_{\mathrm{D},k}^{\mathrm{H}}+\widehat{\mathbf{H}}_{k}^{\mathrm{H}}\boldsymbol{\theta } ) +( \widehat{\mathbf{h}}_{\mathrm{D},k}^{\mathrm{H}}+\boldsymbol{\theta }^{(n),\mathrm{H}}\widehat{\mathbf{H}}_k ) \mathbf{w}_{k}^{(n)}\mathbf{w}_{k}^{\mathrm{H}}( \bigtriangleup \mathbf{h}_{\mathrm{D},k}+\bigtriangleup \mathbf{H}_{k}^{\mathrm{H}}\boldsymbol{\theta } ) \nonumber
    \\
    & +( \bigtriangleup \mathbf{h}_{\mathrm{D},k}^{\mathrm{H}}+\boldsymbol{\theta }^{(n),\mathrm{H}}\bigtriangleup \mathbf{H}_k ) \mathbf{w}_{k}^{(n)}\mathbf{w}_{k}^{\mathrm{H}}( \widehat{\mathbf{h}}_{\mathrm{D},k}+\widehat{\mathbf{H}}_{k}^{\mathrm{H}}\boldsymbol{\theta } )
   +( \bigtriangleup \mathbf{h}_{\mathrm{D},k}^{\mathrm{H}}+\boldsymbol{\theta }^{(n),\mathrm{H}}\bigtriangleup \mathbf{H}_k ) \mathbf{w}_{k}^{(n)}\mathbf{w}_{k}^{\mathrm{H}}( \bigtriangleup \mathbf{h}_{\mathrm{D},k}+\bigtriangleup \mathbf{H}_{k}^{\mathrm{H}}\boldsymbol{\theta } ) \nonumber
    \\
    =& \thinspace ( \widehat{\mathbf{h}}_{\mathrm{D},k}^{\mathrm{H}}+\boldsymbol{\theta }^{(n),\mathrm{H}}\widehat{\mathbf{H}}_k ) \mathbf{w}_{k}^{(n)}\mathbf{w}_{k}^{\mathrm{H}}( \widehat{\mathbf{h}}_{\mathrm{D},k}+\widehat{\mathbf{H}}_{k}^{\mathrm{H}}\boldsymbol{\theta } ) +( \widehat{\mathbf{h}}_{\mathrm{D},k}^{\mathrm{H}}+\boldsymbol{\theta }^{(n),\mathrm{H}}\widehat{\mathbf{H}}_k ) \mathbf{w}_{k}^{(n)}\mathbf{w}_{k}^{\mathrm{H}}\bigtriangleup \mathbf{h}_{\mathrm{D},k} \nonumber
    \\
    & +\mathrm{vec}^{\mathrm{H}}( \bigtriangleup \mathbf{H}_k ) \mathrm{vec}( \boldsymbol{\theta }( \widehat{\mathbf{h}}_{\mathrm{D},k}^{\mathrm{H}}+\boldsymbol{\theta }^{(n),\mathrm{H}}\widehat{\mathbf{H}}_k ) \mathbf{w}_{k}^{(n)}\mathbf{w}_{k}^{\mathrm{H}} ) +\bigtriangleup \mathbf{h}_{\mathrm{D},k}^{\mathrm{H}}\mathbf{w}_{k}^{(n)}\mathbf{w}_{k}^{\mathrm{H}}( \widehat{\mathbf{h}}_{\mathrm{D},k}+\widehat{\mathbf{H}}_{k}^{\mathrm{H}}\boldsymbol{\theta } )   +\bigtriangleup \mathbf{h}_{\mathrm{D},k}^{\mathrm{H}}\mathbf{w}_{k}^{(n)}\mathbf{w}_{k}^{\mathrm{H}}\bigtriangleup \mathbf{h}_{\mathrm{D},k} \nonumber   \\
    &  +\mathrm{vec}^{\mathrm{H}}( \boldsymbol{\theta }^{\left( n \right)}( \widehat{\mathbf{h}}_{\mathrm{D},k}+\boldsymbol{\theta }^{\mathrm{H}}\widehat{\mathbf{H}}_{k}^{\mathrm{H}} ) \mathbf{w}_{k}^{(n)}\mathbf{w}_{k}^{\mathrm{H}} ) \mathrm{vec}( \bigtriangleup \mathbf{H}_k )  +\mathrm{vec}^{\mathrm{H}}( \bigtriangleup \mathbf{H}_k ) ( \mathbf{w}_{k}^{*}\mathbf{w}_{k}^{(n),\mathrm{T}}\otimes \boldsymbol{\theta } ) \bigtriangleup \mathbf{h}_{\mathrm{D},k}^{*}    \nonumber    \\
    &  +\bigtriangleup \mathbf{h}_{\mathrm{D},k}^{\mathrm{T}}( \mathbf{w}_{k}^{*}\mathbf{w}_{k}^{(n),\mathrm{T}}\otimes \boldsymbol{\theta }^{(n),\mathrm{H}} ) \mathrm{vec}( \bigtriangleup \mathbf{H}_k ) +\mathrm{vec}^{\mathrm{H}}( \bigtriangleup \mathbf{H}_k ) ( \mathbf{w}_{k}^{*}\mathbf{w}_{k}^{(n),\mathrm{T}}\otimes \boldsymbol{\theta \theta }^{(n),\mathrm{H}} ) \mathrm{vec}( \bigtriangleup \mathbf{H}_k )    \nonumber    \\
    =& \thinspace \widetilde{\mathbf{i}}_{k}^{\mathrm{H}}\mathbf{J}_k\widetilde{\mathbf{i}}_k+\mathbf{j}_{1,k}^{\mathrm{H}}\widetilde{\mathbf{i}}_k+\widetilde{\mathbf{i}}_{k}^{\mathrm{H}}\mathbf{j}_{2,k}+j_k.
  \end{align}
  \hrule
\end{figure*}

Similarly, the two remaining components on the right of \eqref{eq:A2} under the assumption of PCU can be formulated as:
\begin{align}
  & ( \mathbf{h}_{\mathrm{D},k}^{\mathrm{H}}+\boldsymbol{\theta }^{\mathrm{H}}\mathbf{H}_k ) \mathbf{w}_k\mathbf{w}_{k}^{\left( n \right) ,\mathrm{H}}( \mathbf{h}_{\mathrm{D},k}+\mathbf{H}_{k}^{\mathrm{H}}\boldsymbol{\theta }^{\left( n \right)} ) \nonumber
  \\
  = & \thinspace\thinspace  \widetilde{\mathbf{i}}_{k}^{\mathrm{H}}\mathbf{J}_k\widetilde{\mathbf{i}}_k+\mathbf{j}_{2,k}^{\mathrm{H}}\widetilde{\mathbf{i}}_k+\widetilde{\mathbf{i}}_{k}^{\mathrm{H}}\mathbf{j}_{1,k}+j_{k}^{*} \nonumber
  \\
  & + ( \mathbf{h}_{\mathrm{D},k}^{\mathrm{H}}+\boldsymbol{\theta }^{\left( n \right) ,\mathrm{H}}\mathbf{H}_k ) \mathbf{w}_{k}^{\left( n \right)}\mathbf{w}_{k}^{\left( n \right) ,\mathrm{H}}( \mathbf{h}_{\mathrm{D},k}+\mathbf{H}_{k}^{\mathrm{H}}\boldsymbol{\theta }^{\left( n \right)} ) \nonumber
  \\
  = & \thinspace\thinspace  \widetilde{\mathbf{i}}_{k}^{\mathrm{H}}\mathbf{L}_k\widetilde{\mathbf{i}}_k+\mathbf{l}_{k}^{\mathrm{H}}\widetilde{\mathbf{i}}_k+\widetilde{\mathbf{i}}_{k}^{\mathrm{H}}\mathbf{l}_k+l_k.
\end{align}

Thus, the proof has been concluded.

\section{The proof of Theorem \ref{theorem-partical}} \label{appc}
Let $\widehat{\mathbf{\Psi}}^{\star}=[ \widehat{\mathbf{\Psi}}_{1}^{\star},\dots ,\widehat{\mathbf{\Psi}}_{K}^{\star} ] $ represent the optimal solution for the relaxed version of  $\mathcal{P}_9$, and specify the projection matrices as $\mathbf{U}_k=\widehat{\mathbf{\Psi}}_{k}^{\star \frac{1}{2}}\widehat{\mathbf{h}}_k\widehat{\mathbf{h}}_{k}^{\mathrm{H}}\widehat{\mathbf{\Psi}}_{k}^{\star \frac{1}{2}}/||\widehat{\mathbf{\Psi}}_{k}^{\star \frac{1}{2}}\widehat{\mathbf{h}}_k||^2, \forall k \in \mathcal{K}$, where $\widehat{\mathbf{h}}_k=\mathbf{h}_{\mathrm{D},k}+\mathbf{H}_{k}^{\mathrm{H}}\boldsymbol{\theta }$. We construct a rank-one solution $\widetilde{\mathbf{\Psi}}^{\star}=[ \widetilde{\mathbf{\Psi}}_{1}^{\star},\dots ,\widetilde{\mathbf{\Psi}}_{K}^{\star} ]$, where each sub-matrix is defined as
\begin{equation} \label{eq:rank1-1}
  \widetilde{\mathbf{\Psi}}_{k}^{\star}=\widehat{\mathbf{\Psi}}_{k}^{\star \frac{1}{2}}\mathbf{U}_k\widehat{\mathbf{\Psi}}_{k}^{\star \frac{1}{2}}.
\end{equation}

Initially, we evaluate the objective value of  $\mathcal{P}_9$ using the solution $\widetilde{\mathbf{\Psi}}_{k}^{\star}$:
\begin{equation} \label{eq:Pro9-check}
  \sum_{k=1}^K{\mathrm{Tr}\{ \widetilde{\mathbf{\Psi}}_{k}^{\star} -\widehat{\mathbf{\Psi}}_{k}^{\star} \}}=\sum_{k=1}^K{\mathrm{Tr}\{ \widehat{\mathbf{\Psi}}_{k}^{\star \frac{1}{2}}( \mathbf{U}_k-\mathbf{I} ) \widehat{\mathbf{\Psi}}_{k}^{\star \frac{1}{2}} \}}  \le 0,
\end{equation}
which indicates that the objective value obtained using  $\widetilde{\mathbf{\Psi}}_{k}^{\star}$ is smaller than the value derived from the optimal solution $\widehat{\mathbf{\Psi}}^{\star}$.

Then, given that directly confirming whether the constructed solution meets the constraints in \eqref{eq:Pro9b}-\eqref{eq:Pro9e} is computationally impractical, we now focus on the constraint \eqref{eq:Pro7b} in the original formulation of  $\mathcal{P}_7$.  From \eqref{eq:partial-outage}, we can obtain
\begin{align}  
  &\frac{\widehat{\mathbf{h}}_{k}^{\mathrm{H}}\widetilde{\mathbf{\Psi}}_{k}^{\star}\widehat{\mathbf{h}}_k}{(2^{R_{\mathrm{th}}}-1)}=\frac{|\widehat{\mathbf{h}}_{k}^{\mathrm{H}}\widehat{\mathbf{\Psi}}_{k}^{\star}\widehat{\mathbf{h}}_k|^2}{||\widehat{\mathbf{\Psi}}_{k}^{\star \frac{1}{2}}\widehat{\mathbf{h}}_k||^2(2^{R_{\mathrm{th}}}-1)}=\frac{\widehat{\mathbf{h}}_{k}^{\mathrm{H}}\widehat{\mathbf{\Psi}}_{k}^{\star}\widehat{\mathbf{h}}_k}{(2^{R_{\mathrm{th}}}-1)},  \label{eq:Pro9-check-1}  \\ 
  &\widehat{\mathbf{h}}_{k}^{\mathrm{H}}\widetilde{\mathbf{\Psi}}_{i}^{\star}\widehat{\mathbf{h}}_k =\widehat{\mathbf{h}}_{i}^{\mathrm{H}}\widehat{\mathbf{\Psi}}_{i}^{\star \frac{1}{2}}\frac{\widehat{\mathbf{\Psi}}_{i}^{\star \frac{1}{2}}\widehat{\mathbf{h}}_k\widehat{\mathbf{h}}_{k}^{\mathrm{H}}\widehat{\mathbf{\Psi}}_{i}^{\star \frac{1}{2}}}{||\widehat{\mathbf{\Psi}}_{i}^{\star \frac{1}{2}}\widehat{\mathbf{h}}_i||^2}\widehat{\mathbf{\Psi}}_{i}^{\star \frac{1}{2}}\widehat{\mathbf{h}}_i \nonumber \\
  & \ \ \ \ \ \ \ \  \ \ \   \le \lambda _{\max}( \widehat{\mathbf{\Psi}}_{i}^{\star \frac{1}{2}}\widehat{\mathbf{h}}_k\widehat{\mathbf{h}}_{k}^{\mathrm{H}}\widehat{\mathbf{\Psi}}_{i}^{\star \frac{1}{2}} ) =\widehat{\mathbf{h}}_{k}^{\mathrm{H}}\widehat{\mathbf{\Psi}}_{i}^{\star}\widehat{\mathbf{h}}_k.  \label{eq:Pro9-check-2}
\end{align}

By combining \eqref{eq:Pro9-check-1} and  \eqref{eq:Pro9-check-2}, we have
\begin{align}   \label{eq:Pro9-check-3}
  &\widehat{\mathbf{h}}_{k}^{\mathrm{H}}[\widetilde{\mathbf{\Psi}}_{k}^{\star}/(2^{R_{\mathrm{th}}}-1)-\sum_{i\ne k}^K{\widetilde{\mathbf{\Psi}}_{i}^{\star}]\widehat{\mathbf{h}}_k}
    \nonumber   \\
  & \ge \widehat{\mathbf{h}}_{k}^{\mathrm{H}}[\widehat{\mathbf{\Psi}}_{k}^{\star}/(2^{R_{\mathrm{th}}}-1)-\sum_{i\ne k}^K{\widehat{\mathbf{\Psi}}_{i}^{\star}]\widehat{\mathbf{h}}_k},
\end{align}
which indicates that the constructed solution $\widetilde{\mathbf{\Psi}}^{\star}$ satisfies the condition set by constraint \eqref{eq:Pro7b}, and also meets the constraints \eqref{eq:Pro9b}-\eqref{eq:Pro9e}.

With \eqref{eq:Pro9-check} and \eqref{eq:Pro9-check-3}, it follows that $\widetilde{\mathbf{\Psi}}^{\star}$ is also a feasibility solution for the relaxed version of  $\mathcal{P}_9$, having a rank of one. The proof has been concluded.

\end{document}